\newtheorem{theorem}{Theorem}
\newtheorem{lemma}{Lemma}
\newtheorem{proposition}{Proposition}
\newtheorem{remark}{Remark}
\DeclareMathAlphabet{\mathsfsl}{OT1}{cmss}{m}{sl}
\numberwithin{equation}{section}
\newcommand{\D}{\mathrm{d}}
\newcommand{\tr}{\mathrm{tr}}
\renewcommand{\L}{\mathbb{L}}
\renewcommand{\H}{\mathbb{H}}
\def\alphab{\underline{\alpha}}
\def\betab{\underline{\beta}}
\def\chib{\underline{\chi}}
\def\chibh{\widehat{\underline{\chi}}}
\def\chih{\widehat{\chi}}
\def\etab{\underline{\eta}}
\def\Lb{\underline{L}}
\def\mub{\underline{\mu}}
\def\tr{\mathrm{tr}}
\def\omegab{\underline{\omega}}
\def\tensor{\widehat{\otimes}}
\def\ub{\underline{u}}
\def\Cb{\underline{C}}
\def\Lb{\underline{L}}
\newcommand{\Db}{\underline{D}}
\newcommand{\Dh}{\widehat{D}}
\newcommand{\Dbh}{\widehat{\underline{D}}}
\def\nablas{\mbox{$\nabla \mkern -13mu /$ }}
\def\Deltas{\mbox{$\Delta \mkern -13mu /$ }}
\def\divs{\mbox{$\mathrm{div} \mkern -13mu /$ }}
\def\curls{\mbox{$\mathrm{curl} \mkern -13mu /$ }}
\def\ds{\mbox{$\mathrm{d} \mkern -9mu /$}}
\def\gs{\mbox{$g \mkern -9mu /$}}
\def\epsilons{\mbox{$\epsilon \mkern -9mu /$}}
\begin{document}

\title[local existence]{Local existence in retarded time under a weak decay on complete null cones}

\author[Junbin Li]{Junbin Li}
\address{Department of Mathematics, Sun Yat-sen University\\ Guangzhou, China}
\email{lijunbin@mail.sysu.edu.cn}

\author[Xi-Ping Zhu]{Xi-Ping Zhu}
\address{Department of Mathematics, Sun Yat-sen University\\ Guangzhou, China}
\email{stszxp@mail.sysu.edu.cn}

\thanks{The authors are partially supported by NSFC 11271377. The first author is also partially supported by the Fundamental Research Funds for the Central Universities.}

\date{}

\maketitle

\begin{abstract}
In the previous paper \cite{L-Z}, for a characteristic problem with not necessarily small initial data given on a complete null cone decaying like that in the work \cite{Ch-K} of the stability of Minkowski spacetime by Christodoulou and Klainerman, we proved the local existence in retarded time, which means the solution to the vacuum Einstein equations exists in a uniform future neighborhood, while the global existence in retarded time is the weak cosmic censorship conjecture. In this paper, we prove that the local existence in retarded time still holds when the data is assumed to decay slower, like that in Bieri's work \cite{Bie} on the extension to the stability of Minkowski spacetime. Such decay guarantees the existence of the limit of the Hawking mass on the initial null cone, when approaching to infinity, in an optimal way.

\end{abstract}


\setcounter{tocdepth}{1}

\parskip=\baselineskip

\allowdisplaybreaks

\section{Introduction}

In the previous paper \cite{L-Z}, we considered a characteristic problem of the vacuum Einstein equations, where part of the initial data is given on an asymptotically flat  complete null cone. The decay rate given on the complete null cone in \cite{L-Z} inherits that in the remarkable works of the stability of Minkowski spacetime in \cite{Ch-K} and \cite{K-N}, which means that the decay rate we considered, is the same as that of the complete null cones in the solutions in \cite{Ch-K,K-N}. We remark that, the decay rate in the above works does not necessarily satisfy the peeling properties, which are implied by a smooth conformal compactification, but a suitable notion of the future null infinity can still be defined. What we proved in \cite{L-Z} is, that if we start from a complete null cone with the above decay rate, then the solution to the vacuum Einstein equations will always contain a piece of the future null infinity. This gives the local existence in retarded time, and the global existence in retarded time is exactly the weak cosmic censorship conjecture, see the discussions in \cite{Chr90}. It was also proved by Cabet, Chru\'sciel  and Wafo in \cite{C-C-W}, that if the complete null cone attaches the future null infinity smoothly (and therefore the peeling properties are satisfied), then the solution contains a piece of smooth future null infinity.

In \cite{Bie}, Bieri extended the stability of Minkowski spacetime (due to Christodoulou and Klainerman \cite{Ch-K}), under weaker decay and regularity assumptions. Roughly speaking, Christodoulou and Klainerman \cite{Ch-K} considered the asymptotically flat initial data $(\Sigma,\bar{g},\bar{k})$ with the following decay at infinity:
\begin{align*}
\bar{g}_{ij}=\left(1+\frac{2M}{r}\right)\delta_{ij}+o_4(r^{-3/2}),\ \bar{k}_{ij}=o_3(r^{-5/2}),
\end{align*}
where $f=o_m(r^s)$ means $\partial^kf=o(r^{s-k})$\footnote{The exact asymptotic behaviors are written in an integral form.}
 for all $k\le m$.  Bieri considered the following:
\begin{align}\label{Bieri}
\bar{g}_{ij}=\delta_{ij}+o_3(r^{-1/2}),\ \bar{k}_{ij}=o_2(r^{-3/2}).
\end{align}
Bieri's result \cite{Bie} requires one less derivatives and one less power of $r$. One of the common features of both proofs is making use of an optical function $u$, which increases towards future and is normalized at infinity, whose level sets $C_u$ are outgoing null cones extending to infinity. Along these null cones, the geometry tends to being Minkowskian in suitable rate. For example, without regard to the difference in the regularity considered in both works, the curvature components\footnote{See the next section for definitions.}, in the spacetime considered by Christodoulou and Klainerman, behave like:
\begin{align}\label{strong}
\alpha,\beta=o(r^{-7/2}),\ \rho=O(r^{-3}),\ \sigma=O(r^{-3}\tau_-^{-1/2}),\ \betab=O(r^{-2}\tau_-^{-3/2}),\ \alphab=O(r^{-1}\tau_-^{-5/2}),
\end{align}
and behave like the followings in the spacetime considered by Bieri:
\begin{align}\label{weak}
\alpha,\beta,\rho,\sigma=o(r^{-5/2}),\ \betab=O(r^{-2}\tau_-^{-1/2}),\ \alphab=O(r^{-1}\tau_-^{-3/2}).
\end{align}
Here $\tau_-^2=1+u^2$. As a consequence of less decay being assumed on the initial data, the decay of various geometric quantities induced on the null cones $C_u$ becomes slower.

The decay considered by Bieri is expected to be sharp. One reason is that, as pointed out in \cite{Bie}, borderline terms appear in the energy estimates, which suggests that any further relaxation on the decay may cause divergence of the integral and the bootstrap argument cannot be closed. We will make more comments at this point later in a clearer way. Another reason seems to be more reasonable, that is, Bartnik \cite{Bar} proved that, the asymptotically flat initial data verifying \eqref{Bieri} always has a well-defined and unique ADM energy, and it is the optimal decay rate (without regard to the regularity) to guarantee the ADM energy to be uniquely defined (see \cite{D-S}).

We still give an additional reason. On a fixed complete null cone, the asymptotic behavior \eqref{weak} states as
\begin{align*}
\alpha,\beta,\rho,\sigma=o(r^{-5/2}),\ \betab=O(r^{-2}),\ \alphab=O(r^{-1}),
\end{align*}
The corresponding estimates for the connection coefficients\footnote{See the next section for the definitions.} state as:
\begin{equation}\label{weakconnection}
\begin{split}
\chih=o(r^{-3/2})&,\ \etab=o(r^{-3/2}),\\
\Omega\tr\chi-\overline{\Omega\tr\chi}=O(r^{-2}),\ \Omega\tr\chi&,\Omega\tr\chib=O(r^{-1}),\ \chibh=O(r^{-1}).
\end{split}
\end{equation}
Suppose that the null cone is foliated by spherical sections $S_{\ub}$ labeled by $\ub$, where $r\sim \ub$. Then we consider the Hawking mass of $S_{\ub}$:
\begin{align*}
m(\ub)=\frac{r}{2}\left(1+\frac{1}{16\pi}\int_{S_{\ub}}\tr\chi\tr\chib\D\mu_{\gs_{S_{\ub}}}\right).
\end{align*}
We can evaluate the derivative of $m(\ub)$ with respect to $\ub$:
\begin{align}\label{dmdub}
\frac{\partial m}{\partial \ub}=\frac{r}{16\pi}\int_{S_{\ub,u}}\left[(\overline{\Omega\tr\chi}-\Omega\tr\chi)\underline{\mu}-\frac{1}{2}\Omega\tr\chib|\chih|^2+\Omega\tr\chi|\etab|^2\right]\D\mu_{\gs}
\end{align}
where $\mub=-\rho+\frac{1}{2}(\chih,\chibh)-\divs\etab$ is the mass aspect function relative to a transversal null direction orthogonal to $S_{\ub}$, whose integral over $S_{\ub}$ determines the Hawking mass. By \eqref{weakconnection}, the Hawking mass $m(\ub)$ has a limit $m_\infty$ as $\ub  \text{(and $r$)} \to+\infty$. In addition, the decay conditions \eqref{weakconnection}, especially the decay conditions on $\chih$ and $\etab$, guarantee the convergence of the Hawking mass in an optimal way.

Therefore, it is natural to study the characteristic problem with the decay rates \eqref{weak}, \eqref{weakconnection}, but \textit{without} assuming smallness. The main result of this paper is the following semi-global existence result (which is a rough form of Theorem \ref{maintheorem}):
\begin{theorem}
We consider a characteristic problem, with the geometric quantities on the complete null cone behaving asymptotically like the following:
\begin{gather*}
\chih=o(r^{-3/2}),\ \zeta=o(r^{-3/2});\\
\chibh=O(r^{-1}),\ \tr\chi-\overline{\tr\chi}=O(r^{-2}),\ \tr\chib=O(r^{-1});\\
\alpha,\beta,\rho,\sigma=o(r^{-5/2});\\
\betab=O(r^{-2}),\ \alphab=O(r^{-1}).
\end{gather*}
Then the solution to the vacuum Einstein equations always exists in a uniform future neighborhood of the initial null cone. In particular, a global optical function $u$ normalized at infinity  is constructed.
\end{theorem}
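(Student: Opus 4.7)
The plan is to run a double null foliation bootstrap argument in a thin future slab $\{0\le u\le\delta\}$ of the given complete null cone, following the overall scheme of \cite{L-Z} but with every weight adapted to the Bieri-type decay \eqref{weak} and \eqref{weakconnection}. First I would construct the optical function $u$ by a last-slice construction: prescribe $u$ on a far-out incoming cone $\Cb_{\ub_*}$ as a suitable foliation parameter and propagate it backward along outgoing null geodesics, so that $u$ is normalized at null infinity and the initial complete null cone is precisely $\{u=0\}$. A short incoming null cone issuing from a fixed sphere $S_{\ub_0,0}$ on the initial cone serves as the second characteristic hypersurface carrying trivial transversal data; the parameter $u$ then measures the retarded depth into the future.

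The bootstrap is organized around a hierarchy of weighted $L^2$, $L^4(S)$ and $L^\infty$ norms for the curvature components $\alpha,\beta,\rho,\sigma,\betab,\alphab$ and the connection coefficients $\chih,\etab,\Omega\tr\chi,\chibh,\Omega\tr\chib$, with weights in $r$ and $\tau_-$ exactly as predicted by \eqref{weak} and \eqref{weakconnection}. Compared with \cite{L-Z}, every $\tau_-$-weight must be lowered, by $\tau_-^{1/2}$ for $\alpha,\beta,\rho,\sigma$ and by $\tau_-$ for $\alphab,\betab$. Connection coefficients are retrieved from curvature using the null structure equations, via transport along $L$ and $\Lb$ combined with elliptic Hodge-type estimates on the spheres $S_{u,\ub}$, while the curvature $L^2$ norms are controlled by contracting the Bel-Robinson tensor with carefully chosen multiplier vector fields (analogues of $K$, $T$, $S$) and integrating the Bianchi identities over the slab; the multipliers' deformation tensors produce error integrals that must be absorbed by the bootstrap assumptions.

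The main obstacle, as flagged already in \cite{Bie}, is that the reduced $\tau_-$-weights saturate the integrability thresholds in the top-order curvature energy identity, so a naive Gronwall argument diverges. The resolution I expect to pursue is the renormalization $\rhoc=\rho-\frac{1}{2}(\chih,\chibh)$: the worst term in the $\rho$ Bianchi equation becomes a total $\ub$-derivative whose contribution is controlled precisely by the \emph{convergence} of the Hawking mass along $C_u$ as $\ub\to\infty$, which is the optimal consequence of \eqref{weakconnection} through the identity \eqref{dmdub}. The analogous borderline terms involving $\betab$ and $\alphab$ are handled by exploiting the signature structure of the Bianchi pairs and the cancellations in the $\Lb$-transport equations for the curvature components of highest $\tau_-$-weight. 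Once these borderline terms are absorbed, the bootstrap closes for a universal sufficiently small $\delta>0$, the decay rates claimed in the theorem are recovered, and a standard continuation argument produces the solution in $\{0\le u\le\delta\}$ together with the asserted global optical function $u$ normalized at infinity.
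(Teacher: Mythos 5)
Your overall scaffolding — thin-slab bootstrap, last-slice construction of the optical function, transport plus Hodge elliptic estimates for the connection, energy estimates for the curvature — matches the paper's scheme. But there are two substantive departures, one of which is a genuine gap.

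First, the paper replaces the Bel--Robinson/multiplier-vectorfield machinery entirely: the energy identities are obtained by rewriting each null Bianchi pair directly as
\begin{equation*}
\Db\bigl(r^{2}|\nablas^{i}\Psi|^{2}\D\mu_{\gs}\bigr)+D\bigl(r^{2}|\nablas^{i}\Psi'|^{2}\D\mu_{\gs}\bigr)
=\text{(sphere divergence)}+\text{(error)},
\end{equation*}
with the \emph{same} weight $r^{2}$ for every pair (contrast $r^{4}$ in \cite{L-Z}), and integrating over the slab. The paper explicitly calls this a refinement of the multiplier method of \cite{Bie}. Your Bel--Robinson route is admissible in principle, but it is not what makes the sharp $o(r^{-5/2})$ weights close here, and it obscures where the borderline terms enter. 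Also, $\tau_{-}$-weights play no role here: since $u$ is confined to $[0,\varepsilon]$, only $r$-weights and $\varepsilon$-smallness are needed.

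Second and more seriously, your proposed cure for the borderline term does not close. The offending contribution in the $\Db(\rho,\sigma)$--$D\betab$ pair is $r^{2}\bigl(\rho(\Omega\chih,\alphab)+\sigma(\Omega\chih\wedge\alphab)\bigr)$. Passing to $\rhoc=\rho-\tfrac12(\chih,\chibh)$ does cancel the $(\chih,\alphab)$ term using $\Dbh\chibh'=-\alphab$, but it replaces it by $(\Db\chih,\chibh)$, and $\Db\chih$ contains $\nablas\tensor\eta$. At the top order ($\nablas^{3}$ of the Bianchi pair, which is what you need for the $\H^{3}$ energies) this produces $\nablas^{4}\eta$ — one derivative more than the hierarchy controls. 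Your appeal to the Hawking-mass limit is only heuristic: the convergence of $m(\ub)$ bounds an integral of a zeroth-order quantity and has no direct bearing on the top-order $L^{2}$ estimate. The paper's actual resolution is to introduce the mixed norm $\|r^{3/2}\chih\|_{\L^{2}_{[0,\ub]}\L^{\infty}_{u}\H^{3}}$ and estimate
\begin{align*}
\int_{0}^{\ub}\!\!\int_{0}^{\varepsilon}\|r\rho\|\,\|\chih\|_{\H^{2}}\,\|\alphab\|\,\D u'\D\ub'
\lesssim \|\chih\|_{\L^{2}_{[0,\ub]}\L^{\infty}_{u}\H^{2}}\,\|r\rho\|_{\L^{2}_{u}\L^{2}_{\ub}}\,\sup_{\ub}\|\alphab\|_{\L^{2}_{u}},
\end{align*}
which uses the full decay $\chih=o(r^{-3/2})$, $\rho=o(r^{-5/2})$, $\alphab=O(r^{-1})$ without renormalization and without derivative loss. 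The companion device is a Gronwall argument in $\ub$ absorbing $\int_{0}^{\ub}(1+\ub')^{-3/2}\|r^{2}\betab,r\alphab\|^{2}_{\L^{2}_{u}\H^{3}}\D\ub'$, which requires $\widetilde{\Omega\tr\chi}$ (not $\Omega\tr\chi$) to appear in $\tau_{2}^{(i)}$, $\tau_{3}^{(i)}$.

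Two further omissions you should address. (i) The paper does not take ``trivial transversal data'' on a small incoming cone; it imposes a \emph{canonical foliation} on the last slice $\Cb_{\ub_{*}}$ via \eqref{lastslice}. This is what forces $\Omega\to1$, gives the elliptic equation \eqref{lasteta} for $\eta$ on the last slice, and closes the Gronwall argument when integrating from $\ub_{*}$ backwards. (ii) The top-order estimate of $\omega$ is delicate: the natural elliptic equation $\Deltas\omega=\cdots+\divs(\Omega\beta)$ only gives $o(r^{-3/2})$ because $\beta=o(r^{-5/2})$; one must instead exploit $\Db\omega=\cdots+\Omega^{2}\rho$ together with $\omega\equiv0$ on $C_{0}$ to obtain $o(r^{-5/2})$, which is needed for the $2\omega\Omega\tr\chi$ term in the Raychaudhuri equation to be integrable. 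Your proposal does not account for this, and it is a place where the weak decay would otherwise break the argument.
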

\begin{remark}This theorem shows that,  the \textit{weakest decay} \eqref{weak}, \eqref{weakconnection} can also propagate locally to the future in retarded time. Recall that, if the rescaled metrics on the spherical sections $S_{\ub}$ tend to a standard round metric along the null cone, then the limit of the Hawking mass is the Bondi energy\footnote{See the discussions in \cite{Sau} and the references therein.}. In the proof of the theorem, we do \textit{not} need to assume that the spherical sections of the initial complete null cone $C_0$ tend to a standard round metric, in other word, the limit of the Hawking mass is allowed to have nothing to do with the Bondi mass. In addition, after proving the semi-global existence,  we can also prove that limits of the Hawking mass on $C_u$, which is the level sets of the global optical function $u$, decrease as a function of $u$, using a similar argument of proving the Bondi energy loss formula in \cite{K-N}.
\end{remark}

We end this section by making some comments of the proof. The basic idea follows from that of the previous work \cite{L-Z}. The major difference in the energy estimates is that the decay for $\alpha$, $\beta$, $\rho$ and $\sigma$ is weaker than that in the previous work, we use a different group of weights to generate weighted energy estimates using the null Bianchi equations. For both pairs of the equations for $\Db\alpha$-$D\beta$ and $\Db\beta$-$D(\rho,\sigma)$, we use the same weight $r^2$, which means that, we rewrite the null Bianchi equations as
\begin{gather*}
\Db(r^2|\alpha|^2\D\mu_{\gs})+D(2r^2|\beta|^2\D\mu_{\gs})=\cdots,\\
\Db(r^2|\beta|^2\D\mu_{\gs})+D(r^2(|\rho|^2+|\sigma|^2)\D\mu_{\gs})=\cdots.
\end{gather*}
The right hand side consists of a precise divergence term on the spherical sections and lower order terms. By integrating over the whole spacetime manifold, we can obtain the desired estimates for the curvature components. Recall that we use the weight $r^4$ in \cite{L-Z}. Notice that this is a refinement of the multiplier vectorfield method used in \cite{Bie}, which relies on a carefully designed group of approximately Killing and conformal Killing vectorfields. For the angular derivatives of the curvature, we use the relation $\nablas\sim r^{-1}$ as before.

As remarked before, some borderline terms appear in the energy estimates when we study the weak decay. Such terms appear in the energy estimates generated by the group $\Db(\rho,\sigma)$-$D\betab$ of the null Bianchi equations, that is, the terms with underline in the following:
\begin{align*}
\Db(r^2(|\rho|^2+|\sigma|^2)\D\mu_{\gs})+D(r^2|\betab|^2\D\mu_{\gs})=\cdots\underline{-r^2(\rho(\Omega\chih,\alphab)+\sigma(\Omega\chih\wedge\alphab))}.
\end{align*}
The $L^1$ norm of the underlined terms over the spacetime is estimated by ($\|\cdot\|$ refers to $\|\cdot\|_{L^2(S_{\ub',u'})}$)
\begin{align*}
&\int_0^{\ub}\int_0^{\varepsilon}\|r\rho\|\sum_{i=0}^2\|(r\nablas)^i\chih\|\|\alphab\|\D u'\D\ub'\\
\le&\int_0^{\ub}\sup_u\left\{\sum_{i=0}^2\|(r\nablas)^i\chih\|\right\}\left(\int_0^\varepsilon\|r\rho\|^2\D u'\right)^{1/2}\left(\int_0^{\varepsilon}\|\alphab\|^2\D u'\right)^{1/2}\D\ub'\\
\lesssim&\sum_{i=0}^2\int_0^{\ub}\sup_u\|(r\nablas)^i\chih\|^2\D\ub'\cdot\left(\int_0^\varepsilon\int_0^{\ub}\|r\rho\|^2\D\ub'\D u'\right)^{1/2}\cdot\sup_{\ub}\left(\int_0^{\varepsilon}\|\alphab\|^2\D u'\right)^{1/2}.
\end{align*}
The above three factors are expected to be bounded because the expected decay rates for $\chih$, $\rho$ and $\alphab$ are $\chih=o(r^{-3/2})$, $\rho=o(r^{-5/2})$ and $\alphab=O(r^{-1})$. We can observe that the $L^1$ norm of the underlined terms diverges if we do not use the full decay of $\chih$, $\rho$ and $\alphab$. This is exactly the reason why we call them borderline terms.

As the borderline terms appear in in the curvature estimates, similar phenomena appears in the estimate for the connection coefficients. Such terms appear in the estimates for $\nablas^3(\Omega\chih)$ and $\nablas^3(\Omega\tr\chi)$, using the coupled system of the Raychaudhuri equation and null Codazzi equation:
\begin{align}\label{Raychaudhuri}
D(\Omega\tr\chi)&=-\frac{1}{2}(\Omega\tr\chi)^2\underbrace{-|\Omega\chih|^2}_{I}\underbrace{+2\omega\Omega\tr\chi}_{II},\\
\label{nullCodazzi}
\divs(\Omega\chih)&=\frac{1}{2}\ds (\Omega\tr \chi)+\Omega\chih\cdot\etab\underbrace{-\frac{1}{2}\Omega\tr \chi\etab}_{III}-\Omega\beta.
\end{align}
First of all, we need to integrate \eqref{Raychaudhuri} from the last slice, therefore the convergence of the integral of the right hand side is essential. The coefficient $-\frac{1}{2}$ before $(\Omega\tr\chi)^2$ requires that both the terms $I$ and $II$ should behave like $o(r^{-3})$ to guarantee the convergence. 

For the term $I$, this is true because it is exactly the way what $\chih$ is expected to behave, which is one of the crucial conditions guaranteeing the convergence of the Hawking mass, see \eqref{weakconnection} and \eqref{dmdub}. Notice that the term $I$ is exactly the term $-\frac{1}{2}\Omega\tr\chib|\chih|^2$ in \eqref{dmdub}. However, the estimate for (the third order angular derivatives) of $\chih$ is derived using \eqref{nullCodazzi}. This derivation relies on $\beta$'s full decay $o(r^{-5/2})$, and $\etab$'s full decay (the term $III$) $o(r^{-3/2})$, the latter of which happens to be another crucial condition responsible for the convergence of the Hawking mass.

The behavior of the term $II$ is determined by the behavior of $\omega$, which measures the difference between the double null foliation and the geodesic foliation. In the current case, in order that $II$ behaves like $o(r^{-3})$, $\omega$ should behave at least like $o(r^{-2})$. The top order angular derivatives\footnote{If we assume up to $l$ order angular derivatives of the curvature components, then the top order angular derivatives of the connection coefficients refers to the order $l+1$.} of $\omega$ should be estimated through the equation
\begin{align*}
\Deltas\omega=\cdots+\divs(\Omega\beta),
\end{align*}
which only suggests that (the top order angular derivatives) of $\omega$ behaves like $o(r^{-3/2})$, because $\beta$ behaves only like $o(r^{-5/2})$, in the current case. Apparently, this is not enough. However, the estimate for the next to the top order angular derivatives of $\omega$ is derived through the equation
\begin{align*}
\Db\omega=\cdots+\Omega^2\rho,
\end{align*}
which suggests that $\omega$ behaves like $o(r^{-5/2})$, since $\rho$ behaves also like $o(r^{-5/2})$, and it is assumed that the initial null cone $C_0$ is foliated by geodesic foliation, that is, $\omega\equiv0$. Similar losses in the estimates for top order quantities also appears in \cite{Chr}. The above argument suggests that the top order angular derivatives of the connection coefficients may not be able to directly estimated purely under a double null foliation\footnote{However, the Raychaudhuri equation \eqref{Raychaudhuri} can be written in the following form:
\begin{align*}
\frac{\partial}{\partial s}\tr\chi'=-\frac{1}{2}(\tr\chi')^2-|\chih'|^2
\end{align*}
where $\chi'=\Omega^{-1}\chi$ and $s$ is the affine parameter. The Raychaudhuri equation has a preferable form in geodesic foliation. This suggests that the top order angular derivatives of $\tr\chi$ can be estimated, but behaves differently between the original double null foliation and geodesic foliation, see \cite{Bie} and also see \cite{Ch-K}.}. Therefore, since no smallness is assumed, it is convenient  to work under the assumption of the third order angular derivatives of the curvature components\footnote{Under such a strong regularity, many other estimates are actually easier to derive, see \cite{Luk}. It turns out that if we assume a strong regularity in the first place, many difficulties will not come out.}\footnote{By appealing to the preferable form of the Raychaudhuri equation, combining the techniques in \cite{K-R-09}, which is also commented in \cite{Luk}, we should be able to assume only the first order derivatives of the curvature. In this paper, we only focus on the problem of decay.}.

The remainder of this paper is organized as follows. Section \ref{preliminary} defines the double null foliation, the components of the connection and curvature under such foliation, and lists the null structure equations and the null Bianchi equations. Section \ref{structure} states the main theorem, which is Theorem \ref{maintheorem}, and outlines the main steps of the proof. Section \ref{estimate} is devoted to the a priori estimate, Theorem \ref{apriori}, which is the main step of the whole bootstrap argument. The Appendix sketches the proof of the construction of the canonical foliation on the last slice.

\section{Preliminary}\label{preliminary}

\subsection{Basic Geometric Setup} We follow the geometric setup and notations in \cite{Chr}.  We use $M$
 to denote the underlying space-time (which will be the solution) and use $g$ to denote the background 3+1 dimensional Lorentzian metric. We use $\nabla$ to denote the Levi-Civita connection of the metric $g$.

Let $\ub$ and $u$ be two optical functions on $M$, that is
\begin{equation*}
g(\nabla\ub,\nabla \ub)= g(\nabla u,\nabla u)=0.
\end{equation*}
The space-time $M$ is foliated by the level sets of $\ub$ and $u$ respectively. Since the gradients of $u$ and $\ub$ are null, we call the the these two foliations together a double null foliation. We require the functions $u$ and $\ub$ increase towards the future. We use $C_u$ to denote the outgoing null hypersurfaces which are the level sets of $u$ and use ${\Cb}_{\ub}$ to denote the incoming null hypersurfaces which are the level sets of $\ub$. We denote the intersection $S_{\ub,u}=\Cb_{\ub} \cap C_u$, which is a  space-like two-sphere.

We define a positive function $\Omega$ by the formula 
$$ \Omega^{-2}=-2g(\nabla\ub,\nabla u).$$
  We then define the normalized null pair $(e_3, e_4)$ by 
  $$e_3=-2\Omega\nabla\ub,\ e_4=-2\Omega\nabla u,$$ and define one another null pair 
  $$\Lb=\Omega e_3,\ L=\Omega e_4.$$
   We remark that the flows generated by $\Lb$ and $L$ preserve the double null foliation. On a given two sphere $S_{\ub, u}$ we choose a local orthonormal frame $(e_1,e_2)$. We call $(e_1, e_2, e_3, e_4)$ a \emph{null frame}.  As a convention, throughout the paper, we use capital  Latin letters $A, B, C, \cdots$ to denote an index from $1$ to $2$, e.g. $e_A$ denotes either $e_1$ or $e_2$.

We define $\phi$ to be a tangential tensorfield if $\phi$ is \textit{a priori} a tensorfield defined on the space-time $M$ and all the possible contractions of $\phi$ with either $e_3$ or $e_4$ are zeros. We use $D\phi$ and $\Db\phi$ to denote the projection to $S_{\ub,u}$ of usual Lie derivatives $\mathcal{L}_L\phi$ and $\mathcal{L}_{\Lb}\phi$. The space-time metric $g$ induces a Riemannian metric $\gs$ on $S_{\ub,u}$ and $\epsilons$ is the volume form of $\gs$ on $S_{\ub,u}$. We use $\ds$ and $\nablas$ to denote the exterior differential and covariant derivative (with respect to $\gs$) on $S_{\ub,u}$.

We recall the definitions of null connection coefficients. Roughly speaking, the following quantities are Christoffel symbols of $\nabla$ according to the null frame $(e_1,e_2,e_3,e_4)$:
\begin{align*}
\chi_{AB}&=g(\nabla_Ae_4,e_B),\quad \eta_A=-\frac{1}{2}g(\nabla_3e_A,e_4),\quad \omega=\frac{1}{2}\Omega g(\nabla_4e_3,e_4),\\
\chib_{AB}&=g(\nabla_Ae_3,e_B), \quad\etab_A=-\frac{1}{2}g(\nabla_4e_A,e_3), \quad\omegab=\frac{1}{2}\Omega g(\nabla_3e_4,e_3).
\end{align*}
They are all tangential tensorfields. We also define the following normalized quantities:
$$\chi'=\Omega^{-1}\chi,\ \chib'=\Omega^{-1}\chi,\ \zeta=\frac{1}{2}(\eta-\etab).$$
 The trace of $\chi$ and $\chib$ will play an important role in Einstein field equations and they are denoted by 
 $$\tr\chi = \gs^{AB}\chi_{AB},\ \tr\chib = \gs^{AB}\chib_{AB}.$$
  By definition, we can check directly the following useful identities :
  $$\ds\log\Omega=\frac{1}{2}(\eta+\etab),\ D\log\Omega=\omega,\ \Db\log\Omega=\omegab.$$

We can also define the null components of the curvature tensor
{\bf R}:
\begin{align*}
\alpha_{AB}&=\mathbf{R}(e_A,e_4,e_B,e_4),\quad\beta_A=\frac{1}{2}\mathbf{R}(e_A,e_4,e_3,e_4),\quad\rho=\frac{1}{4}\mathbf{R}(e_3,e_4,e_3,e_4),\\
\alphab_{AB}&=\mathbf{R}(e_A,e_3,e_B,e_3),\quad\betab_A=\frac{1}{2}\mathbf{R}(e_A,e_3,e_3,e_4),\quad\sigma=\frac{1}{4}\mathbf{R}(e_3,e_4,e_A,e_B)\epsilons^{AB}.
\end{align*}

\subsection{Equations} We first define several kinds of contraction of the tangential tensorfields, which are used in expressing the equations. For a  symmetric tangential 2-tensorfield $\theta$, we use $\widehat{\theta}$ and $\tr\theta$ to denote the trace-free part and trace of $\theta$ (with respect to $\gs$). If $\theta$ is trace-free, $\Dh\theta$ and $\Dbh\theta$ refer to the trace-free part of $D\theta$ and $\Db\theta$. Let $\xi$ be a tangential $1$-form. We define some products and operators for later use. For the products, we define $(\theta_1,\theta_2)=\gs^{AC}\gs^{BD}(\theta_1)_{AB}(\theta_2)_{CD}$ and $\ (\xi_1,\xi_2)=\gs^{AB}(\xi_1)_A(\xi_2)_B$. This also leads to the following norms $|\theta|^2=(\theta,\theta)$ and $|\xi|^2=(\xi,\xi)$. We then define the contractions $(\theta\cdot\xi)_A=\theta_A{}^B\xi_B$, $(\theta_1\cdot \theta_2)_{AB}=(\theta_1)_A{}^C(\theta_2)_{CB}$, $\theta_1 \wedge\theta_2=\epsilons^{AC}\gs^{BD} (\theta_1)_{AB}(\theta_2)_{CD}$ and $\xi_1\tensor \xi_2=\xi_1\otimes\xi_2+\xi_2\otimes\xi_1-(\xi_1,\xi_2)\gs$. The Hodge dual for $\xi$ is defined by $\prescript{*}{}\xi_A=\epsilons_A{}^C\xi_C$. For the operators, we define $\divs\xi=\nablas^A\xi_A$, $\curls\xi_A=\epsilons^{AB}\nablas_A\xi_B$ and $(\divs\theta)_A=\nablas^B\theta_{AB}$. We finally define a traceless operator $(\nablas\tensor\xi)_{AB}=(\nablas\xi)_{AB}+(\nablas\xi)_{BA}-\divs\xi \,\gs_{AB}$.


The following is the null structure equations that are used in this paper. (where  $K$ is the Gauss curvature of $S_{\ub,u}$):\footnote{See Chapter 1 of \cite{Chr} for the derivation of these equations.}
\begin{align*}
D(\Omega\tr\chi)&=-\frac{1}{2}(\Omega\tr\chi)^2-|\Omega\chih|^2+2\omega\Omega\tr\chi,\\
\Dbh\chibh'&=-\alphab,\\
\Db\tr\chib'&=-\frac{1}{2}\Omega^2(\tr\chib')^2-\Omega^2|\chibh'|^2,\\
D\eta &= \Omega(\chi \cdot\etab-\beta),\\
\Db\etab &= \Omega(\chib \cdot\eta+\betab),\\
D  \omegab &=\Omega^2(2(\eta,\etab)-|\eta|^2-\rho),\\
\Db  \omega &=\Omega^2(2(\eta,\etab)-|\etab|^2-\rho),\\
K&=-\frac{1}{4}\tr \chi\tr\chib+\frac{1}{2}(\chih,\chibh)-\rho,\\
\divs(\Omega\chih)&=\frac{1}{2}\ds (\Omega\tr \chi)+\Omega\chih\cdot\etab-\frac{1}{2}\Omega\tr \chi\etab-\Omega\beta,\\
\Dbh(\Omega\chih)&=\Omega^2(\nablas \tensor \eta + \eta \tensor \eta +\frac{1}{2}\tr\chib\chih-\frac{1}{2}\tr\chi \chibh),\\
\Db(\Omega\tr\chi)&=\Omega^2(2\divs\eta+2|\eta|^2-(\chih,\chibh)-\frac{1}{2}\tr\chi\tr\chib+2\rho),\\
\Db\eta&=-\Omega(\chib\cdot\eta+\betab)+2\ds\omegab.
\end{align*}
We also use the null frame to decompose the contracted second Bianchi identity $\nabla^\alpha \mathbf{R}_{\alpha\beta\gamma\delta} = 0$ into components. This leads the following null Bianchi equations:\footnote{See Proposition 1.2 of \cite{Chr}.}
\begin{gather*}
\Dbh\alpha-\frac{1}{2}\Omega\tr\chib \alpha+2\omegab\alpha+\Omega\{-\nablas\tensor\beta -(4\eta+\zeta)\tensor \beta+3\chih \rho+3{}^*\chih \sigma\}=0,\\
\Dh\alphab-\frac{1}{2}\Omega\tr\chi \alphab+2\omega\alphab+\Omega\{\nablas\tensor\betab +(4\etab-\zeta)\tensor \betab+3\chibh \rho-3{}^*\chibh \sigma\}=0,\\
D\beta+\frac{3}{2}\Omega\tr\chi\beta-\Omega\chih\cdot\beta-\omega\beta-\Omega\{\divs\alpha+(\etab+2\zeta)\cdot\alpha\}=0,\\
\Db\betab+\frac{3}{2}\Omega\tr\chib\betab-\Omega\chibh\cdot\betab-\omegab\betab+\Omega\{\divs\alphab+(\eta-2\zeta)\cdot\alphab\}=0,\\
\Db\beta+\frac{1}{2}\Omega\tr\chib\beta-\Omega\chibh \cdot \beta+\omegab \beta-\Omega\{\ds \rho+{}^*\ds \sigma+3\eta\rho+3{}^*\eta\sigma+2\chih\cdot\betab\}=0,\\
D\betab+\frac{1}{2}\Omega\tr\chi\betab-\Omega\chih \cdot \betab+\omega \betab+\Omega\{\ds \rho-{}^*\ds \sigma+3\etab\rho-3{}^*\etab\sigma-2\chibh\cdot\beta\}=0,\\
D\rho+\frac{3}{2}\Omega\tr\chi \rho-\Omega\{\divs \beta+(2\etab+\zeta,\beta)-\frac{1}{2}(\chibh,\alpha)\}=0,\\
\Db\rho+\frac{3}{2}\Omega\tr\chib \rho+\Omega\{\divs \betab+(2\eta-\zeta,\betab)+\frac{1}{2}(\chih,\alphab)\}=0,\\
D\sigma+\frac{3}{2}\Omega\tr\chi\sigma+\Omega\{\curls\beta+(2\etab+\zeta,{}^*\beta)-\frac{1}{2}\chibh\wedge\alpha\}=0,\\
\Db\sigma+\frac{3}{2}\Omega\tr\chib\sigma+\Omega\{\curls\betab+(2\etab-\zeta,{}^*\betab)+\frac{1}{2}\chih\wedge\alphab\}=0.
\end{gather*}

We denote the first order elliptic operators (or Hodge operators) appearing above in the null Bianchi equations, by $\mathcal{D}_1,\mathcal{D}_2$, where
\begin{align*}
&\mathcal{D}_1:\text{tangential one-form $\xi\mapsto$ a pair of functions $(\divs\xi,\curls\xi)$};\\
&\mathcal{D}_2:\text{tangential symmetric trace-free $(0,2)$ type tensorfield $\theta\mapsto$ tangential one-form $\divs\theta$}.
\end{align*}
It is easy to calculate the formal $L^2$ adjoint
\begin{align*}
&^*\mathcal{D}_1:\text{a pair of functions $(f,g)\mapsto$ tangential one-form $-\ds f+{}^*\ds g$};\\
&^*\mathcal{D}_2:\text{tangential one-form $\xi\mapsto$ tangential symmetric trace-free $(0,2)$ type tensorfield $-\frac{1}{2}\nablas\tensor\xi$}.
\end{align*}
We will denote any one of the above elliptic operators (or Hodge operators) and their formal $L^2$ adjoint by $\mathcal{D},{}^*\mathcal{D}$.

Before proceeding further, we list the commutation formulas which are used for the estimates of derivatives\footnote{See Chapter 4 of \cite{Chr} for the first group. The second group can be derived directly by the definition of curvature.}.
\begin{lemma}\label{commutator}
Given integer $i$ and tangential tensorfield $\phi$. we have
\begin{align*}
[D,\nablas^i]\phi&=\sum_{j=1}^i\nablas^j(\Omega\chi)\cdot\nablas^{i-j}\phi,\\
[\Db,\nablas^i]\phi&=\sum_{j=1}^i\nablas^j(\Omega\chib)\cdot\nablas^{i-j}\phi,
\end{align*}
and
\begin{align*}
[\mathcal{D},\nablas^i]\phi&=\sum_{j=1}^i\nablas^{j-1}K\cdot\nablas^{i-j}\phi,\\
[^*\mathcal{D},\nablas^i]\phi&=\sum_{j=1}^i\nablas^{j-1}K\cdot\nablas^{i-j}\phi.
\end{align*}
Here we use ``$\cdot$'' to represent an arbitrary contraction with the coefficients by $\gs$ or $\epsilons$. In addition, if $\phi$ is a function, then when $i=1$, all commutators above are zero; when $i\ge2$, all $i$'s are replaced by $i-1$'s in above formulas.
\end{lemma}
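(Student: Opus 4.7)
The proof proceeds by induction on $i$, with the $i=1$ base case for each of the four identities established by a direct calculation, and the inductive step given by the universal identity
\begin{equation*}
[D,\nablas^{i}]\phi = [D,\nablas]\,\nablas^{i-1}\phi + \nablas\,[D,\nablas^{i-1}]\phi,
\end{equation*}
and its analogues with $D$ replaced by $\Db$, $\mathcal{D}$ or $^*\mathcal{D}$. Once the base case is in hand, substituting the inductive hypothesis into the second summand and the base case into the first reproduces the asserted sum $\sum_{j=1}^{i}$, with the $j=1$ contribution coming from the base case and the $j\ge 2$ contributions coming from $\nablas$ acting on the inductive terms.

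For the first pair of identities, the base case rests on the deformation formulas $D\gs_{AB} = 2(\Omega\chi)_{AB}$ and $\Db\gs_{AB} = 2(\Omega\chib)_{AB}$, which follow directly from the definitions of $\chi,\chib$ and the fact that $L = \Omega e_4$, $\Lb = \Omega e_3$. Differentiating the standard formula for the Levi-Civita Christoffel symbols $\Gammas^{C}_{AB}[\gs]$ along $L$ and projecting to $S_{\ub,u}$ yields
\begin{equation*}
D\Gammas^{C}_{AB} = \tfrac{1}{2}\gs^{CD}\bigl(\nablas_{A}(\Omega\chi)_{BD} + \nablas_{B}(\Omega\chi)_{AD} - \nablas_{D}(\Omega\chi)_{AB}\bigr),
\end{equation*}
and the commutator $[D,\nablas_E]\phi$ on any tangential tensorfield $\phi$ is obtained by contracting this expression with $\phi$ in each of its indices, producing exactly a term of the schematic form $\nablas(\Omega\chi)\cdot\phi$. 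The corresponding identity for $\Db$ is identical with $\chi$ replaced by $\chib$.

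For the second pair, the base case is a Ricci identity on the two-dimensional surface $S_{\ub,u}$. Because $S_{\ub,u}$ is two-dimensional, its intrinsic Riemann tensor has the form $R_{ABCD} = K(\gs_{AC}\gs_{BD} - \gs_{AD}\gs_{BC})$, so $[\nablas_A,\nablas_B]\phi$ is schematically $K\cdot\phi$ (with no derivatives of $K$). Each of the operators $\mathcal{D} \in \{\divs,\curls,\mathcal{D}_1,\mathcal{D}_2\}$ and its formal adjoint $^*\mathcal{D}$ is constructed from a single $\nablas$ by contraction (or skew-contraction) with $\gs$ or $\epsilons$, both of which are $\nablas$-parallel, so $[\mathcal{D},\nablas]\phi$ reduces to a single application of the Ricci identity and yields a term $K\cdot\phi$, i.e.\ the $j=1$ summand. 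The scalar case requires one additional observation: if $\phi$ is a function then $\nablas_A\nablas_B\phi$ is symmetric and the Ricci identity contributes nothing, so the leading-order commutator vanishes; each further $\nablas$ then shifts the index by one, which is the source of the replacement $i\mapsto i-1$ when $i\ge 2$ and of the vanishing when $i=1$.

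There is no genuine obstacle; the content is bookkeeping. The main point to verify carefully is the precise index range of the sums, which must be $\sum_{j=1}^i$ (not $\sum_{j=0}^i$) and in particular guarantees that on the right-hand sides every term loses one derivative relative to $\nablas^{i}\phi$ on the left. This derivative-loss structure is what will later justify, in the iterative energy estimates, that commutators are lower order than the quantities they are applied to.
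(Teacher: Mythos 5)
Your argument is correct and is essentially the standard textbook derivation; the paper itself does not give a proof but refers in a footnote to Chapter 4 of \cite{Chr} for the first pair and remarks that the second pair follows from the definition of curvature, which is exactly the content of your base-case calculations. The induction via $[D,\nablas^i]=[D,\nablas]\nablas^{i-1}+\nablas[D,\nablas^{i-1}]$, with the base case supplied by the deformation formula $D\gs=2\Omega\chi$ (hence $D\Gammas\sim\nablas(\Omega\chi)$) for the first pair and by the two-dimensional Ricci identity $R_{ABCD}=K(\gs_{AC}\gs_{BD}-\gs_{AD}\gs_{BC})$ for the Hodge operators, is exactly the structure the lemma is implicitly built on, and your reindexing to land on $\sum_{j=1}^i$ checks out.

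One imprecision worth correcting is in the treatment of the scalar case. You write that the leading-order commutator vanishes for a function $\phi$ because ``$\nablas_A\nablas_B\phi$ is symmetric and the Ricci identity contributes nothing.'' That explains only the second pair, $[\mathcal{D},\nablas]\phi=0$ and $[{}^*\mathcal{D},\nablas]\phi=0$, where the commutator is governed by the intrinsic curvature and there is no index on $\phi$ for the Riemann tensor to hit. For the first pair, $[D,\nablas]\phi=0$ on scalars has a different origin: $\nablas_A\phi=\partial_A\phi$ contains no Christoffel symbols, so the variation $D\Gammas$ never enters; equivalently, the projected Lie derivative of the exact one-form $\ds\phi$ is $\ds(L\phi)$ by Cartan's formula, which coincides with $\nablas(D\phi)$. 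Both reasons yield the same conclusion, and your ``direct calculation'' of the base case would produce the correct vanishing in either case, but the explanatory sentence as written attributes the wrong mechanism to $D$ and $\Db$. The remainder of the scalar bookkeeping---that for $i\ge 2$ one factors off the inner $\nablas$ and applies the tensorial formula to the one-form $\nablas\phi$, effectively shifting the index by one---is right and matches the lemma's shorthand about replacing $i$ by $i-1$.
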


\section{Main Theorem and Structure of the Proof}\label{structure}

\subsection{The Statement of Main Theorem}\label{maintheoremsection}

We first introduce the following scale invariant norms (for $2\le p\le\infty$ and $q=1,2$):
\begin{gather*}
\|\xi\|_{\mathbb{L}^p(\ub,u)}=\left(\int_{S_{\ub,u}}r^{-2}|\xi|^p\D\mu_{\gs}\right)^{\frac{1}{p}},\\
\|\xi\|_{\mathbb{H}^n(\ub,u)}=\sum_{i=0}^n\|(r\nablas)^i\xi\|_{\L^p(\ub,u)},\\
\|\xi\|_{\mathbb{L}^q_{[\ub_1,\ub_2]}\L^p(u)}=\left(\int_{\ub_1}^{\ub_2}r^{-1}\|\xi\|_{\L^p(\ub',u)}^q
\D\ub'\right)^{\frac{1}{q}},\\
\|\xi\|_{\mathbb{L}^q_{[\ub_1,\ub_2]}\mathbb{H}^n(u)}=
 \left(\int_{\ub_1}^{\ub_2}r^{-1}\|\xi\|_{\H^n(\ub',u)}^q
 \D\ub'\right)^{\frac{1}{q}},\\
\|\xi\|_{\L^q_u\L^p(\ub)}=\left(\int_{0}^{\varepsilon}\|\xi\|_{\L^p(\ub,u')}^q
\D u'\right)^{\frac{1}{q}},\\
\|\xi\|_{\L^q_u\mathbb{H}^n(\ub)}=
\left(\int_{0}^{\varepsilon}\|\xi\|_{\H^n(\ub,u')}^q
\D u'\right)^{\frac{1}{q}}.
\end{gather*}
In addition, we define
\begin{align*}
\|\xi\|_{\L^q_{[\ub_1,\ub_2]}\L^\infty_{u}\mathbb{H}^n}=
\left(\int_{\ub_1}^{\ub_2}\sup_{u'\in[0,\varepsilon]}r^{-1}\|\xi\|_{\H^n(\ub',u')}^q
\D \ub'\right)^{\frac{1}{q}}
\end{align*}

The main theorem of this paper is the following.
\begin{theorem}[Main Theorem]\label{maintheorem}
Suppose that $C_0$ and $\Cb_0$ are two intersecting null cones where $C_0$ is outgoing complete and $\Cb_0$ is incoming, and $S_0=\Cb_0\bigcap C_0$ is a two-sphere. Suppose also that $C_0$ and $\Cb_0$ are foliated by affine sections which are labelled by two functions $s$ and $\underline{s}$. Let $\Lambda(s)$ and $\lambda(s)$ be the larger and smaller eigenvalue of $r(s)^{-2}\gs|_{S_{s,0}}$ with respect to $r(0)^{-2}\gs|_{S_{0,0}}$, and $r(s)$ be the area radius defined by $4\pi r(s)^2=Area(S_{s,0})$. If the initial data given on $\Cb_0\bigcup C_0$ satisfy the following:
\begin{align*}
\mathcal{O}_0=&\sup_{s}\left\{\|r^2\widetilde{\tr\chi},r^{3/2}\chih,r\tr\chi\|_{\H^2(s,0)}+\|r^{3/2}\zeta\|_{\H^3(s,0)}+r\|\chibh,\tr\chib\|_{\H^3(s,0)}\right\}\\
&+\sup_s\left\{\|r^{3/2}\chih,r^{3/2}\zeta\|_{\L^2_{\ub}\H^2(s)}\right\}\\
&+\sup_s\Lambda(s)+(\inf_s\lambda(s))^{-1}+\sup_s\frac{r}{1+s}+\left(\inf_s\frac{r}{1+s}\right)^{-1}<\infty\\
\mathcal{R}_0=&\|r^{5/2}\alpha,r^{5/2}\beta,r^{5/2}\rho,r^{5/2}\sigma,r^{3/2}\betab\|_{\L^2_{[0,+\infty)}\H^3(0)}\\
&+\|r^2\beta,r^2\rho,r^2\sigma,r^2\betab,r\alphab\|_{\L^2_{u}\H^3(0)}\\
&+\sup_{s}\|r^{5/2}\beta,r^{5/2}\rho,r^{5/2}\sigma,r^2\betab\|_{\H^2(s,0)}\\
&+\sup_{\underline{s}}\left\{\|r\alphab\|_{\H^2(0,\underline{s})}+\|r\Db\alphab\|_{\H^1(0,\underline{s})}+\|r\Db^2\alphab\|_{\L^2(0,\underline{s})}\right\}<\infty.
\end{align*}
Then there exists an $\varepsilon>0$ depends on $\mathcal{O}_0$, $\mathcal{R}_0$ and a global optical function $u$ such that the solution of vacuum Einstein equations exists in a global double null foliation
$0\le\ub<+\infty$, $0\le u\le\varepsilon$. In addition, $\Omega\to1$ as $\ub\to+\infty$.
\end{theorem}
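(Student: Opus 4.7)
The proof strategy follows \cite{L-Z}: establish the solution in finite truncations $\mathcal{M}_{\ub^*} = \{0 \le \ub \le \ub^*,\ 0 \le u \le \varepsilon\}$ by a bootstrap argument, with $\varepsilon > 0$ chosen depending only on $\mathcal{O}_0$ and $\mathcal{R}_0$, independent of $\ub^*$, and then pass to the limit $\ub^* \to +\infty$. The bootstrap assumptions mirror $\mathcal{O}_0, \mathcal{R}_0$ with doubled constants throughout $\mathcal{M}_{\ub^*}$, to be improved to $\tfrac{3}{2}$-constants. On the last slice of $\mathcal{M}_{\ub^*}$, one constructs the canonical foliation whose existence is sketched in the Appendix; this normalizes the optical function $u$ at infinity, so that after propagation via the eikonal equation it coincides with the global $u$ of the statement. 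The fact that $C_0$ carries its affine/geodesic foliation ($\omega \equiv 0$) then fixes further initial data for the structure equations.

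The core of the proof is the a priori estimate. For the curvature I would use the null Bianchi equations grouped into the pairs $(\Db\alpha, D\beta)$, $(\Db\beta, D(\rho,\sigma))$, $(\Db(\rho,\sigma), D\betab)$, $(\Db\betab, D\alphab)$, with the new weight $r^2$ in the first three pairs (replacing the $r^4$ of \cite{L-Z}, consistent with the slower decay $\alpha,\beta,\rho,\sigma = o(r^{-5/2})$) and no $r$-weight in the last (consistent with $\alphab = O(r^{-1})$). Multiplying each pair by its weighted curvature tensor and integrating by parts on $S_{\ub,u}$ yields divergence identities
\begin{align*}
\Db(r^2|\alpha|^2\D\mu_{\gs}) + D(2r^2|\beta|^2\D\mu_{\gs}) &= \lot,\\
\Db(r^2|\beta|^2\D\mu_{\gs}) + D(r^2(|\rho|^2+|\sigma|^2)\D\mu_{\gs}) &= \lot,\\
\Db(r^2(|\rho|^2+|\sigma|^2)\D\mu_{\gs}) + D(r^2|\betab|^2\D\mu_{\gs}) &= \lot - r^2\bigl(\rho(\Omega\chih,\alphab) + \sigma(\Omega\chih \wedge \alphab)\bigr),
\end{align*}
together with an analogous identity (without the $r^2$ weight) for the last pair. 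Integrating over $\mathcal{M}_{\ub^*}$ and commuting up to three times with $\nablas$ via Lemma \ref{commutator} (using the scaling $\nablas \sim r^{-1}$) provides the weighted $L^2$ control of curvature through three angular derivatives. The underlined borderline term is the crucial one: it has to be estimated by the tri-linear decomposition from the introduction, paying for $\sum_i \|(r\nablas)^i\chih\|$ with the $\L^2_\ub \L^\infty_u \H^2$ norm of $\chih$ (a piece of $\mathcal{O}_0$), for $\|r\rho\|$ with an $\L^2_\ub \L^2_u$ curvature norm, and for $\|\alphab\|$ with an $\L^\infty_\ub \L^2_u$ norm, gaining a factor $\sqrt{\varepsilon}$ from the $u$-integration that absorbs the accumulated constants.

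The connection coefficients are then recovered by integrating the null structure equations along null directions, with initial data on $C_0 \cup \Cb_0$ and the canonically foliated last slice as appropriate, complemented by the elliptic Hodge systems on $S_{\ub,u}$ built from $\mathcal{D}_1, \mathcal{D}_2$. The main obstacle, flagged in the introduction, is that top-order angular derivatives of $\omega$ derived from $\Deltas\omega = \cdots + \divs(\Omega\beta)$ only yield $\omega = o(r^{-3/2})$ at top order, insufficient to make the term $2\omega\Omega\tr\chi$ in the Raychaudhuri equation integrable against the weight required for the propagation of $\Omega\tr\chi$. I would circumvent this by estimating the top-order $\omega$ instead through $\Db\omega = \Omega^2(2(\eta,\etab) - |\etab|^2 - \rho)$, integrated from $\omega \equiv 0$ on $C_0$; this produces $\omega = o(r^{-5/2})$ at top order at the cost of a single $u$-integration that supplies $\varepsilon$-smallness. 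Exactly this $\varepsilon$-smallness is what lets the scheme close in the absence of any smallness on the data themselves. Once the a priori estimates close uniformly in $\ub^*$, a standard continuity argument extends the solution to the full region $\{0 \le \ub < +\infty,\ 0 \le u \le \varepsilon\}$, and the limit $\Omega \to 1$ as $\ub \to +\infty$ follows from $\Db\log\Omega = \omegab$ combined with the decay of $\omegab$.
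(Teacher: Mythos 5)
Your proposal follows the paper's proof essentially verbatim: the $\mathcal{A}_{\varepsilon,\Delta}$-style continuity argument with the canonical foliation on the last slice, the $r^2$ weights for the first three Bianchi pairs and none for the last (cf.\ the definitions of $\tau_0^{(i)},\dots,\tau_3^{(i)}$), the tri-linear treatment of the borderline term $r^2(\rho(\Omega\chih,\alphab)+\sigma(\Omega\chih\wedge\alphab))$ via $\L^2_{\ub}\L^\infty_u$ control of $\chih$ and the $\varepsilon^{1/2}$ gain from the $u$-integration, and the elliptic/transport bootstrap for the connection coefficients.

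One imprecision is worth flagging so you don't propagate it: your claim to ``circumvent'' the $\omega$ obstruction by estimating \emph{top-order} $\nablas\omega$ from $\Db\omega=\Omega^2(2(\eta,\etab)-|\etab|^2-\rho)$ overstates what this equation gives. Since the right side contains $\rho$, the $\Db\omega$ route controls $\omega$ only at the same angular order as the curvature, i.e.\ \emph{next-to-top} order in the paper's convention. The paper's actual resolution is not a cleverer estimate at top order, but the hypothesis itself: by placing three angular derivatives on the curvature in $\mathcal{R}_0$, one never needs $\nablas^4\omega$; only $\|r^2\omega\|_{\H^3}$ and $\|r^{5/2}\omega\|_{\L^2_{\ub}\H^3}$ are required, and both follow from $\Db\omega$ together with $\omega\equiv 0$ on $C_0$, exactly as you write. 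So your scheme closes, but for the reason that the regularity assumption sidesteps the top-order problem rather than because the $\Db\omega$ transport equation resolves it.
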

\begin{remark}
The last statement that $\Omega\to1$ as $\ub\to+\infty$ means that $u$ tends to the affine parameter of the incoming null geodesic as $\ub\to+\infty$. This guarantees that the null cones $C_u$ will not shrink when approaching to infinity.
\end{remark}

\subsection{Structure of the Proof}

We will prove the above Main Theorem in the remaining part of this paper.  The structure of the proof is similar to that in \cite{L-Z}. We begin the proof by defining
$$\mathcal{A}_{\varepsilon,\Delta}=\{c\ge0: \text{$c$ satisfies the following two properties for small $\varepsilon>0$ and large $\Delta>0$}\},$$
where $\varepsilon$ is a small positive parameter and $\Delta$ is a positive large constant. They will be suitably chosen in the context depending only on $\mathcal{O}_0,\mathcal{R}_0$.
\begin{itemize}
\item[(1)]The solution of vacuum Einstein equations $g$ exists in a double null foliation given by $(\ub,u)$ for $0\le\ub\le c$, $0\le u\le\varepsilon$, where $\ub=s$ on $C_0$ and $u$ on $\Cb_{\ub_*}$ defines a canonical foliation, which means the following equation is satisfied:\begin{align}\label{lastslice}
\overline{\log\Omega}=0,\quad \Deltas\log\Omega=\frac{1}{2}\divs\etab+\frac{1}{2}\left(\frac{1}{2}((\chih,\chibh)-\overline{(\chih,\chibh)})-(\rho-\overline{\rho})\right),
\end{align}
and $u=0$ on $S_{0,0}$.
\item[(2)]Written in the double null foliation given by $(\ub,u)$, $\mathcal{R}\le\Delta$, where 
\begin{align*}
\mathcal{R}=&\sup_{u}\left\{\|r^{5/2}\alpha,r^{5/2}\beta,r^{5/2}\rho,r^{5/2}\sigma,r^{3/2}\betab\|_{\L^2_{[0,\ub_*]}\H^3(u)}\right\}\\
&+\sup_{\ub}\left\{\|r^2\beta,r^2\rho,r^2\sigma,r^2\betab,r\alphab\|_{\L^2_{u}\H^3(\ub)}\right\}
\end{align*}
\end{itemize}
We will prove, for $\varepsilon>0$ sufficiently small and $\Delta$ sufficiently large, $\ub_*=\sup\mathcal{A}_{\varepsilon,\Delta}=+\infty$.

The proof is divided into following steps:
\begin{itemize}
\item[\bf Step 1.] In this step, the canonical foliation on $\Cb_0$ is constructed and therefore $\mathcal{A}_{\varepsilon,\Delta}$ is not empty. Then we assume that $\ub_*=\sup\mathcal{A}<+\infty$. 
\item[\bf Step 2.] This step is devoted to the a priori estimate. It is not hard to see $\ub_*\in\mathcal{A}_{\varepsilon,\Delta}$. We work on the space-time region $M_{\ub_*,\varepsilon}$ which corresponds to $0\le\ub\le\ub_*$, $0\le u\le\varepsilon$. and then the function $u$ restricted on $\Cb_{\ub_*}$ induces a canonical foliation. We will prove that, if $\varepsilon>0$ is sufficiently small,
$$\mathcal{R}\le C(\mathcal{O}_0,\mathcal{R}_0).$$
In particular, we can choose $\Delta$ sufficiently large such that $\mathcal{R}\le\frac{1}{4}\Delta$.
\item[\bf Step 3.]By the existence result in \cite{Luk}, We extend the solution $g$ to $0\le\ub\le\ub_*+\delta$, $0\le u\le \varepsilon+\delta'$ for $\delta,\delta'$ sufficiently small, and then construct a new optical function $u_\delta$ varying in $[0,\varepsilon]$, such that $(\ub,u_\delta)$ is a new double null foliation, and $u_\delta$ induces a canonical foliation on $\Cb_{\ub_*+\delta}$. In addition, the norms $\mathcal{R}$ expressed in the new foliation are bounded by $\frac{1}{2}\Delta$. 
\end{itemize}
Therefore, we have $\ub_*+\delta\in \mathcal{A}_{\varepsilon,\Delta}$, which contradicts to that $\ub_*=\sup\mathcal{A}_{\varepsilon,\Delta}<+\infty$. Finally, a global optical function $u$ can be constructed by a limiting argument.

We will complete Step 2 in the next section. The proof of Step 1 and Step 3, which is the construction of the canonical foliation, will be sketched in the Appendix and the reader can refers to \cite{L-Z} and the references therein for the full details.

\section{The A Priori Estimate}\label{estimate}
We define the following norms of the connection coefficients:
\begin{align*}
\mathcal{O}=&\sup_{\ub,u}\left\{r^2\|\widetilde{\Omega\tr\chi},\omega\|_{\H^3(\ub,u)}+r^{3/2}\|\chih,\eta,\etab\|_{\H^3(\ub,u)}+r\|\chibh,\tr\chi,\tr\chib,\omegab\|_{\H^3(\ub,u)}\right\}\\
&+\|r^{3/2}\chih\|_{\L^2_{[0,\ub_*]}\L^\infty_u\H^3(u)}+\|r^{3/2}\etab\|_{\L^2_{[0,\ub_*]}\L^\infty_u\H^2(u)}+\sup_u\|r^{5/2}\omega\|_{\L^2_{[0,\ub_*]}\H^3(u)}
\end{align*}

We will prove the following theorem, which is the content of Step 2.
\begin{theorem}\label{apriori}
Suppose that the assumptions in Theorem \ref{maintheorem} hold and the solution of the vacuum Einstein equations $g$ exists in $0\le\ub\le\ub_*$, $0\le u\le \varepsilon$ for some double null foliation given by two optical function $\ub,u$, and $u$ induces a cononical foliation on $\Cb_{\ub_*}$. Then if $\varepsilon$ is sufficiently small depending on $\mathcal{O}_0$ and $\mathcal{R}_0$, we have the estimates
\begin{align*}
\mathcal{O},\mathcal{R}\le C(\mathcal{O}_0,\mathcal{R}_0).
\end{align*}
\end{theorem}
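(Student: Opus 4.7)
\textbf{Proof proposal for Theorem \ref{apriori}.} The plan is a standard continuity/bootstrap scheme in which I pose bootstrap assumptions $\mathcal{O}\le \Delta$ and $\mathcal{R}\le \Delta$ (with $\Delta\ge 1$ a constant chosen at the end, depending only on $\mathcal{O}_0,\mathcal{R}_0$), and improve them to $\mathcal{O},\mathcal{R}\le C(\mathcal{O}_0,\mathcal{R}_0)$ on the slab $M_{\ub_*,\varepsilon}$ provided $\varepsilon$ is small enough. The argument splits into a hierarchy of estimates. First I control the conformal factor $\Omega$, the induced metric $\gs$ on $S_{\ub,u}$, its eigenvalues and the Gauss curvature $K$: on $C_0$ geodesicity gives $\omega\equiv 0$ and the metric evolves by $\tr\chi$; on $\Cb_{\ub_*}$ the canonical-foliation equation \eqref{lastslice} together with standard elliptic estimates for the Laplacian on $S_{\ub_*,u}$ yields $\log\Omega$; away from these two hypersurfaces, $\Omega$ is transported by $\omega,\omegab$. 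Using the bootstrap assumption on $\mathcal{R}$ together with $\varepsilon$-smallness in the $u$-direction, this gives initial control of the background geometry with multiplicative constants depending only on the data.

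Next I estimate the connection coefficients from the null structure equations, in an order dictated by the characteristic direction in which each quantity transports. I integrate along $L$ (from $C_0$) for $\chih,\tr\chi,\eta,\omegab$, and along $\Lb$ (from $\Cb_{\ub_*}$) for $\chibh,\tr\chib,\etab,\omega$, together with Hodge--theoretic elliptic estimates on each $S_{\ub,u}$: $\mathcal{D}_2(\Omega\chih)$ via the Codazzi equation, $\mathcal{D}_1(\eta,\etab)$ via \eqref{dmdub}-type formulas, and $\Deltas\omega,\Deltas\omegab$ via the $\Db\omega,D\omegab$ equations. The top order (i.e.\ $\nablas^3$) estimates require the trick the authors highlight: $\Deltas\omega=\cdots+\divs(\Omega\beta)$ gives only $o(r^{-3/2})$, which is too weak, so I use the transport equation $\Db\omega=\cdots+\Omega^2\rho$ to obtain the sharper $o(r^{-5/2})$ at the intermediate order on $C_0$ (where $\omega\equiv 0$) and then propagate. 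A parallel loss-recovery step is performed for the third-order estimate of $\chih$ through the null Codazzi equation, where the full decay of $\beta$ and $\etab$ from the Hawking-mass hierarchy is essential. The smallness in $\varepsilon$ lets me absorb all coupling with the as-yet unestablished curvature norms via Gr\"onwall on the transport equations and integrating in $u$ over an interval of length $\varepsilon$.

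The core of the argument is the weighted energy estimate for the curvature. Commuting the null Bianchi equations with $\nablas^k$, $k\le 3$, and using the commutator formulas of Lemma \ref{commutator}, I rewrite each of the three pairs
\begin{gather*}
\Db(r^{2}|\nablas^{k}\alpha|^2\,\D\mu_{\gs})+D(2r^{2}|\nablas^{k}\beta|^2\,\D\mu_{\gs})=\lot,\\
\Db(r^{2}|\nablas^{k}\beta|^2\,\D\mu_{\gs})+D(r^{2}(|\nablas^{k}\rho|^2+|\nablas^{k}\sigma|^2)\,\D\mu_{\gs})=\lot,\\
\Db(r^{2}(|\nablas^{k}\rho|^2+|\nablas^{k}\sigma|^2)\,\D\mu_{\gs})+D(r^{2}|\nablas^{k}\betab|^2\,\D\mu_{\gs})=\lot,
\end{gather*}
together with a fourth pair (unweighted in $r$) coupling $\Db\nablas^k\betab$ and $D\nablas^k\alphab$. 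Integrating over $M_{\ub_*,\varepsilon}$ and discarding precise tangential divergences yields the $\L^2_u\H^3(\ub)$ and $\L^2_{[0,\ub_*]}\H^3(u)$ parts of $\mathcal{R}$, together with the pointwise-in-slice $\H^2$ parts (via a Sobolev-in-$u$ step). Essentially all error terms are handled by Cauchy--Schwarz and extract a factor $\varepsilon^{1/2}$, so they are absorbed at the end by choosing $\varepsilon$ small.

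The main obstacle---and the content the authors emphasize---is the borderline error in the $(\rho,\sigma;\betab)$ energy identity, namely the terms $r^{2}\rho(\Omega\chih,\alphab)$ and $r^{2}\sigma(\Omega\chih\wedge\alphab)$, which carry \emph{no} $\varepsilon$-smallness and must be closed purely on the basis of the sharp decay hierarchy $\chih=o(r^{-3/2})$, $\rho,\sigma=o(r^{-5/2})$, $\alphab=O(r^{-1})$ implicit in $\mathcal{O}$ and $\mathcal{R}$. Following the schematic trilinear estimate displayed in the introduction, I split the triple product using H\"older, absorbing $\chih$ into $\|r^{3/2}\chih\|_{\L^2_{[0,\ub_*]}\L^\infty_u\H^{2}}$, $\rho$ into $\|r^{5/2}\rho\|_{\L^2_{[0,\ub_*]}\H^3(u)}$, and $\alphab$ into $\|r\alphab\|_{\L^2_u\H^3(\ub)}$; the three resulting norms are exactly three components of $\mathcal{O}$ and $\mathcal{R}$ and remain bounded by $C(\mathcal{O}_0,\mathcal{R}_0)$ because the $\L^2_\ub$ integrations produce no growth in the long direction. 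An analogous trilinear estimate closes the corresponding borderline term that appears in the third-order transport estimate for $\chih$ coming from the Codazzi equation. Once all these terms are under control, a final Gr\"onwall-in-$u$ argument and a smallness choice $\varepsilon=\varepsilon(\mathcal{O}_0,\mathcal{R}_0)$ give the improved bounds $\mathcal{O},\mathcal{R}\le C(\mathcal{O}_0,\mathcal{R}_0)$ and complete the bootstrap.
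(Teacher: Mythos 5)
Your overall architecture is right (bootstrap, geometric control, transport plus elliptic estimates for the connection coefficients, and the four weighted $\Db$--$D$ energy pairs with weight $r^2$ for the first three and unweighted for the last), and you correctly identified that the trilinear error $r^2\rho(\Omega\chih,\alphab)$ is the term that forces the introduction of the mixed norm $\|r^{3/2}\chih\|_{\L^2_{[0,\ub_*]}\L^\infty_u\H^3}$. However, there is a genuine gap, and in fact a mischaracterization, in the way you propose to close the curvature estimate.

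First, the term you single out as carrying no $\varepsilon$-smallness in fact does carry one: after distributing the $r$-weights and applying H\"older in $u$, one obtains $\varepsilon^{1/2}\|r^{3/2}\chih\|_{\L^2_{[0,\ub_*]}\L^\infty_u\H^3}\|r^{5/2}\rho,r^{5/2}\sigma\|_{\L^2_{[0,\ub_*]}\H^3(u)}\|r\alphab\|_{\L^2_u\H^3(\ub)}$. It is ``borderline'' only in the sense that the three decay rates must each be used at their exact strength or the long $\ub$-integrations diverge. Second, and more seriously, the error terms from the $\tau_2$ and $\tau_3$ identities that are quadratic in $\betab$ or $\alphab$ --- schematically $\|\widetilde{\Omega\tr\chi},\chih,\omega\|\,\|\betab\|^2$ and $\|\widetilde{\Omega\tr\chi},\chih,\omega\|\,\|\alphab\|^2$ --- carry \emph{no} $\varepsilon$-smallness at all: the $u$-integral is entirely consumed by the $\L^2_u$ norm that already appears on the left-hand side, leaving $\mathcal{O}\int_0^{\ub}(1+\ub')^{-3/2}\|r^2\betab\|^2_{\L^2_u\H^3(\ub')}\,\D\ub'$ (and similarly for $\alphab$) on the right. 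These must be absorbed by a Gr\"onwall argument \emph{in $\ub$}, not in $u$, which you do not perform; your final sentence proposes a ``Gr\"onwall-in-$u$'' step, which does not touch these terms. For that $\ub$-Gr\"onwall to close, two structural observations are indispensable and missing from your proposal: (i) in the energy identities for $\tau_2$ and $\tau_3$ the mean-free quantity $\widetilde{\Omega\tr\chi}$ must appear rather than $\Omega\tr\chi$ itself, since $\Omega\tr\chi=O(r^{-1})$ would give a divergent $\D\ub'$ integral while $\widetilde{\Omega\tr\chi}=O(r^{-2})$ yields the integrable factor $(1+\ub')^{-3/2}$; and (ii) the coefficient $\mathcal{O}$ multiplying the Gr\"onwall kernel comes exactly from the top-order $\H^3$ control of $\widetilde{\Omega\tr\chi}$, $\omega$ and $\Omega\chih$, and this particular piece of $\mathcal{O}$ must be shown to depend only on $\mathcal{O}_0,\mathcal{R}_0$ and \emph{not} on $\mathcal{R}$ --- otherwise the Gr\"onwall constant blows up with the quantity being estimated. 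This $\mathcal{R}$-independence is a delicate point that requires estimating $\nablas^3(\Omega\tr\chi)$ from the $D(\Omega\tr\chi)$ transport equation together with the Codazzi elliptic estimate and the transport-direction estimate for $\omega$ (to get the $o(r^{-5/2})$ rather than $o(r^{-3/2})$ decay), and it is the heart of why the connection estimates must be organized the way they are. Without isolating these terms, making $\widetilde{\Omega\tr\chi}$ appear, performing the $\ub$-Gr\"onwall, and verifying the $\mathcal{R}$-independence of the kernel, your proposed scheme does not close the curvature estimate.
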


The proof of the above theorem is divided into the following two propositions.
\begin{proposition}\label{connection}
Suppose that the assumptions in Theorem \ref{apriori} hold and $\mathcal{R}<\infty$. Then for $\varepsilon$ sufficiently small,
$$\mathcal{O}\le C(\mathcal{O}_0,\mathcal{R}_0,\mathcal{R}).$$
In particular, the estimates for $r^2\|\widetilde{\Omega\tr\chi},\omega\|_{\H^3(\ub,u)}$ and $r^{3/2}\|\Omega\chih\|_{\H^3(\ub,u)}$ does not depend on $\mathcal{R}$.
\end{proposition}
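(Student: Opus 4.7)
The plan is to derive all connection coefficient estimates by integrating the null structure equations along the appropriate null direction, commuting with $\nablas^i$ via Lemma \ref{commutator}, and closing via Gronwall. For each coefficient I must choose whether to transport in the $D$-direction (integrating from $\Cb_0$ where $\ub=0$) or in the $\Db$-direction (integrating from $C_0$ where $u=0$), preferring the equation that either avoids curvature or exploits the short $u$-interval $[0,\varepsilon]$ to gain smallness.

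First I handle the three quantities whose top-order bounds are claimed to be $\mathcal{R}$-independent. For $\widetilde{\Omega\tr\chi}$ I integrate the Raychaudhuri equation $D(\Omega\tr\chi)=-\frac{1}{2}(\Omega\tr\chi)^2-|\Omega\chih|^2+2\omega\Omega\tr\chi$ from $\Cb_0$; no curvature appears and, after commuting with $\nablas^i$ for $i\le 3$, the right-hand side involves only products of connection coefficients (with commutator terms $\nablas^j(\Omega\chi)\cdot\nablas^{i-j}(\Omega\tr\chi)$). For $\omega$ I integrate $\Db\omega=\Omega^2(2(\eta,\etab)-|\etab|^2-\rho)$ from $C_0$, where the geodesic foliation gives $\omega\equiv 0$ as initial data; the $\rho$-term is handled by the short $u$-integration, and for $\nablas^3\omega$ I will use the Poisson-type relation $\Deltas\omega=\cdots+\divs(\Omega\beta)$ alluded to in the introduction together with the $\Db$-transport to avoid the top-order regularity loss. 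For $\chih$ I use the $\Db$-transport equation $\Dbh(\Omega\chih)=\Omega^2(\nablas\tensor\eta+\eta\tensor\eta+\frac{1}{2}\tr\chib\,\chih-\frac{1}{2}\tr\chi\,\chibh)$ from $C_0$ at lower orders; for the top-order $\nablas^3(\Omega\chih)$ I invoke the Codazzi equation as a Hodge system $\mathcal{D}_2(\Omega\chih)=\frac{1}{2}\ds(\Omega\tr\chi)+\Omega\chih\cdot\etab-\frac{1}{2}\Omega\tr\chi\,\etab-\Omega\beta$, gaining one derivative through sphere-based elliptic estimates; the $\beta$ appearing at $\nablas^2$-level is tracked back along $\Db$-transport to its $\mathcal{R}_0$-bound on $C_0$, which is the mechanism for avoiding the full $\mathcal{R}$.

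The remaining coefficients $\eta,\etab,\chibh,\tr\chib,\omegab$ are estimated through their transport equations, with curvature now entering explicitly: $\eta$ via $D\eta=\Omega(\chi\cdot\etab-\beta)$ from $\Cb_0$; $\etab$ via $\Db\etab=\Omega(\chib\cdot\eta+\betab)$ from $C_0$; $\omegab$ via $D\omegab=\Omega^2(2(\eta,\etab)-|\eta|^2-\rho)$ from $\Cb_0$; and $\chibh,\tr\chib$ via their respective $\Db$-equations from $C_0$, the first absorbing $\alphab$ through the short $u$-integral by virtue of $\Dbh\chibh'=-\alphab$. At each step I commute up to $\nablas^3$, apply elliptic theory for $\mathcal{D}_1,\mathcal{D}_2$ when needed to reach top order, and close via Gronwall. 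The $\varepsilon$-factor generated by $\Db$-transport absorbs the linear-in-$\mathcal{R}$ contributions on the right-hand side, producing a bound of the form $C(\mathcal{O}_0,\mathcal{R}_0,\mathcal{R})$.

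The main obstacle I anticipate is the borderline behavior already foreshadowed in the introduction. In the coupled Codazzi--Raychaudhuri system the term $|\Omega\chih|^2$ in the Raychaudhuri equation and the term $-\frac{1}{2}\Omega\tr\chi\,\etab$ in the Codazzi equation must each use the \emph{full} expected decay of $\chih$ and $\etab$ with no slack. Bounding their spacetime integrals requires Cauchy--Schwarz splittings exactly of the form outlined in the introduction, where the mixed norms $\|r^{3/2}\chih\|_{\L^2_{[0,\ub_*]}\L^\infty_u\H^3}$, $\|r^{3/2}\etab\|_{\L^2_{[0,\ub_*]}\L^\infty_u\H^2}$, and $\|r^{5/2}\omega\|_{\L^2_{[0,\ub_*]}\H^3}$ included in $\mathcal{O}$ are tailored to supply one factor, while the curvature piece is supplied by the corresponding norms in $\mathcal{R}$. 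Closing these borderline inequalities without loss --- in particular verifying that every power of $r$ balances so that the $\ub$-integral converges --- will be the most delicate step of the argument.
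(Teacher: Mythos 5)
Your plan mirrors the paper's overall strategy — transport the connection coefficients in whichever null direction is advantageous, commute $\nablas$ via Lemma~\ref{commutator}, close via Gronwall, and use the Codazzi equation as a Hodge system to reach the top order for $\Omega\chih$ — but it contains a genuine structural gap: you propose to integrate the $D$-direction (i.e.\ along $\ub$) transport equations for $\eta$, $\omegab$, and the top-order $\Omega\tr\chi$ starting from $\Cb_0$, whereas the paper integrates all $\ub$-direction equations \emph{backward from the last slice} $\Cb_{\ub_*}$. This is not a matter of taste. The double-null foliation in Theorem~\ref{apriori} is anchored by the canonical foliation \eqref{lastslice} on $\Cb_{\ub_*}$, so on $\Cb_0$ the sections $S_{0,u}$ are not the affine sections $S_{0,\underline{s}}$ in which the initial data norms $\mathcal{O}_0,\mathcal{R}_0$ are expressed; you therefore have no a priori control of $\eta$, $\omegab$, or $\Omega\tr\chi$ on $\Cb_0$ in the working foliation. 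The paper's argument is specifically built around the canonical foliation equation, which yields the elliptic relations \eqref{lasteta} for $\eta$ and \eqref{lastomegab} for $\omegab$ on $\Cb_{\ub_*}$: these are the initial data for the $D$-transport. Your proposal never mentions the canonical foliation or the last slice, so the $D$-transport estimates for $\eta$, $\omegab$ and $\nablas^3(\Omega\tr\chi)$ cannot be started. A secondary but related omission is Lemma~\ref{curvatureonS}, the pointwise-in-$\ub$ curvature bounds on spheres obtained from $\Db$-transport off $C_0$, which are needed to make the elliptic estimates on $\Cb_{\ub_*}$ (and in the Codazzi step) quantitative in $\mathcal{O}_0,\mathcal{R}_0$ rather than in $\mathcal{R}$.

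A smaller point: for $\nablas^3\omega$ you invoke the Poisson relation $\Deltas\omega=\cdots+\divs(\Omega\beta)$, but the introduction raises that equation precisely to explain that it yields only $o(r^{-3/2})$, which is \emph{insufficient}. In the paper's chosen regularity (third angular derivatives of curvature), the $\Db\omega$ transport already controls $\omega$ through $\H^3$ because $\rho\in\H^3$ is available, so the Poisson route is neither needed nor helpful. Your identification of the mixed norms $\L^2_{[0,\ub_*]}\L^\infty_u\H^n$ as the device for the borderline terms is correct, but you should also note that these norms are produced by a dedicated Gronwall-type estimate (Lemma~\ref{L2LinftyH3}) exploiting the exchange of the order of integration, not by a pointwise argument.
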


\begin{proposition}\label{curvaturecompletenullcone}
If $\varepsilon$ is sufficiently small, we have
\begin{align*}
\mathcal{R}\le C(\mathcal{O}_0,\mathcal{R}_0).
\end{align*}
\end{proposition}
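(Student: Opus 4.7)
The plan is to derive weighted energy identities from the four pairs of null Bianchi equations
$(\Db\alpha,D\beta)$, $(\Db\beta,D\rho,D\sigma)$, $(\Db\rho,\Db\sigma,D\betab)$, $(\Db\betab,D\alphab)$,
using the weight $r^{2}$ for each of the first three pairs (as announced in the introduction) and a corresponding weight for the fourth pair matched to the $r^{2}\betab$-$r\alphab$ combination appearing in $\mathcal{R}$. For each pair, multiplying by the appropriate power of $r$ times the curvature component itself and symmetrising produces an identity of the schematic form $\Db(r^{a}|\phi_{1}|^{2}\D\mu_{\gs})+D(r^{a}|\phi_{2}|^{2}\D\mu_{\gs})=(\text{angular divergence on }S_{\ub,u})+\lot+(\text{borderline})$. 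Integrating over $M_{\ub_{*},\varepsilon}$, the left side reproduces exactly the $\L^{2}_{[0,\ub_{*}]}$-on-$C_{u}$ and $\L^{2}_{u}$-on-$\Cb_{\ub}$ pieces of $\mathcal{R}$ at zeroth angular order, while the $\ub=0$ and $u=0$ contributions are absorbed into $\mathcal{R}_{0}$.

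\textbf{Angular derivatives and bulk errors.} To reach $\H^{3}$, I commute each Bianchi equation with $\nablas^{k}$ for $k\le3$ via Lemma \ref{commutator}; the commutators produce terms of the form $\nablas^{j}(\Omega\chi)\cdot\nablas^{i-j}R$ or $\nablas^{j-1}K\cdot\nablas^{i-j}R$, all of which are absorbable via the connection bound $\mathcal{O}\le C(\mathcal{O}_{0},\mathcal{R}_{0},\mathcal{R})$ supplied by Proposition \ref{connection}. The generic bulk error factors as (curvature)$\cdot$(connection)$\cdot$(curvature); placing one curvature factor in $L^{\infty}(S)$ via the Sobolev embedding $\H^{2}\hookrightarrow L^{\infty}$, using $\mathcal{O}$ to estimate the connection, and pairing the remaining curvature factor in $L^{2}$ against the left hand side, yields a coefficient $C(\mathcal{O},\mathcal{R}_{0},\mathcal{R})\cdot\varepsilon^{1/2}$, the $\varepsilon^{1/2}$ being extracted by H\"older's inequality on the short $u$-interval $[0,\varepsilon]$.

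\textbf{Borderline terms.} The genuine obstacle is the triple product $-r^{2}(\rho(\Omega\chih,\alphab)+\sigma(\Omega\chih\wedge\alphab))$ from the $(\rho,\sigma)$-$\betab$ identity together with its higher angular descendants. These must be handled by the three factor Cauchy--Schwarz split
\begin{align*}
\sum_{i\le 2}\int_{0}^{\ub_{*}}\sup_{u}\|(r\nablas)^{i}(\Omega\chih)\|^{2}\D\ub'\cdot\Bigl(\int_{0}^{\varepsilon}\!\int_{0}^{\ub_{*}}\|r\rho\|^{2}\D\ub'\D u'\Bigr)^{1/2}\cdot\sup_{\ub}\Bigl(\int_{0}^{\varepsilon}\|\alphab\|^{2}\D u'\Bigr)^{1/2}
\end{align*}
sketched in the introduction, in which the three factors live respectively in the $\L^{2}_{[0,\ub_{*}]}\L^{\infty}_{u}\H^{2}$ norm of $\chih$ (controlled by $\mathcal{O}$), the $C_{u}$-part of $\mathcal{R}$ and the $\Cb_{\ub}$-part of $\mathcal{R}$. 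Crucially, none of the three factors is small on its own; all smallness must come from the $\varepsilon^{1/2}$ extracted from the middle factor. At higher angular order the $\nablas^{k}$ is distributed among the three factors in the same way, relying on the $\L^{2}_{[0,\ub_{*}]}\L^{\infty}_{u}\H^{3}$-type structure on $\chih$ and $\etab$ that was specifically built into $\mathcal{O}$.

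\textbf{Closing the argument.} Summing the four weighted energy identities across all $k\le 3$ and inserting Proposition \ref{connection} yields an inequality of the form
\begin{align*}
\mathcal{R}^{2}\le C(\mathcal{O}_{0},\mathcal{R}_{0})+\varepsilon^{1/2}\cdot P(\mathcal{O}_{0},\mathcal{R}_{0},\mathcal{R})
\end{align*}
with $P$ polynomial. A standard continuity/bootstrap argument in $\mathcal{R}$ then closes the estimate for $\varepsilon$ sufficiently small depending only on $\mathcal{O}_{0}$ and $\mathcal{R}_{0}$. The point that prevents circularity is the last clause of Proposition \ref{connection}: the bounds on the $\widetilde{\Omega\tr\chi},\omega,\Omega\chih$ pieces of $\mathcal{O}$ do not depend on $\mathcal{R}$, so the sharp $\chih$ factor used in the borderline triple product is under control before the $\mathcal{R}$-bootstrap is closed.
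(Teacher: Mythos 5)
Your overall architecture — $r^{2}$-weighted energy identities for the four Bianchi pairs, $\nablas$-commutation to reach $\H^{3}$, absorption of generic trilinear bulk errors via Proposition \ref{connection} and $\varepsilon^{1/2}$-smallness, and the three-factor Cauchy--Schwarz split for the borderline $\chih\cdot(\rho,\sigma)\cdot\alphab$ term — matches the paper's proof. But the final step is wrong: the inequality you claim, $\mathcal{R}^{2}\le C(\mathcal{O}_{0},\mathcal{R}_{0})+\varepsilon^{1/2}P(\mathcal{O}_{0},\mathcal{R}_{0},\mathcal{R})$, does not hold. The error integrals $\tau_{2}^{(i)}$, $\tau_{3}^{(i)}$ also contain the quadratic contributions $\|\widetilde{\Omega\tr\chi},\chih,\omega\|_{\H^{3}}\,\|\betab\|_{\H^{3}}^{2}$ and $\|\widetilde{\Omega\tr\chi},\chih,\omega\|_{\H^{3}}\,\|\alphab\|_{\H^{3}}^{2}$. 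When these are integrated over the bulk, the $u'$-integral exactly reconstitutes the $\L^{2}_{u}$-curvature flux on $\Cb_{\ub'}$ (no $\varepsilon$ is gained), and the remaining $\ub'$-integral runs over the unbounded interval $[0,\ub_{*}]$. What you obtain is of the schematic form
\begin{align*}
\|r^{2}\betab\|_{\L^{2}_{u}\H^{3}(\ub)}^{2}+\|r\alphab\|_{\L^{2}_{u}\H^{3}(\ub)}^{2}\lesssim \mathcal{R}_{0}+\varepsilon^{1/2}\mathcal{O}\mathcal{R}^{2}+\mathcal{O}\int_{0}^{\ub}(1+\ub')^{-3/2}\bigl(\|r^{2}\betab\|_{\L^{2}_{u}\H^{3}(\ub')}^{2}+\|r\alphab\|_{\L^{2}_{u}\H^{3}(\ub')}^{2}\bigr)\D\ub',
\end{align*}
and the last term has no $\varepsilon$ to spare. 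Closing it requires a Gronwall argument in $\ub$ over $[0,\ub_{*}]$, and the integrability factor $(1+\ub')^{-3/2}$ is available only because it is $\widetilde{\Omega\tr\chi}$, not $\Omega\tr\chi$, that appears in those error terms after the Bianchi computation — a structural point you never invoke.

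This is also where you misattribute the last clause of Proposition \ref{connection}. Gronwall produces a factor $e^{C\mathcal{O}}$ with $\mathcal{O}$ precisely the $\sup_{\ub,u}r^{3/2}\|\widetilde{\Omega\tr\chi},\chih,\omega\|_{\H^{3}}$-type piece, and that factor must be bounded independently of $\mathcal{R}$ or the argument is circular: that is the role of the last clause. The borderline $\chih\cdot(\rho,\sigma)\cdot\alphab$ term, by contrast, already carries an explicit $\varepsilon^{1/2}$ from the $\alphab$ factor, so its $\L^{2}_{[0,\ub_{*}]}\L^{\infty}_{u}\H^{3}$ bound on $\chih$ is permitted to depend on $\mathcal{R}$ (as it in fact does in the paper). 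Without the $\widetilde{\Omega\tr\chi}$ versus $\Omega\tr\chi$ distinction and the Gronwall-in-$\ub$ step, the estimate does not close.
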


\subsection{Proof of Proposition \ref{connection}}\label{SectionConnection}

It suffices  to prove the Proposition \ref{connection} under the following:
\begin{align*}\textbf{ Bootstrap Assumption:}\ \ \ \mathcal{O}\le \Delta_0\end{align*}
where $\Delta_0$ is a sufficiently large number.

Recall $r$ is the area radius and let $I(S_{\ub,u})$ be the isoperimetric constant of the spherical sections $S_{\ub,u}$, we introduce an auxiliary bootstrap assumption:
\begin{align}\label{geometricbootstrap}\frac{1}{4}\mathcal{O}_0^{-1}(1+\ub)\le r\le 4\mathcal{O}_0(1+\ub),\quad \frac{1}{4}\mathcal{O}_0^{-1}\le 2\pi I(S_{\ub,u})\le 4\mathcal{O}_0.\end{align}

\subsubsection{Premilary Lemmas}
Using the bound of the isoperimetric constant, we have the following Sobolev inequalities.
\begin{lemma}[Sobolev inequalities, see Section 5.2 of \cite{Chr}]\label{Sobolev}
Given a tangential tensorfield $\phi$, we have for $q\in(2,+\infty)$,
\begin{align*}
\|\phi\|_{\L^q(\ub,u)}\lesssim\|\phi\|_{\H^1(\ub,u)},\\
\|\phi\|_{\L^\infty(\ub,u)}\lesssim\|\phi\|_{\H^2(\ub,u)},
\end{align*}
\end{lemma}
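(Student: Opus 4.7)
These are standard Sobolev embeddings on the spherical section $S_{\ub,u}$, and the plan is simply to quote the scalar version on a 2-surface and then rewrite the result in the scale-invariant norms. Since $S_{\ub,u}$ is a topological sphere of area $4\pi r^2$ and the auxiliary bootstrap assumption \eqref{geometricbootstrap} gives uniform two-sided bounds on its isoperimetric constant $I(S_{\ub,u})$ depending only on $\mathcal{O}_0$, the classical inequalities on $S_{\ub,u}$ hold with constants independent of $(\ub,u)$ once the appropriate powers of $r$ are factored out.

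Concretely, I would proceed in three steps. First, I would reduce the tensorial statements to scalar ones via Kato's inequality $|\nablas|\phi|| \le |\nablas\phi|$, which holds almost everywhere for any tangential tensorfield $\phi$; a second application to the tensor $\nablas\phi$ handles the second derivative needed for the $\L^\infty$ endpoint. Second, I would invoke the standard Moser/isoperimetric-type argument on a 2-surface, exactly as in Section 5.2 of \cite{Chr}, to obtain, for any scalar $f$ and $q \in (2,\infty)$,
\begin{equation*}
\|f\|_{L^q(S_{\ub,u})} \le C(q)\left(r^{2/q-1}\|f\|_{L^2(S_{\ub,u})} + r^{2/q}\|\nablas f\|_{L^2(S_{\ub,u})}\right),
\end{equation*}
where $C(q)$ depends only on $q$ and on the upper bound of $I(S_{\ub,u})$. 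Fixing $q_0 > 2$ and combining this inequality with Morrey's embedding $W^{1,q_0}(S_{\ub,u}) \hookrightarrow L^\infty(S_{\ub,u})$, applied once to $f$ and once more (after a further Kato reduction) to $\nablas f$, yields the parallel $L^\infty$ inequality bounding $\|f\|_{L^\infty}$ by $r^{-1}\|f\|_{L^2}$, $\|\nablas f\|_{L^2}$ and $r\|\nablas^2 f\|_{L^2}$.

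Finally, I would translate both inequalities using the scaling identities $\|\phi\|_{\L^p(\ub,u)} = r^{-2/p}\|\phi\|_{L^p(S_{\ub,u})}$ and $\|(r\nablas)^i\phi\|_{\L^2(\ub,u)} = r^{i-1}\|\nablas^i\phi\|_{L^2(S_{\ub,u})}$, which are built into the definitions of the scale-invariant norms. This converts the classical estimates into the claimed $\|\phi\|_{\L^q(\ub,u)} \lesssim \|\phi\|_{\H^1(\ub,u)}$ and $\|\phi\|_{\L^\infty(\ub,u)} \lesssim \|\phi\|_{\H^2(\ub,u)}$. No real obstacle arises: the only nontrivial input is the uniform upper bound on the isoperimetric constant furnished by the auxiliary bootstrap \eqref{geometricbootstrap}, and the remainder is bookkeeping of the $r$-weights.
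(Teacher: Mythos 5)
Your proposal correctly reconstructs the standard argument from Section 5.2 of \cite{Chr}, which the paper cites without giving its own proof: Kato's inequality reduces the tensorial case to scalars, the isoperimetric bound supplied by \eqref{geometricbootstrap} controls the Moser-type $L^q$ estimate, and the $L^\infty$ endpoint follows by one extra application of the same machinery via a Morrey-type interpolation. The translation into the scale-invariant norms $\L^q$ and $\H^n$ via the $r$-weight identities is also correct, so this matches the route the paper intends.
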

We remark that the notation $A\lesssim B$ means $A\le C(\mathcal{O}_0,\mathcal{R}_0) B$. By the Sobolev inequalities, we can bound $|r\Omega\tr\chib|,|r\Omega\chibh|\le C(\mathcal{O}_0)$. Then it is a direct consequence that, if $\varepsilon$ is sufficiently small, \eqref{geometricbootstrap} holds with $2\mathcal{O}_0$ in the role of $\mathcal{O}_0$. Therefore \eqref{geometricbootstrap} actually holds.

Using the bounds of isoperimetric constant and Gauss curvature $K$, we have the elliptic estimates. 
\begin{lemma}[Elliptic estimates for Hodge systems, see Chapter 7 of \cite{Chr}]\label{elliptic}Suppose that
\begin{align*}\|r^2K\|_{\H^1(\ub,u)}\le C.\end{align*}

Now assume that $\theta$ is a tangential symmetric trace-free $(0,2)$ type tensorfield with
$$\divs\theta=f,$$
where $f$ is a tangential one-form. Then
\begin{align*}
\|\theta\|_{\H^3(\ub,u)}\lesssim\|rf\|_{\H^2(\ub,u)}+\|\theta\|_{\L^2(\ub,u)}.
\end{align*}

Assume that $\phi$ is a function with
\begin{equation*}
\Deltas\phi=f.
\end{equation*}
We have
\begin{align*}
\|\phi\|_{\H^3(\ub,u)} \lesssim\|r^2f\|_{\H^1(\ub,u)}+\|\overline{\phi}\|_{\L^2(\ub,u)}.
\end{align*}
Here $\overline{\phi}$ is the average of $\phi$ over $S_{\ub,u}$.
\end{lemma}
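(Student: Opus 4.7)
The plan is to reduce both elliptic estimates to a base $L^2$--first-derivative estimate on $S_{\ub,u}$ and then iterate via the commutation formulas in Lemma \ref{commutator}. First I would observe that, under the hypothesis $\|r^2K\|_{\H^1(\ub,u)} \le C$, Lemma \ref{Sobolev} gives control of $r^2K$ in every $\L^p$ norm, and in particular in $\L^\infty$; this is the only channel through which the Gauss curvature enters the argument.

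For the Hodge system $\divs\theta = f$ with $\theta$ symmetric traceless, I would use the adjoint relation ${}^*\mathcal{D}_2 = -\tfrac12\nablas\tensor$ recorded in the text and integrate by parts on $S_{\ub,u}$ to obtain the Bochner-type identity
\begin{equation*}
\int_{S_{\ub,u}}|\nablas\theta|^2\,\D\mu_{\gs} + 2\int_{S_{\ub,u}}K|\theta|^2\,\D\mu_{\gs} = 2\int_{S_{\ub,u}}|f|^2\,\D\mu_{\gs}.
\end{equation*}
Converting to the scale-invariant norms and absorbing the $K$-term using $\|r^2K\|_{\L^\infty}$ yields $\|r\nablas\theta\|_{\L^2(\ub,u)}\lesssim \|rf\|_{\L^2(\ub,u)} + \|\theta\|_{\L^2(\ub,u)}$. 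To upgrade to $\H^3$, I would apply $\nablas^k$ ($k=1,2$) to the equation and invoke Lemma \ref{commutator} to write
\begin{equation*}
\divs(\nablas^k\theta) = \nablas^k f + \sum_{j=1}^k \nablas^{j-1}K\cdot\nablas^{k-j}\theta,
\end{equation*}
then project onto the symmetric trace-free part and feed the result back into the base estimate. An induction on $k$, together with H\"older and Lemma \ref{Sobolev} for the commutator terms, produces $\|\theta\|_{\H^3}\lesssim \|rf\|_{\H^2}+\|\theta\|_{\L^2}$.

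For the Poisson equation $\Deltas\phi = f$, I would set $\xi = \nablas\phi$, which satisfies the first-order Hodge system $\mathcal{D}_1\xi = (f,0)$, and use the companion Bochner identity for one-forms,
\begin{equation*}
\int_{S_{\ub,u}}|\nablas\xi|^2\,\D\mu_{\gs} + \int_{S_{\ub,u}}K|\xi|^2\,\D\mu_{\gs} = \int_{S_{\ub,u}}\bigl(|\divs\xi|^2 + |\curls\xi|^2\bigr)\,\D\mu_{\gs}.
\end{equation*}
The same iteration as above now controls $\nablas\phi$ in $\H^2$ by $\|r^2f\|_{\H^1}+\|\xi\|_{\L^2}$; for the missing lowest-order piece I would split $\phi = \bar\phi + (\phi-\bar\phi)$ and apply Poincar\'e on $S_{\ub,u}$ to bound $\|\phi-\bar\phi\|_{\L^2}\lesssim \|r\nablas\phi\|_{\L^2}$, which closes the estimate with $\|\bar\phi\|_{\L^2}$ as the only remainder.

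The main obstacle will be the top-order commutator term, where $\nablas^0 K = K$ is paired with $\nablas^{k-1}\theta$: since $K$ itself is only bounded in $\H^1$, one must place it in a suitable $\L^p$ via Lemma \ref{Sobolev} without losing the full derivative on $\theta$, and must check that the $r$-weights in the scale-invariant norms match up so that no factor blowing up in $r$ appears at infinity. This tightness is exactly why the hypothesis is stated as a bound on $\|r^2K\|_{\H^1}$, and verifying it carefully is the one nontrivial piece of accounting in the induction.
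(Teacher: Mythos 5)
The paper does not give a proof of this lemma; it cites Chapter~7 of Christodoulou's book for it. Your Bochner-identity-plus-iteration strategy is in fact the standard route taken there, so the approach is the right one. The constants you wrote in the two Bochner identities (coefficient $2$ on the $K|\theta|^2$ term for symmetric traceless $\mathcal{D}_2$, coefficient $1$ for the one-form $\mathcal{D}_1$) are correct, the reduction of $\Deltas\phi=f$ to the Hodge system $\mathcal{D}_1(\nablas\phi)=(f,0)$ plus Poincar\'e is correct, and keeping the $\|\theta\|_{\L^2}$ term on the right is exactly how one avoids invoking positivity of $K$, as the paper's own remark after the lemma emphasizes.

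There is, however, a genuine misstatement that you should fix, and it is not cosmetic because you later contradict it. You assert that the hypothesis $\|r^2K\|_{\H^1}\le C$, via Lemma~\ref{Sobolev}, controls $r^2K$ ``in every $\L^p$ norm, and in particular in $\L^\infty$,'' and you then propose to absorb the Bochner $K$-term using $\|r^2K\|_{\L^\infty}$. That is false: the paper's Sobolev lemma gives $\H^1\hookrightarrow\L^q$ only for $2<q<\infty$, while $\L^\infty$ requires $\H^2$. Indeed your own final paragraph concedes that ``$K$ itself is only bounded in $\H^1$'' and that placing it in a suitable $\L^p$ is the delicate point, which is inconsistent with the opening claim. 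The correct bookkeeping is to use H\"older with finite exponents: in the base step estimate $\int_S r^{-2}(r^2K)|\theta|^2$ via $\L^2\times\L^4\times\L^4$ and bound $\|\theta\|_{\L^4}$ by $\|\theta\|_{\H^1}$, absorbing with Young's inequality; and in the commutator terms $K\cdot\nablas^{k-1}\theta$ and $\nablas^{j-1}K\cdot\nablas^{k-j}\theta$, split the available derivatives so that $r^2K$ sits in $\L^4$ or $\L^2$ (never $\L^\infty$) while the $\theta$-factor sits in the complementary $\L^4$ or $\L^\infty$, which \emph{is} available because at step $k$ you may assume $\theta\in\H^{k}$ and $\H^2\hookrightarrow\L^\infty$. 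With that correction your induction closes, and the argument matches the cited source.
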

\begin{remark}
We remark we do not need to estimate $\|\phi\|_{\L^2(\ub,u)}$ first to apply the second part of the above lemma, which is not so in applying the first part. We avoid using the positivity of the Gauss curvature $K$.
\end{remark}

We also have the following Gronwall type estimates:
\begin{lemma}[Gronwall type estimates, see Chapter 4 of \cite{Chr} or Chapter 4 of \cite{K-N}]\label{evolution}
Suppose that there exists a constant $c$ such that $r^{3/2}|\Omega\chih|,r^{3/2}|\widetilde{\Omega\tr\chi}|\le c$. Then for an $s$-covariant tengential tensorfield $\phi$, $2\le q\le+\infty$, and any real $\nu$, we have
\begin{align*}
\|r^{s-\nu}\phi\|_{\L^q(\ub,u)}\lesssim C_{q,\nu,s}\left(\|r^{s-\nu}\phi\|_{\L^q(\ub_*,u)}+\int_{\ub}^{\ub_*}\|r^{s-\nu}(D\phi-\frac{\nu}{2}\Omega\tr\chi\phi)\|_{\L^q(\ub',u)}\D\ub'\right),
\end{align*}
and for $\varepsilon$ sufficiently small, we have
\begin{align*}
\|\phi\|_{\L^q(\ub,u)}
\lesssim C_{q}\left(\|\phi\|_{\L^q(\ub,0)}+\|\Db\phi\|_{\L_u^1\L^q(\ub)}\right).
\end{align*}
\end{lemma}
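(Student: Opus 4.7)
For both inequalities I would differentiate the relevant $L^q$ norm along the appropriate null direction, isolate a main term that reproduces the right-hand side, and control the residue by Gronwall. The crucial algebraic feature, valid for any integer $s$-tensor and any real $\nu$, is that after cancellation the error coefficient in front of $|\phi|^q$ should involve only $\Omega\chih$ and $\widetilde{\Omega\tr\chi}=\Omega\tr\chi-\overline{\Omega\tr\chi}$, both of which are bounded by $cr^{-3/2}$ by hypothesis. Since $r\sim 1+\ub$ from the auxiliary bootstrap \eqref{geometricbootstrap}, $r^{-3/2}$ is integrable in $\ub$ uniformly in $\ub_*$, and this is what allows the Gronwall exponent to remain finite no matter how large $\ub_*$ grows.

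\textbf{First estimate.} Starting from $\|r^{s-\nu}\phi\|^q_{\L^q(\ub,u)}=\int_{S_{\ub,u}}r^{-2+q(s-\nu)}|\phi|^q\,d\mu_{\gs}$ (recall $r=r(\ub,u)$ is the area radius and depends only on $(\ub,u)$), I would expand the $\ub$-derivative using
\begin{align*}
Dr=\tfrac{r}{2}\overline{\Omega\tr\chi},\qquad D(d\mu_{\gs})=\Omega\tr\chi\,d\mu_{\gs},\qquad D|\phi|^2=2(\phi,D\phi)-s\,\Omega\tr\chi\,|\phi|^2+O(|\Omega\chih|)|\phi|^2,
\end{align*}
the first from the definition $4\pi r^2=\mathrm{Area}(S_{\ub,u})$, and the third from $D\gs^{AB}=-2\Omega\chi^{AB}$ applied to the $s$ metric contractions in $|\phi|^2$ after splitting $\chi=\chih+\tfrac12\tr\chi\,\gs$. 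Collecting terms, the integrand decomposes as
\begin{align*}
D\bigl(r^{-2+q(s-\nu)}|\phi|^q d\mu_{\gs}\bigr)=qr^{-2+q(s-\nu)}|\phi|^{q-2}\bigl(\phi,\,D\phi-\tfrac{\nu}{2}\Omega\tr\chi\,\phi\bigr)d\mu_{\gs}+E,
\end{align*}
where the coefficient of $r^{-2+q(s-\nu)}|\phi|^q d\mu_{\gs}$ in $E$ equals $(1-\tfrac{q(s-\nu)}{2})\widetilde{\Omega\tr\chi}+O(|\Omega\chih|)$: the bare $\Omega\tr\chi$ pieces arising from $D(d\mu_{\gs})$ and from the trace part of $D|\phi|^2$ cancel against the $\overline{\Omega\tr\chi}$ coming from $Dr$, leaving only the mean-zero remainder. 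H\"older on the main term, together with the pointwise bound $cr^{-3/2}$ on the error, gives
\begin{align*}
\Bigl|\tfrac{d}{d\ub}\|r^{s-\nu}\phi\|_{\L^q}\Bigr|\le \|r^{s-\nu}(D\phi-\tfrac{\nu}{2}\Omega\tr\chi\,\phi)\|_{\L^q}+Cr^{-3/2}\|r^{s-\nu}\phi\|_{\L^q},
\end{align*}
and integrating backwards from $\ub_*$ via Gronwall yields the stated bound with $C_{q,\nu,s}$ a constant multiple of $\exp\bigl(C\int_0^\infty(1+\ub')^{-3/2}d\ub'\bigr)$.

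\textbf{Second estimate.} The same procedure with $\Db$ in place of $D$ (and no weight in $r$) produces a main term $q|\phi|^{q-2}(\phi,\Db\phi)r^{-2}d\mu_{\gs}$ plus a residue bounded by $(|\Omega\tr\chib|+|\Omega\chibh|)r^{-2}|\phi|^q d\mu_{\gs}$. The bootstrap $r\|\chibh,\tr\chib\|_{\H^3}\le\Delta_0$ combined with Lemma \ref{Sobolev} gives $|\Omega\chib|\lesssim 1/r$ pointwise, so on $u\in[0,\varepsilon]$ the Gronwall factor is $\exp\bigl(C\int_0^\varepsilon r^{-1}du'\bigr)=1+O(\varepsilon)$ (using $r\gtrsim 1$ from \eqref{geometricbootstrap}), absorbed into $C_q$ once $\varepsilon$ is small.

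\textbf{Main obstacle.} The delicate step is the cancellation producing only $\widetilde{\Omega\tr\chi}$ and $\Omega\chih$ in the residue of the first estimate; had a term proportional to bare $\Omega\tr\chi\sim r^{-1}$ survived, the resulting Gronwall weight $\exp\bigl(C\int_0^{\ub_*}r^{-1}d\ub'\bigr)\sim(1+\ub_*)^C$ would blow up as $\ub_*\to+\infty$ and destroy the uniformity on which Step 2 of the outer bootstrap depends. The specific modifier $-\tfrac{\nu}{2}\Omega\tr\chi\,\phi$ inside the flux, paired with the weight $r^{s-\nu}$, is calibrated precisely to realize this cancellation, and the freedom in the real parameter $\nu$ lets the lemma serve as a flexible tool when applied to the various null structure equations in later sections.
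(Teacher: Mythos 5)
Your argument is correct and is the standard derivation one finds in the cited references (Chapter 4 of Christodoulou's monograph and of Klainerman--Nicol\`o); the paper itself only cites those sources rather than reproducing the proof, and your reconstruction fills that in accurately. In particular, the coefficient computation is right: writing $\tfrac12(-2+q(s-\nu))\overline{\Omega\tr\chi}$ from $Dr$, $-\tfrac{qs}{2}\Omega\tr\chi$ from $D|\phi|^2$, $+\Omega\tr\chi$ from $D(\D\mu_{\gs})$, and $+\tfrac{q\nu}{2}\Omega\tr\chi$ from completing the flux term, the residue collapses to $(1-\tfrac{q(s-\nu)}{2})(\Omega\tr\chi-\overline{\Omega\tr\chi})+O(|\Omega\chih|)$ exactly as you claim, and this is what lets the Gronwall exponent $\exp\bigl(C\int_{\ub}^{\ub_*}r^{-3/2}\D\ub'\bigr)$ stay uniformly bounded in $\ub_*$. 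The only imprecision is inherited from the lemma's statement rather than introduced by you: the second (\,$\Db$-direction) estimate tacitly needs pointwise control of $\Omega\chibh$ and $\Omega\tr\chib$, which the hypotheses as written do not supply; you correctly pull that bound from the bootstrap assumption and Lemma \ref{Sobolev}, which is how the lemma is in fact used in Section \ref{SectionConnection}.
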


In addition, we have the Gronwall type estimate for the derivatives along $\Db$ direction:
\begin{lemma}\label{Gronwallderivative}
For $\varepsilon$ sufficiently small, we have, for $n=2,3$,
\begin{align*}
\|\phi\|_{\H^n(\ub,u)}
\lesssim \|\phi\|_{\H^n(\ub,0)}+\|\Db\phi\|_{\L_u^1\H^n(\ub)}.
\end{align*}
\end{lemma}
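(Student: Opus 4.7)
The plan is to lift the scalar Gronwall estimate of Lemma \ref{evolution} to $\H^n$ by commuting $i$ angular derivatives ($0 \le i \le n$) through $\Db$ and then applying the scalar estimate to each $(r\nablas)^i\phi$. All error terms produced by the commutators will be controlled by the bootstrap bound $\mathcal{O} \le \Delta_0$ on the connection coefficients, and absorbed at the end by the smallness of $\varepsilon$ from the $u$-integration.

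Concretely, I would first use the commutation formula from Lemma \ref{commutator},
\begin{align*}
[\Db,\nablas^i]\phi = \sum_{j=1}^i \nablas^j(\Omega\chib)\cdot\nablas^{i-j}\phi,
\end{align*}
together with the transport identity for $r$, to obtain the schematic relation
\begin{align*}
\Db\bigl((r\nablas)^i\phi\bigr) = (r\nablas)^i(\Db\phi) + r^i\sum_{j=1}^i \nablas^j(\Omega\chib)\cdot\nablas^{i-j}\phi + r^i\,\Omega\tr\chib\cdot\nablas^i\phi.
\end{align*}
The last term, by the pointwise bound $|r\Omega\tr\chib|\lesssim 1$ that follows from $r\|\tr\chib\|_{\H^3(\ub,u)}\le\mathcal{O}\le\Delta_0$ and Lemma \ref{Sobolev}, is scale-invariantly comparable to $\|(r\nablas)^i\phi\|_{\L^2}$ and is therefore harmless.

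Second, I would apply Lemma \ref{evolution} with $q=2$ to $(r\nablas)^i\phi$ for each $0\le i\le n$, sum over $i$, and substitute the commutator expansion above. This yields
\begin{align*}
\|\phi\|_{\H^n(\ub,u)}\lesssim \|\phi\|_{\H^n(\ub,0)}+\|\Db\phi\|_{\L^1_u\H^n(\ub)} + \sum_{i=1}^n\sum_{j=1}^i\bigl\|r^i\nablas^j(\Omega\chib)\cdot\nablas^{i-j}\phi\bigr\|_{\L^1_u\L^2(\ub)}.
\end{align*}
The remaining task is to absorb the commutator double sum. For each $(i,j)$ I would split by H\"older, putting the factor with more derivatives in $\L^2$ and controlling the other in $\L^\infty$ via the Sobolev embedding $\H^2\hookrightarrow\L^\infty$ of Lemma \ref{Sobolev}; the connection factor is controlled by $r\|\chib\|_{\H^3(\ub,u)}\le\mathcal{O}\le\Delta_0$. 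The result is
\begin{align*}
\|r^i\nablas^j(\Omega\chib)\cdot\nablas^{i-j}\phi\|_{\L^2(\ub,u')}\lesssim \|\phi\|_{\H^n(\ub,u')}.
\end{align*}
Integrating against $du'\in[0,u]\subset[0,\varepsilon]$ and then taking a supremum in $u$ on the left, the commutator contribution is bounded by $\varepsilon\,C(\mathcal{O}_0,\Delta_0)\sup_{u'\in[0,\varepsilon]}\|\phi\|_{\H^n(\ub,u')}$; choosing $\varepsilon$ small enough absorbs it into the left-hand side, yielding the stated inequality.

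The main delicate point is the top-order commutator, with $i=n$ and $j=n$, namely $r^n\nablas^n(\Omega\chib)\cdot\phi$. Here $\nablas^n\chib$ sits at the highest regularity captured by $\mathcal{O}$, and is therefore only available in $\L^2$; consequently $\phi$ itself must be placed in $\L^\infty$, which is precisely why the hypothesis $n\ge 2$ is needed so that $\H^n\hookrightarrow\L^\infty$ via Lemma \ref{Sobolev}. For $n=2,3$ this closes cleanly, while $n=4$ would exceed the regularity of $\chib$ contained in $\mathcal{O}$; this is consistent with the range stated in the lemma.
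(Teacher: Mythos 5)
Your proof is correct and takes essentially the same route as the paper: commute $\nablas^i$ ($i\le n$) through $\Db$ using Lemma \ref{commutator}, apply the scalar $\Db$-Gronwall estimate of Lemma \ref{evolution} with $q=2$ to each piece, estimate the commutator terms $\nablas^j(\Omega\chib)\cdot\nablas^{i-j}\phi$ via H\"older and the Sobolev embedding $\H^2\hookrightarrow\L^\infty$ (the paper packages this as $\|\phi_1\cdot\phi_2\|_{\H^n}\lesssim\|\phi_1\|_{\H^n}\|\phi_2\|_{\H^n}$ for $n\ge 2$), and absorb the resulting $\varepsilon\sup_u\|\Omega\chib\|_{\H^n}\sup_u\|\phi\|_{\H^n}$ term by taking $\varepsilon$ small. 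Your explicit tracking of the $r$-weights (the extra $r^i\Omega\tr\chib\cdot\nablas^i\phi$ term from $\Db r$) and the comment on why $n\ge2$ is needed are additional bookkeeping the paper leaves implicit, but the underlying argument is identical.
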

\begin{proof}
We apply the above Gronwall estimate on the following equation (for $1\le i\le3$)
\begin{align*}
\Db\nablas^i\phi=\nablas^i\Db\phi+[\Db,\nablas^i]\phi=\nablas^i\Db\phi+\sum_{j=1}^i\nablas^j(\Omega\chib)\nablas^{i-j}\phi.
\end{align*}
By H\"older inequality and Sobolev inequalies, we have
\begin{align*}
\|\phi_1\cdot\phi_2\|_{\H^i(\ub,u)}\lesssim\|\phi_1\|_{\H^i(\ub,u)}\|\phi_2\|_{\H^i(\ub,u)}
\end{align*}
for $i\ge2$. Therefore,
\begin{align*}
\|\phi\|_{\H^n(\ub,u)}
\lesssim \|\phi\|_{\H^n(\ub,0)}+\|\Db\phi\|_{\L_u^1\H^n(\ub)}+\varepsilon\sup_u\|\Omega\chib\|_{\H^n(\ub,u)}\sup_u\|\phi\|_{\H^n(\ub,u)}.
\end{align*}
The conclusion follows by choosing $\varepsilon$ sufficiently small.
\end{proof}
We remark that the above lemma holds with $r^\mu\phi$ in the role of $\phi$, $r^\mu\Db\phi$ in the role of $\Db\phi$, for any $\mu$ with the right hand side being finite, and then the constant depends on $\mu$.

We also have the following:
\begin{lemma}\label{L2LinftyH3}
For $\varepsilon$ sufficiently small, we have, for $n=2$,
\begin{align*}
\|r^\mu\phi\|_{\L^2_{[\ub_1,\ub_2]}\L^\infty_{[0,u]}\H^n}\lesssim_\mu \|r^\mu\phi\|_{\L^2_{[\ub_1,\ub_2]}\H^n(0)}+\varepsilon^{1/2}\left(\int_0^u\int_{\ub_1}^{\ub_2}\|r^\mu\Db\phi\|_{\H^2(\ub,u')}^2\D u'\D\ub\right)^{1/2}.
\end{align*}
for any $\mu$ with the right hand side being finite.
\end{lemma}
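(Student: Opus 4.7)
The plan is to reduce this to a pointwise-in-$\ub$ Gronwall estimate of the type established in Lemma \ref{Gronwallderivative}, followed by Cauchy-Schwarz in the $u$-variable and integration in $\ub$. The key observation is that Lemma \ref{Gronwallderivative} (together with the remark immediately following it, which allows replacing $\phi$ by $r^\mu\phi$ and $\Db\phi$ by $r^\mu\Db\phi$) yields, for each fixed $\ub\in[\ub_1,\ub_2]$ and each $u'\in[0,u]$,
\begin{equation*}
\|r^\mu\phi\|_{\H^2(\ub,u')}\lesssim_\mu \|r^\mu\phi\|_{\H^2(\ub,0)}+\int_0^{u'}\|r^\mu\Db\phi\|_{\H^2(\ub,u'')}\D u'',
\end{equation*}
with constants uniform in $u'\in[0,\varepsilon]$ provided $\varepsilon$ is small.

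Next, I would take the supremum over $u'\in[0,u]$ on both sides; since the first term on the right is independent of $u'$ and the second is monotone in $u'$, the supremum is bounded by the same expression evaluated at $u$. I then apply Cauchy-Schwarz to the $u''$-integral:
\begin{equation*}
\int_0^{u}\|r^\mu\Db\phi\|_{\H^2(\ub,u'')}\D u''\le \varepsilon^{1/2}\left(\int_0^u\|r^\mu\Db\phi\|_{\H^2(\ub,u'')}^2\D u''\right)^{1/2}.
\end{equation*}
Squaring the resulting inequality, multiplying by $r^{-1}$ (which is uniformly bounded on our region by the geometric bootstrap \eqref{geometricbootstrap}), and integrating in $\ub$ over $[\ub_1,\ub_2]$ yields
\begin{equation*}
\int_{\ub_1}^{\ub_2}r^{-1}\sup_{u'\in[0,u]}\|r^\mu\phi\|_{\H^2(\ub,u')}^2\D\ub\lesssim_\mu \|r^\mu\phi\|_{\L^2_{[\ub_1,\ub_2]}\H^2(0)}^2 + \varepsilon\int_0^u\int_{\ub_1}^{\ub_2}\|r^\mu\Db\phi\|_{\H^2(\ub,u'')}^2\D\ub\D u'',
\end{equation*}
after absorbing the bounded factor $r^{-1}$ in the second term into the implicit constant. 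Taking square roots recovers the claimed estimate.

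There is no genuine obstacle here; the only points requiring minor care are (i) that the implicit constant in the Gronwall bound from Lemma \ref{Gronwallderivative} picks up a dependence on $\mu$ because of the $[\Db,r^\mu]$ commutator, and (ii) that the $r$-weights may be freely pulled across the $u$-direction because $r$ is comparable to $1+\ub$ uniformly for $u\in[0,\varepsilon]$ by \eqref{geometricbootstrap}. Both issues are routine once one has already controlled the commutator terms $[\Db,\nablas^i]\phi$ via Lemma \ref{commutator}, which is precisely what the proof of Lemma \ref{Gronwallderivative} does. The gain of the factor $\varepsilon^{1/2}$ on the right-hand side is what makes this lemma useful when closing the bootstrap in the next subsection.
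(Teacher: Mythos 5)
Your proof is correct and follows essentially the same route as the paper's: apply Lemma \ref{Gronwallderivative} pointwise, take the supremum in $u'$, then Cauchy--Schwarz the $u$-integral to extract the $\varepsilon^{1/2}$ factor, square, and integrate over $\ub$. The paper folds the $r^{-1}$ from the norm into the weight by working with $r^{\mu-1/2}\phi$ rather than invoking $r^{-1}\lesssim 1$ from \eqref{geometricbootstrap}, but this is a cosmetic rather than a substantive difference.
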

\begin{proof}
By Lemma \ref{Gronwallderivative}, we have
\begin{align*}
\|r^{\mu-1/2}\phi\|_{\H^n(\ub,u)} \lesssim_{\mu} \|r^{\mu-1/2}\phi\|_{\H^n(\ub,0)}+\|r^{\mu-1/2}\Db\phi\|_{\L_u^1\H^n(\ub)}.
\end{align*}
Notice that the right hand side does not depend on $u$, we then take supremum with respect to $u$, and integrate the square of both sides over $[\ub_1,\ub_2]$, we obtain
\begin{align*}
\int_{\ub_1}^{\ub_2}\sup_u\|r^{\mu-1/2}\phi\|^2_{\H^n(\ub,u)}\D\ub\lesssim_{\mu}\int_{\ub_1}^{\ub_2}\|r^{\mu-1/2}\phi\|^2_{\H^n(\ub,0)}\D\ub+\int_{\ub_1}^{\ub_2}\left(\int_0^\varepsilon\|r^{\mu-1/2}\Db\phi\|_{\H^n(\ub,u)}\D u\right)^2\D\ub.
\end{align*}
The second term in the right hand side is estimated by
\begin{align*}
\int_{\ub_1}^{\ub_2}\left(\int_0^\varepsilon\|r^{\mu-1/2}\Db\phi\|_{\H^n(\ub,u)}\D u\right)^2\D\ub\le&\varepsilon\int_{\ub_1}^{\ub_2}\int_0^\varepsilon\|r^{\mu-1/2}\Db\phi\|^2_{\H^n(\ub,u)}\D u\D\ub
\\=&\varepsilon\int_0^\varepsilon\int_{\ub_1}^{\ub_2}\|r^{\mu-1/2}\Db\phi\|^2_{\H^n(\ub,u)}\D \ub\D u.
\end{align*}
The conclusion then follows.
\end{proof}
It is crucial here we can change the order of the double integration of the second term on the right hand side. But, in this paper, we only need to estimate
\begin{align*}
\varepsilon^{1/2}\left(\int_0^u\int_{\ub_1}^{\ub_2}r^{-1}\|r^\mu\Db\phi\|_{\H^2(\ub,u')}^2\D u'\D\ub\right)^{1/2}\le\varepsilon\sup_u\|r^\mu\Db\phi\|_{\L^2_{[\ub_1,\ub_2]}\H^2(u')}.
\end{align*} 
Notice that $\|r^\mu\phi\|_{\L^2_{[\ub_1,\ub_2]}\H^n(u)}\le\|r^\mu\phi\|_{\L^2_{[\ub_1,\ub_2]}\L^\infty_u\H^n}$ and therefore we have
\begin{align*}
\|r^\mu\phi\|_{\L^2_{[\ub_1,\ub_2]}\H^n(u)}\lesssim_\mu \|r^\mu\phi\|_{\L^2_{[\ub_1,\ub_2]}\H^n(0)}+\varepsilon\sup_u\|r^\mu\Db\phi\|_{\L^2_{[\ub_1,\ub_2]}\H^2(u)}^2.
\end{align*}
We will also call the above two lemmas the Gronwall type estimates.

We first establish the following lemma, which says that the geometric quantities share the same estimates up to a multiple by $\Omega$:
\begin{lemma}If $\varepsilon>0$ is sufficiently small, then
\begin{align*}
 &\|(r\nablas)^{\le1}\log\Omega\|_{\L^\infty(\ub,u)}+ \|(r\nablas)^2\log\Omega\|_{\L^4(\ub,u)}+\|(r\nablas)^{3}\log\Omega\|_{\L^2(\ub,u)}\le r^{-1/2}C(\mathcal{O}_0,\mathcal{R}_0).
\end{align*}
In particular, we have $C(\mathcal{O}_0,\mathcal{R}_0)^{-1}\le r^{1/2}(\Omega-1)\le C(\mathcal{O}_0,\mathcal{R}_0)$. \end{lemma}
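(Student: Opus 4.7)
The plan is to establish the estimate on the last slice $\Cb_{\ub_*}$ using the Poisson equation supplied by the canonical foliation condition, and then to back-propagate in $\ub$ using the transport equation $D\log\Omega = \omega$.

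\textbf{Step 1 (last slice).} On $\Cb_{\ub_*}$ the condition \eqref{lastslice} reads $\Deltas\log\Omega = f$ with $\overline{\log\Omega}=0$, where
$$f = \tfrac{1}{2}\divs\etab + \tfrac{1}{4}\big((\chih,\chibh) - \overline{(\chih,\chibh)}\big) - \tfrac{1}{2}(\rho - \overline{\rho}).$$
The bootstrap hypothesis $\mathcal{O} \le \Delta_0$ furnishes $\|\etab\|_{\H^3}, \|\chih\|_{\H^3} \lesssim r^{-3/2}$ and $\|\chibh\|_{\H^3} \lesssim r^{-1}$ pointwise in $(\ub_*, u)$; combined with the corresponding bound on $\rho$ from $\mathcal{R}$, a direct scaling computation in the scale-invariant norms yields $\|r^2 f\|_{\H^1(\ub_*, u)} \lesssim r(\ub_*)^{-1/2}$. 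The elliptic estimate from Lemma \ref{elliptic} together with $\overline{\log\Omega}=0$ then gives
$$\|\log\Omega\|_{\H^3(\ub_*, u)} \lesssim r(\ub_*)^{-1/2}.$$

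\textbf{Step 2 (back-propagation).} For $\ub \in [0, \ub_*]$ and $i \le 3$, commute $\nablas^i$ with $D$ using Lemma \ref{commutator}:
$$D(\nablas^i \log\Omega) = \nablas^i \omega + \sum_{j=1}^i \nablas^j(\Omega\chi) \cdot \nablas^{i-j}\log\Omega.$$
Since the top order $\nablas^i \log\Omega$ is absent from the right-hand side, the system is lower-triangular and can be solved iteratively via Lemma \ref{evolution}. The direct term contributes
$$\int_\ub^{\ub_*} \|\nablas^i \omega\|_{\L^q(\ub', u)}\,d\ub' \lesssim \int_\ub^{\ub_*} r^{-2}(\ub')\,d\ub' \lesssim r^{-1}(\ub),$$
using the bootstrap bound $r^2\|\omega\|_{\H^3} \lesssim 1$; this is subdominant to the boundary term $r(\ub_*)^{-1/2}$. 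The commutator terms $\nablas^j(\Omega\chi) \cdot \nablas^{i-j}\log\Omega$ are handled via $\|\chi\|_{\H^3} \lesssim r^{-1}$ from $\mathcal{O}$ and the inductive bound on $\nablas^{\le i-1}\log\Omega$, producing contributions of order $r^{-1/2}$ after integration in $\ub$. Assembling gives $\|\log\Omega\|_{\H^3(\ub, u)} \lesssim r^{-1/2}(\ub)$, from which the stated mixed norms follow by Sobolev (Lemma \ref{Sobolev}) and the pointwise bound on $\Omega$ follows from $\Omega - 1 = \log\Omega + O((\log\Omega)^2)$.

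\textbf{Main obstacle.} The substantive point is the sharp decay check in Step 1: the $r^{-1/2}$ decay at the $\H^3$ level is available only because the decay built into $\mathcal{O}$ (the $r^{3/2}$-weight on $\chih$ and $\etab$, the $r$-weight on $\chibh$) and into $\mathcal{R}$ (the $r^{5/2}$-weight on $\rho$) is precisely sharp — any further weakening of these weights would invalidate the elliptic bound. Step 2 is routine because $\omega$ decays like $r^{-2}$, which is integrable in $\ub$ with margin to spare, and the commutator structure is lower-triangular. A minor additional subtlety is extracting a pointwise-in-$u$ $\H^1$ bound for $\rho$ on $\Cb_{\ub_*}$ from the $\L^2_u$-norm in $\mathcal{R}$; this is handled using the null Bianchi equations and associated $\L^\infty_u$ estimates developed elsewhere in the bootstrap.
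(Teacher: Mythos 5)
Your approach is structurally different from the paper's, and it contains a genuine gap in Step 2.

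\textbf{Comparison of routes.} The paper applies the elliptic estimate \emph{only} at $S_{\ub_*,0}$ (where the $\mathcal{R}_0$ norms give pointwise control of $\rho$ directly, with no need for Lemma \ref{curvatureonS}). It then observes that $\Omega$ is constant along the null generators of $C_0$ (because the $\ub$-foliation on $C_0$ is the given affine one, so $\omega\equiv 0$ there), which transports the $S_{\ub_*,0}$ estimate to every $S_{\ub,0}$ for free. Finally it propagates in the $u$-direction via $\Db\log\Omega=\omegab$, where the integration interval has length $\varepsilon$ and the contribution is $\varepsilon r^{-1}\Delta_0$, killed by choosing $\varepsilon$ small. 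You instead apply the elliptic estimate on the \emph{whole} last slice $\Cb_{\ub_*}$, then back-propagate in $\ub$ via $D\log\Omega=\omega$.

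\textbf{The gap.} Your $\ub$-integration has no $\varepsilon$-smallness to exploit, and at this stage of the argument the only available bound on $\omega$ is the bootstrap one $r^2\|\omega\|_{\H^3}\le\Delta_0$ (the improved bound $C(\mathcal{O}_0)r^{-2}$ is established \emph{after} this lemma, using $\Db\omega=\cdots$). With the correct bookkeeping the source term in the evolution lemma is
\begin{align*}
\int_{\ub}^{\ub_*}\|r^{i+1/2}\nablas^i\omega\|_{\L^q(\ub',u)}\,\D\ub'
\lesssim \Delta_0\int_{\ub}^{\ub_*}r^{-3/2}\,\D\ub'\ \sim\ \Delta_0\, r^{-1/2}(\ub),
\end{align*}
not $r^{-1}$ as you claim. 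This is of exactly the same $r$-order as the target but carries the large bootstrap constant $\Delta_0$, so the conclusion you would reach is $\|\log\Omega\|_{\H^3}\lesssim(\mathcal{O}_0,\mathcal{R}_0,\Delta_0)\,r^{-1/2}$, which does not close the bootstrap (the lemma must produce a bound independent of $\Delta_0$). There is no $\varepsilon$ anywhere in this integral to compensate, and Gr\"onwall does not help since $\omega$ enters as a source term, not as a coefficient. The paper's choice to propagate in $u$ (using $\omegab$ rather than $\omega$) is precisely what buys the $\varepsilon$ factor. A secondary concern: your Step 1 needs pointwise-in-$u$ $\H^1$ bounds for $\rho$ on $\Cb_{\ub_*}$, which you defer to ``estimates developed elsewhere'' — in the paper these are Lemma \ref{curvatureonS}, which is proven \emph{after} the present lemma, so the logical ordering would at least need to be justified. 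The paper sidesteps this entirely by only needing $\rho$ at $u=0$, where $\mathcal{R}_0$ gives it directly.
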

\begin{proof}Recall that on $S_{\ub_*,0}$, $\Omega$ satisfies the equation
\begin{align*}
\overline{\log\Omega}=0,\quad \Deltas\log\Omega=\frac{1}{2}\divs\etab+\frac{1}{2}\left(\frac{1}{2}((\chih,\chibh)-\overline{(\chih,\chibh)})-(\rho-\overline{\rho})\right).
\end{align*}
The expression on the right hand side is invariant, in particular, does not depend on $\Omega$. Therefore, by $L^2$ elliptic estimate and Sobolev inequalities, we have
\begin{align*}
 &\|(r\nablas)^{\le1}\log\Omega\|_{\L^\infty(\ub_*,0)}+ \|(r\nablas)^2\log\Omega\|_{\L^4(\ub_*,0)}+\|(r\nablas)^{3}\log\Omega\|_{\L^2(\ub_*,0)}\le r^{-1/2}|_{S_{\ub_*,0}}C(\mathcal{O}_0,\mathcal{R}_0).
\end{align*}
Because we do not change the foliation $\ub$ on $C_0$, $\Omega$ is extended as a constant along every null generator of $C_0$. The above estimates then hold along the whole $C_0$, i.e., replacing $\ub_*$ by $\ub$ for all $0\le\ub\le\ub_*$.

Then we can estimate $\|\log\Omega\|_{L^\infty(S_{\ub,u})}$ by the equation $\Db\log\Omega=\omegab$ as follows:
\begin{align*}
|\log\Omega|\le|\log\Omega|_{S_{\ub,0}}|+\int_0^u|\omegab|\D u'\lesssim r^{-1/2}|_{S_{\ub_*,0}}C(\mathcal{O}_0,\mathcal{R}_0)+\varepsilon r^{-1}\Delta_0
\end{align*}
and then multiply both sides by $r^{1/2}$ and choose $\varepsilon$ sufficiently small. The derivatives of $\log\Omega$ can be estimated by commuting $\nablas$ three times to the equation $\Db\log\Omega=\omegab$ and then apply Lemma \ref{Gronwallderivative} to obtain the conclusion.\end{proof}

\begin{remark}We make an important remark here. Because we  construct the canonical foliation on the last slice $\Cb_{\ub_*}$, $\Omega$ will not be constant $1$ but only constant along the null generators of $C_0$. Therefore, the norms included in $\mathcal{O}$, $\mathcal{R}$ written on $u=0$ are different from the corresponding norms in $\mathcal{O}_0$, $\mathcal{R}_0$. Fortunately, the above estimates for up to the third order derivatives of $\Omega$ ensure that the differences are up to a constant depending only on $\mathcal{O}_0$, $\mathcal{R}_0$, so we do not need to worry about this. In addition, the differences of some components do no exist, for example, $\Omega\chi$ (but not $\chi$ itself), $\chib'$, $\omega$, $\etab$, $\rho$ and $\sigma$. 
\end{remark}

\subsubsection{Estimates for the connection coefficients}
Now we turn to the estimates for the connection coefficients. For the structure equations along $\Db$ direction, we will apply Lemma \ref{Gronwallderivative} or Lemma \ref{L2LinftyH3} for $\phi$ equals to up to the second or third order angular derivatives of the connection coefficients. 

As the first step, we consider the structure equations for $\Db\chibh'$, $\Db\tr\chib'$ and $\Db\etab$. By commuting $\nablas$ at most three times to those equations, we have
\begin{align*}
\|r\chibh',r\tr\chib',r^{3/2}\etab\|_{\H^3(\ub,u)}\lesssim&C(\mathcal{O}_0)+\|r\alphab,r^{3/2}\betab\|_{\L^1_u\H^3(\ub)}+\|r\chibh'\cdot\chibh',r(\tr\chib)^2,r^{3/2}\Omega\chib\cdot\eta\|_{\L^1_u\H^3(\ub)}\\\lesssim&C(\mathcal{O}_0)+\varepsilon^{1/2}\|r\alphab,r^2\betab\|_{\L^2_u\H^3(\ub)}+\varepsilon\Delta_0^2\\
\lesssim&C(\mathcal{O}_0)
\end{align*}
if $\varepsilon$ is sufficiently small.

By applying Lemma \ref{L2LinftyH3}, we also have
\begin{equation}\label{L2Linftyeta}
\begin{split}
\|r^{3/2}\etab\|_{\L^2_{[0,\ub_*]}\L^\infty_u\H^2(u)}\lesssim&C(\mathcal{O}_0)+\varepsilon\sup_{\ub,u}\|r^{1/2}\Omega\chib\|_{\H^2(\ub,u)}\sup_u\|r\eta\|_{\L^2_{[0,\ub_*]}\H^2(u)}\\
&+\varepsilon\sup_{u}\|r^{3/2}\betab\|_{\L^2_{[0,\ub_*]}\H^2(u)}\\
\lesssim&C(\mathcal{O}_0)+\varepsilon\sup_{\ub,u}\left\{\|r^{1/2}\Omega\chib\|_{\H^2(\ub,u)}\|r^{3/2}\eta\|_{\H^2(\ub,u)}\right\}\\
&+\varepsilon\sup_{u}\|r^{3/2}\betab\|_{\L^2_{[0,\ub_*]}\H^2(u)}\\
\lesssim& C(\mathcal{O}_0)
\end{split}
\end{equation}
if $\varepsilon$ is sufficiently small. Here we use
\begin{align*}
\|r^s\phi\|_{\L^2_{[0,\ub_*]}\H^2(u)}\le\sup_{\ub}\|r^{s+1/2}\phi\|_{\H^2(\ub,u)}
\end{align*}
which is derived by the H\"older inequality.

We then consider the structure equation
\begin{align*}\Db  \omega &=\Omega^2(2(\eta,\etab)-|\etab|^2-\rho)\end{align*}
We also have
\begin{align*}
\|r^2\omega\|_{\H^3(\ub,u)}\lesssim&\|r^2\eta\cdot\etab+r^2|\etab|^2\|_{\L_u^1\H^3(\ub)}+\|r^2\rho\|_{\L_u^1\H^3(\ub)}\\
\lesssim&\varepsilon\Delta_0^2+\varepsilon^{1/2}\|r^2\rho\|_{\L_u^2\H^3(\ub)}\lesssim C(\mathcal{O}_0).
\end{align*}
We also apply Lemma \ref{L2LinftyH3} to obtain
\begin{align*}
\|r^{5/2}\omega\|_{\L^2_{[0,\ub_*]}\H^3(\ub,u)}\lesssim&\varepsilon\sup_{u}\|r^{5/2}\eta\cdot\etab+r^{5/2}|\etab|^2\|_{\L_{[0,\ub_*]}^2\H^3(u)}+\varepsilon\sup_{u}\|r^{5/2}\rho\|_{\L_{[0,\ub_*]}^2\H^3(u)}\\
\lesssim&\varepsilon\sup_{\ub,u}\|r^{3}\eta\cdot\etab+r^{3}|\etab|^2\|_{\H^3(\ub,u)}+\varepsilon\sup_{u}\|r^{5/2}\rho\|_{\L_{[0,\ub_*]}^2\H^3(u)}+\varepsilon\Delta_0^2\\
\lesssim&C(\mathcal{O}_0)
\end{align*}

We then consider the structure equations for $\Db(\Omega\tr\chi)$, $\Db\eta$ and $\Db\widetilde{\Omega\tr\chi}$, the last of which is derived by the equation for $\Db(\Omega\tr\chi)$. These equations involve the first order derivatives of the connection coefficients on the right hand side. Therefore, we can only estimate at most the second order derivatives of $\Omega\tr\chi$ and $\eta$ under the bootstrap assumption. By applying the Gronwall type estimates, if $\varepsilon$ is sufficiently small, we have
\begin{align*}\|r^2\widetilde{\Omega\tr\chi},r\Omega\tr\chi\|_{\H^2(\ub,u)}\lesssim& \mathcal{O}_0+\|r\eta\|_{\L_u^1\H^3(\ub)}+\|r^2|\eta|^2,r^2\Omega\tr\chib\Omega\tr\chi,r^2\Omega\chih\cdot\Omega\chibh\|_{\L_u^1\H^3(\ub)}\\
&+\|r^2\rho\|_{\L_u^1\H^2(\ub)}\\
\lesssim&C(\mathcal{O}_0)+\varepsilon\sup_{u}\|r\eta\|_{\H^3(\ub,u)}+\varepsilon\Delta_0^2+\varepsilon^{1/2}\|r^2\rho\|_{\L_u^2\H^2(\ub)}\\
\lesssim&C(\mathcal{O}_0)
\end{align*} 
and
\begin{align*}
\|r^{3/2}\eta\|_{\H^2(\ub,u)}\lesssim&\|r^{3/2}\eta\|_{\H^2(\ub,0)}+\|r^{1/2}\omegab\|_{\L_u^1\H^3(\ub)}+\|r^{3/2}\Omega\chib\cdot\eta\|_{\L_u^1\H^2(\ub)}+\|r^{3/2}\betab\|_{\L_u^1\H^2(\ub)}\\
\lesssim&C(\mathcal{O}_0,\mathcal{R}_0)+\varepsilon\Delta_0^2+\varepsilon^{1/2}\|r^2\betab\|_{\L_u^2\H^2(\ub,u)}\\
\lesssim&C(\mathcal{O}_0,\mathcal{R}_0).
\end{align*}

We then use the structure equation for $\Db(\Omega\chih)$ to estimate
\begin{align*}
\|r^{3/2}\Omega\chih\|_{\H^2(\ub,u)}\lesssim&C(\mathcal{O}_0)+\varepsilon\sup_u\|r^{1/2}\eta\|_{\H^3(\ub,u)}
\\&+\varepsilon\sup_u\|r^{3/2}\eta\cdot\eta,r^{3/2}\Omega\tr\chi\Omega\chibh,r^{3/2}\Omega\tr\chib\Omega\chih\|_{\H^2(\ub,u)}\\\lesssim&C(\mathcal{O}_0)+\varepsilon\sup_{u}\|r^{1/2}\eta\|_{\H^3(\ub,u)}
+\varepsilon\Delta_0^2\\
\lesssim&C(\mathcal{O}_0)\end{align*}
if $\varepsilon$ is sufficiently small. We also estimate
\begin{align*}
\|r^{3/2}\Omega\chih\|_{\L^2_{[0,\ub_*]}\L^\infty_u\H^2(u)}\lesssim&C(\mathcal{O}_0)+\varepsilon\sup_u\|r^{1/2}\eta\|_{\L^2_{[0,\ub_*]}\H^3(u)}
\\&+\varepsilon\sup_u\|r^{3/2}\eta\cdot\eta,r^{3/2}\Omega\tr\chi\Omega\chibh,r^{3/2}\Omega\tr\chib\Omega\chih\|_{\L^2_{[0,\ub_*]}\H^2(u)}\\
\lesssim&C(\mathcal{O}_0)+\varepsilon\sup_{\ub,u}\|r\eta\|_{\H^3(\ub,u)}
+\varepsilon\sup_{\ub,u}\|r^2\eta\cdot\eta,r^2\Omega\tr\chi\Omega\chibh,r^2\Omega\tr\chib\Omega\chih\|_{\H^2(\ub,u)}\\
\lesssim&C(\mathcal{O}_0)
\end{align*}
if $\varepsilon$ is sufficiently small.

We then consider the structure equations along $\ub$ direction. First of all, from the above estimates, we can actually choose $\varepsilon$ sufficiently small such that $r^{3/2}|\chih|$, $r^2|\widetilde{\Omega\tr\chi}|$ are bounded by a constant $c$ depending on $\mathcal{O}_0$, therefore the first part of Lemma \ref{evolution} is true.

 The equations along $\ub$ direction should be integrated from the last slice $\Cb_{\ub_*}$. Therefore, we should first investigate the quantities on the last slice $\Cb_{\ub_*}$. Recall the equation which is satisfied by $\log\Omega$ on the last slice:
\begin{align*}
\Deltas\log\Omega=\frac{1}{2}\divs\etab+\frac{1}{2}\left(\frac{1}{2}((\chih,\chibh)-\overline{(\chih,\chibh)})-(\rho-\overline{\rho})\right),
\end{align*}
which is coupled with $\overline{\log\Omega}=0$. Because $\eta+\etab=2\ds\log\Omega$, we have
\begin{align}\label{lasteta}
\divs\eta=\frac{1}{2}((\chih,\chibh)-\overline{(\chih,\chibh)})-(\rho-\overline{\rho})
\end{align}
on the last slice. This equation should be viewed as the equation for $\eta$ on the last slice.

We can also deduce the equation satisfied by $\omegab$ on the last slice $\Cb_{\ub_*}$:
\begin{align}\label{lastomegab}
2\Deltas\omegab=2\divs(\Omega\betab)+\divs(3\Omega\chibh\cdot\eta+\frac{1}{2}\Omega\tr\chib\eta)+\Omega\tr\chib\divs\eta+(F-\overline{F}+\overline{\Omega\tr\chib\check\rho}-\overline{\Omega\tr\chib}\cdot\overline{\check{\rho}})
\end{align}
where $\check{\rho}=\rho-\frac{1}{2}(\chih,\chibh)$,
\begin{align*}
F=\frac{3}{2}\Omega\tr\chib\check{\rho}-(\ds\Omega,\betab)+\{(2\eta-\zeta,\betab)-\frac{1}{2}(\chibh,\nablas\tensor\eta+\eta\tensor\eta)+\frac{1}{4}\tr\chi|\chibh|^2\},
\end{align*}
and $\overline{\omegab}=-\overline{\Omega\tr\chib\log\Omega}$ by $\overline{\log\Omega}=0$.

To apply the elliptic estimates, we need the following lemma.
\begin{lemma}\label{curvatureonS} For $\varepsilon$ sufficiently small,
$$\|r^{5/2}\beta, r^{5/2}\rho, r^2\betab,r^2K\|_{\H^2(\ub,u)}, \|r^{5/2}\beta\|_{\L^2_{[0,\ub_*]}\L^\infty_u\H^2}\le C(\mathcal{R}_0,\mathcal{O}_0).$$
\end{lemma}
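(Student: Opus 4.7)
The approach is to propagate the initial-data bounds on the curvature components from the $u=0$ slice to general $u\in[0,\varepsilon]$ using the $\Db$-direction null Bianchi equations together with the Gronwall type estimates of Lemma \ref{Gronwallderivative} and Lemma \ref{L2LinftyH3}. At $u=0$, the assumption on $\mathcal{R}_0$ directly yields $\|r^{5/2}\beta,r^{5/2}\rho,r^2\betab\|_{\H^2(\ub,0)}\le\mathcal{R}_0$, and for $K$ we apply the Gauss equation $K=-\tfrac14\tr\chi\tr\chib+\tfrac12(\chih,\chibh)-\rho$ together with the connection coefficient estimates already derived in this subsection and the bound on $\rho$, giving $\|r^2K\|_{\H^2(\ub,0)}\le C(\mathcal{O}_0,\mathcal{R}_0)$.

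For $u>0$ we apply Lemma \ref{Gronwallderivative} to the $\Db$-Bianchi equations. For $\rho$ we use
\begin{align*}
\Db\rho+\tfrac{3}{2}\Omega\tr\chib\,\rho+\Omega\{\divs\betab+(2\eta-\zeta,\betab)+\tfrac12(\chih,\alphab)\}=0,
\end{align*}
multiplied by the appropriate power of $r$, rewriting the transport in terms of $\Db(r^{5/2}\rho)$ using $\Db r\sim r\,\overline{\Omega\tr\chib}/2$. The principal error is $r^{5/2}\divs\betab$, whose $\H^2(\ub,u)$ norm is controlled by $\|r^{3/2}\betab\|_{\H^3(\ub,u)}$; Cauchy--Schwarz in $u$ converts its $\L^1_u$ norm into $\varepsilon^{1/2}\|r^{3/2}\betab\|_{\L^2_u\H^3(\ub)}$, which in turn is controlled via the component $\|r^2\betab\|_{\L^2_u\H^3(\ub)}\le\mathcal{R}$. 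The remaining semilinear terms are handled directly from the connection estimates of this subsection and the bootstrap bound on $\mathcal{R}$. The same recipe applied to the equations for $\Db\beta$ (principal term $\ds\rho$, requiring the $\|r^{2}\rho\|_{\L^2_u\H^3(\ub)}$ component of $\mathcal{R}$) and $\Db\betab$ (principal term $\divs\alphab$, requiring $\|r\alphab\|_{\L^2_u\H^3(\ub)}\le\mathcal{R}$) yields the desired $\H^2$ bounds on $r^{5/2}\beta$ and $r^2\betab$. The $\L^2_{[0,\ub_*]}\L^\infty_u\H^2$ bound on $r^{5/2}\beta$ then follows by applying Lemma \ref{L2LinftyH3} to the same $\Db\beta$ equation. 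Finally, with $\rho$ controlled for all $u$, the Gauss equation propagates the bound on $r^2K$.

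The main obstacle is that the principal right-hand side terms of the $\Db$-Bianchi equations ($\ds\rho$ in $\Db\beta$, $\divs\betab$ in $\Db\rho$, $\divs\alphab$ in $\Db\betab$) demand one derivative more than the target $\H^2$ regularity on an individual sphere. The mechanism that closes the estimates is that $\mathcal{R}$ provides the missing derivative only in the $\L^2_u$-integrated sense; Cauchy--Schwarz on $u\in[0,\varepsilon]$ converts $\L^1_u\H^3$ to $\varepsilon^{1/2}\L^2_u\H^3$, so choosing $\varepsilon$ sufficiently small (depending on the bootstrap constant $\Delta$) lets these contributions collapse into $C(\mathcal{O}_0,\mathcal{R}_0)$, in line with the bootstrap scheme of Proposition \ref{connection}.
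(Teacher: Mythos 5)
Your proposal matches the paper's proof essentially exactly: both propagate the $\H^2(\ub,0)$ bounds from $\mathcal{R}_0$ in the $u$ direction via the $\Db$-Bianchi equations for $\beta$, $\rho$, $\betab$ with the Gronwall-type estimates, convert the $\L^1_u\H^3$ cost of the principal first-derivative terms to $\varepsilon^{1/2}\L^2_u\H^3$ by Cauchy--Schwarz in $u$ (absorbing them into $\mathcal{R}$ with $\varepsilon$ small), obtain $K$ from the Gauss equation, and get the $\L^2_{[0,\ub_*]}\L^\infty_u\H^2$ bound on $\beta$ from Lemma \ref{L2LinftyH3} applied to the $\Db\beta$ equation. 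No substantive differences.
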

\begin{proof}
The proof relies on the null Bianchi equations for $\Db\betab$ and $\Db\rho$.
By the Gronwall type estimates, if $\varepsilon$ is sufficiently small,
\begin{align*}
\|r^{5/2}\beta\|_{\H^2(\ub,u)}\lesssim&\|r^{5/2}\beta\|_{\H^2(\ub,0)}+\|r^{5/2}(\Omega\tr\chib,\Omega\chibh,\omegab)\cdot\beta,r^{5/2}\eta\cdot(\rho,\sigma),r^{5/2}\chih\cdot\betab\|_{\L_u^1\H^2(\ub)}\\
&+\|r^{3/2}\rho,r^{3/2}\sigma\|_{\L_u^1\H^3(\ub)},\\
\lesssim&\mathcal{R}_0+\varepsilon^{1/2}\Delta_0\mathcal{R}+\varepsilon^{1/2}\|r^2\rho,r^2\sigma\|_{\L_u^2\H^3(\ub)}\lesssim C(\mathcal{R}_0),
\end{align*}
\begin{align*}
\|r^{5/2}\rho\|_{\H^2(\ub,u)}\lesssim&\mathcal{R}_0+\|r^{5/2}\Omega\tr\chib\rho,r^{5/2}\eta\cdot\betab,r^{5/2}\etab\cdot\betab,r^{5/2}\chih\cdot\alphab\|_{\L_u^1\H^2(\ub)}+\|r^{3/2}\betab\|_{\L_u^1\H^3(\ub)}\\
\lesssim&\mathcal{R}_0+\varepsilon^{1/2}\Delta_0\mathcal{R}+\varepsilon^{1/2}\|r^2\betab\|_{\L_u^2\H^3(\ub)}\lesssim C(\mathcal{R}_0),
\end{align*}
\begin{align*}
\|r^2\betab\|_{\H^2(\ub,u)}\lesssim&\mathcal{R}_0+\|r^2(\Omega\tr\chib,\Omega\chibh,\omegab)\cdot\betab,r^2\eta\cdot\alphab,r^2\etab\cdot\alphab\|_{\L_u^1\H^2(\ub)}+\|r\alphab\|_{\L_u^1\H^3(\ub)}\\
\lesssim&\mathcal{R}_0+\varepsilon^{1/2}\Delta_0\mathcal{R}+\varepsilon^{1/2}\|r\alphab\|_{\L_u^2\H^3(\ub)}\lesssim C(\mathcal{R}_0).
\end{align*}
The estimate for $K$ comes from the Gauss equation $K+\frac{1}{4}\tr\chi\tr\chib-\frac{1}{2}(\chih,\chibh)=-\rho$.

We also estimate
\begin{align*}
&\|r^{5/2}\beta\|_{\L^2_{[0,\ub_*]}\L^\infty_u\H^2}\\
\lesssim&\mathcal{R}_0+\varepsilon\sup_{\ub,u}r\|\Omega\tr\chib,\Omega\chibh,\omegab,\eta\|_{\H^2(\ub,u)}\sup_u\|r^{3/2}\beta,r^{3/2}\rho,r^{3/2}\sigma,r^{3/2}\betab\|_{\L^2_{[0,\ub_*]}\H^2(u)}\\
&+\varepsilon\sup_u\|r^{3/2}\rho,r^{3/2}\sigma\|_{\L^2_{[0,\ub_*]}\H^3(u)}\\
\lesssim&\mathcal{R}_0+\varepsilon(\Delta_0+\mathcal{R})\mathcal{R}\lesssim C(\mathcal{R}_0)
\end{align*}
if $\varepsilon$ is sufficiently small.
\end{proof}

Now, we investigate $\Omega\tr\chi$ on the last slice. The equation for $\Db(\Omega\tr\chi)$ on the last slice reduces to
\begin{align*}
\Db(\Omega\tr\chi)+\frac{1}{2}\Omega\tr\chib\Omega\tr\chi&=\Omega^2(2|\eta|^2-2\overline{\check{\rho}}).
\end{align*}
We estimate
\begin{align*}
\|r^5\nablas^3(\Omega\tr\chi)\|_{\L^3(\ub_*,u)}\lesssim&\mathcal{O}_0+\|r^2\Omega\tr\chib\Omega\tr\chi,r^2|\eta|^2+r^2\overline{\check{\rho}}\Omega^2\|_{\L_u^1\H^3(\ub_*)}\\
\lesssim&\mathcal{O}_0+\varepsilon\Delta_1^2+\varepsilon\sup_u\left\{\|\check{\rho}\|_{\L^2(\ub_*,u)}\|\eta+\etab\|_{\H^2(\ub_*,u)}\right\}\lesssim C(\mathcal{O}_0,\mathcal{R}_0)
\end{align*}

We then consider the structure equation for $D(\nablas^3(\Omega\tr\chi))$ which is obtained by commuting $\nablas^3$ with the equation for $D(\Omega\tr\chi)$. Notice that the commutator $[D,\nablas^3]$ does not contain the third order derivatives of $\chih$ and $\tr\chi$, therefore we have
\begin{equation}\label{nablas3trchi}
\begin{split}
\|r^5\nablas^3(\Omega\tr\chi)\|_{\L^2(\ub,u)}\lesssim& C(\mathcal{O}_0,\mathcal{R}_0)+\int_{\ub}^{\ub_*}r^2\|r^3\nablas^3(\Omega\tr\chi)\|_{\L^2(\ub',u)}\|\widetilde{\Omega\tr\chi},\omega\|_{\H^2(\ub',u)}\D \ub'\\
&+\int_{\ub}^{\ub_*}r^2\|r^3\nablas^3(\Omega\chih)\|_{\L^2(\ub',u)}\|\Omega\chih,\widetilde{\Omega\tr\chi}\|_{\H^2(\ub',u)}\D\ub'\\
&+\int_{\ub}^{\ub_*}r^2\left[\|\Omega\chih\|_{\H^2(\ub',u)}^2+\|\widetilde{\Omega\tr\chi}\|_{\H^2(\ub',u)}\|\Omega\chih,\widetilde{\Omega\tr\chi},\omega\|_{\H^2(\ub',u)}\right]\D\ub'\\
&+\int_{\ub}^{\ub_*}r^2\|\Omega\tr\chi\|_{\L^2(\ub',u)}\|r^3\nablas^3\omega\|_{\L^2(\ub',u)}\D\ub'.
\end{split}
\end{equation}
We will firstly analyze the terms in the last two lines above. The third line is bounded by
\begin{align*}
\text{Line 3}\lesssim&\int_{\ub}^{\ub_*}r^2\left[\|\Omega\chih\|_{\H^2(\ub',u)}^2+\|\widetilde{\Omega\tr\chi}\|_{\H^2(\ub',u)}^2+\|\omega\|_{\H^2(\ub',u)}^2\right]\D\ub'\\
\lesssim&\|r^{3/2}\chih\|_{\L^2_{[\ub,\ub_*]}\H^2(u)}^2+\sup_{\ub}\|r^2\widetilde{\Omega\tr\chi},r^2\omega\|^2_{\H^2(\ub,u)}\lesssim C(\mathcal{O}_0).
\end{align*}
The fourth line is crucial because $\Omega\tr\chi$ itself will lose decay. Therefore the decay of $\nablas^3\omega$ should be good enough. By H\"older inequality, we estimate
\begin{align*}
\text{Line 5}\lesssim&\int_{\ub}^{\ub_*}r^{-3/2}\|r\Omega\tr\chi\|_{\L^2(\ub',u)}\|r^3\nablas^3(r^{5/2}\omega)\|_{\L^2(\ub',u)}\D\ub'\\
\lesssim& \left(\int_{\ub}^{\ub_*}r^{-3/2}C(\mathcal{O}_0)\D\ub'\right)^{1/2}\|r^3\nablas^3(r^{5/2}\omega)\|_{\L^2_{[\ub,\ub_*]}\L^2(u)}
\lesssim C(\mathcal{O}_0).
\end{align*}

We now go to the equation for $\divs(\Omega\chih)$. By elliptic estimate, we have
\begin{align}\label{ellipticchih}
\|r^3\nablas^3(\Omega\chih)\|_{\L^2(\ub,u)}\lesssim\|r^3\nablas^3(\Omega\tr\chi)\|_{\L^2(\ub,u)}+\|r\Omega\chih\cdot\etab,r\Omega\tr\chi\etab\|_{\H^2(\ub,u)}+\|r\beta\|_{\H^2(\ub,u)}
\end{align}
So, the second line is estimated as
\begin{align*}
\text{Line 2}\lesssim&\int_{\ub}^{\ub_*}\|r^5\nablas^3(\Omega\tr\chi)\|_{\L^2(\ub,u)}\|\Omega\chih,\widetilde{\Omega\tr\chi}\|_{\H^2(\ub',u)}\D\ub'\\
&+\int_{\ub}^{\ub_*}\left[r^{-3/2}\|r^{3/2}\Omega\chih\cdot r^{3/2}\etab\|_{\H^2(\ub'u)}+r^{-1}\|r\Omega\tr\chi r^{3/2}\etab\|_{\H^2(\ub,u)}+r^{-1}\|r^{5/2}\beta\|_{\H^2(\ub,u)}\right]\\
&\phantom{+\int_{\ub}^{\ub_*}}\times\|r^{3/2}\Omega\chih,r^{3/2}\widetilde{\Omega\tr\chi}\|_{\H^2(\ub',u)}\D\ub'\\
\lesssim&\int_{\ub}^{\ub_*}\|r^5\nablas^3(\Omega\tr\chi)\|_{\L^2(\ub,u)}\|\Omega\chih,\widetilde{\Omega\tr\chi}\|_{\H^2(\ub',u)}\D\ub'\\
&+\left[\sup_{\ub}\left\{\|r^{3/2}\|\Omega\chih\|_{\H^2(\ub,u)}\|r^{3/2}\etab\|_{\H^2(\ub,u)}+\|r\Omega\tr\chi\|_{\H^2(\ub,u)}\|r^{3/2}\etab\|_{\L^2_{[\ub,\ub_*]}\H^2(u)}\right\}\right.\\
&\phantom{+}\left.+\|r^{5/2}\beta\|_{\L^2_{[\ub,\ub_*]}\H^2(u)}\right]\times\left[\sup_{\ub}\|r^2\widetilde{\Omega\tr\chi}\|+\|r^{3/2}\Omega\chih\|_{\L^2_{[\ub,\ub_*]}\H^2(u)}\right]\\
\lesssim&\int_{\ub}^{\ub_*}\|r^5\nablas^3(\Omega\tr\chi)\|_{\L^2(\ub,u)}\|\Omega\chih,\widetilde{\Omega\tr\chi}\|_{\H^2(\ub',u)}\D\ub'+C(\mathcal{O}_0,\mathcal{R}_0).
\end{align*}
Notice that we use $\|r^{5/2}\beta\|_{\L^2_{[0,\ub_*]}\H^2(u)}\le\|r^{5/2}\beta\|_{\L^2_{[0,\ub_*]}\L^\infty_u\H^2}\le C(\mathcal{R}_0)$ by Lemma \ref{curvatureonS}.
Now, \eqref{nablas3trchi} becomes an integral inequality for $\|r^5\nablas^3(\Omega\tr\chi)\|_{\L^2(\ub,u)}$. To apply Gronwall inequality, we need the following:
\begin{align*}
\int_{\ub}^{\ub_*}\|\Omega\chih,\widetilde{\Omega\tr\chi},\omega\|_{\H^2(\ub',u)}\D \ub'\lesssim&\left(\int_{\ub}^{\ub_*}r^{-3/2}\D\ub'\right)^{1/2}\|r^{3/2}\Omega\chih,r^{3/2}\widetilde{\Omega\tr\chi},r^{3/2}\omega\|_{\L^2_{[\ub,\ub_*]}\H^2(\ub',u)}\\
\le& C(\mathcal{O}_0).
\end{align*}
Therefore, we have concluded that
\begin{align*}
\|r^5\nablas^3(\Omega\tr\chi)\|_{\L^2(\ub,u)}\lesssim C(\mathcal{O}_0,\mathcal{R}_0),
\end{align*}
and by Lemma \ref{curvatureonS}, \eqref{L2Linftyeta} and \eqref{ellipticchih}, we have
\begin{align*}
 \|r^{9/2}\nablas^3(\Omega\chih)\|_{\L^2_{[0,\ub_*]}\L^\infty_u\L^2(u)}\lesssim C(\mathcal{O}_0,\mathcal{R}),\\
 \|r^{9/2}\nablas^3\Omega\chih\|_{\L^2(u)}\lesssim C(\mathcal{O}_0,\mathcal{R}_0).
\end{align*}

Now we turn to $\eta$. We apply the elliptic estimate to \eqref{lasteta} on the last slice,
\begin{align*}
\|r^{3/2}\eta\|_{\H^3(\ub_*,u)}\lesssim\|r^{5/2}\rho\|_{\H^2(\ub_*,u)}+\|r^{5/2}\chih\cdot\chibh\|_{\H^2(\ub_*,u)}+\|r^{3/2}\eta\|_{\L^2(\ub_*,u)}\lesssim C(\mathcal{O}_0,\mathcal{R}_0).
\end{align*}
We then apply the Gronwall type estimates to the equation for $D\nablas^3\eta$. Notice that the commutator $[D,\nablas^3]\eta=\sum_{i+j=2}\nablas^{i+1}(\Omega\chi)\nablas^j\eta$ does not contain $\nablas^3\eta$, and then does not contain new terms that are not estimated before. Therefore, we have
\begin{align*}
&\|r^4\nablas^3\eta\|_{\L^2(\ub,u)}\\
\lesssim& \|r^4\nablas^3\eta\|_{\L^2(\ub_*,u)}+\int_{\ub}^{\ub_*}r\left[\|\Omega\chi\|_{\H^3(\ub',u)}(\|\eta\|_{\H^2(\ub',u)}+\|\etab\|_{\H^3(\ub',u)})+\|\beta\|_{\H^3(\ub',u)}\right]\D\ub'\\
\lesssim& \|r^4\nablas^3\eta\|_{\L^2(\ub_*,u)}+\int_{\ub}^{\ub_*}r^{-3/2}\left[\|r\Omega\chi\|_{\H^3(\ub',u)}(\|r^{3/2}\eta\|_{\H^2(\ub',u)}+\|r^{3/2}\etab\|_{\H^3(\ub',u)})\right.\\
&\left.\phantom{ \|r^4\nablas^3\eta\|_{\L^2(\ub_*,u)}+\int_{\ub}^{\ub_*}r^{-3/2}}+\|r^{5/2}\beta\|_{\H^3(\ub',u)}\right]\D\ub'\\
\lesssim& \|r^4\nablas^3\eta\|_{\L^2(\ub_*,u)}+r^{-1/2}\sup_{\ub}\left\{\|r\Omega\chi\|_{\H^3(\ub,u)}(\|r^{3/2}\eta\|_{\H^2(\ub,u)}+\|r^{3/2}\etab\|_{\H^3(\ub,u)})\right\}\\
&\phantom{\|r^4\nablas^3\eta\|_{\L^2(\ub_*,u)}}+r^{-1/2}\|r^{5/2}\beta\|_{\L^2_{[\ub,\ub_*]}\H^3(u)}.
\end{align*}
Therefore, multiplying both sides by $r^{1/2}$, we have
\begin{align*}
\|r^{9/2}\nablas^3\eta\|_{\L^2(\ub,u)}\le C(\mathcal{O}_0,\mathcal{R}_0,\mathcal{R}).
\end{align*}

Finally we turn to $\omega$. By elliptic estimate (Lemma \ref{curvatureonS} and \ref{elliptic}), we have
\begin{align*}
\|r\omegab\|_{\H^3(\ub_*,u)}\lesssim \|r^3\Deltas\omegab\|_{\H^1(\ub_*,u)}+\|r\omegab\|_{\L^2(\ub_*,u)}\lesssim C(\mathcal{O}_0,\mathcal{R}_0).
\end{align*}

We then apply the Gronwall type estimates for $D\nablas^i\omegab, i=0,1,2,3$, we have
\begin{align*}
\|\omegab\|_{\H^3(\ub,u)}\lesssim& \|\omegab\|_{\H^3(\ub_*,u)}+\int_{\ub}^{\ub_*}\left[\|\Omega\chih,\widetilde{\Omega\tr\chi}\|_{\H^2(\ub',u)}\|\omegab\|_{\H^2(\ub'u)}\right.\\
&\left.\phantom{ \|\omegab\|_{\H^3(\ub_*,u)}+\int_{\ub}^{\ub_*}\ }+\|2(\eta,\etab)-|\eta|^2\|_{\H^3(\ub',u)}+\|\rho\|_{\H^3(\ub',u)}\right]\D\ub'.
\end{align*}
By the Gronwall inequality, the first term in the integral is absorbed and we then have
\begin{align*}
\|\omegab\|_{\H^3(\ub,u)}\lesssim& \|\omegab\|_{\H^3(\ub_*,u)}+r^{-2}\sup_{\ub}\{\|r^{3/2}\eta\|_{\H^3(\ub,u)}^2+\|r^{3/2}\eta\|_{\H^3(\ub,u)}\|r^{3/2}\etab\|_{\H^3(\ub,u)}\}\\
&+r^{-3/2}\|r^{5/2}\rho\|_{\L^2_{[\ub,\ub_*]}\H^3(u)}.
\end{align*}
Therefore, multiplying both sides by $r$, we have
\begin{align*}
\|r\omegab\|_{\H^3(\ub,u)}\lesssim C(\mathcal{O}_0,\mathcal{R}_0,\mathcal{R}).
\end{align*}

\subsection{Proof of Proposition \ref{curvaturecompletenullcone}}\label{SectionCurvature}

The proof relies on the following energy estimates.
\begin{lemma}\label{energyestimate}
There exist functions $\tau_j^{(i)}$ with $i=0,1,2$ and $j=0,1,2,3$ such that we have, for every $\ub\in[0,\ub_*]$,
\begin{align*}
\|r^{5/2}\alpha\|_{\L^2_{[0,\ub]}\H^3(u)}^2+\|r^2\beta\|_{\L^2_u\H^3(\ub)}^2\lesssim& \|r^{5/2}\alpha\|_{\L^2_{[0,\ub_*]}\H^3(0)}^2+\|r^2\beta\|_{\L^2_u\H^3(0)}^2\\
&+\int_0^\varepsilon\int_0^{\ub}\sum_{i=0}^3r^{4+2i}\|\tau_0^{(i)}\|_{\L^1(\ub',u')}\D\ub'\D u',\\
\|r^{5/2}\beta\|_{\L^2_{[0,\ub]}\H^3(u)}^2+\|r^2\rho,r^2\sigma\|_{\L^2_u\H^3(\ub)}^2\lesssim& \|r^{5/2}\beta\|_{\L^2_{[0,\ub_*]}\H^3(0)}^2+\|r^2\rho,r^2\sigma\|_{\L^2_u\H^3(0)}^2\\
&+\int_0^\varepsilon\int_0^{\ub}\sum_{i=0}^3r^{4+2i}\|\tau_1^{(i)}\|_{\L^1(\ub',u')}\D\ub'\D u',\\
\|r^{5/2}\rho,r^{5/2}\sigma\|_{\L^2_{[0,\ub]}\H^3(u)}^2+\|r^2\betab\|_{\L^2_u\H^3(\ub)}^2\lesssim& \|r^{5/2}\rho,r^{5/2}\sigma\|_{\L^2_{[0,\ub_*]}\H^3(0)}^2+\|r^2\betab\|_{\L^2_u\H^3(0)}^2\\
&+\int_0^\varepsilon\int_0^{\ub}\sum_{i=0}^3r^{4+2i}\|\tau_2^{(i)}\|_{\L^1(\ub',u')}\D\ub'\D u',\\
\|r^{3/2}\beta\|_{\L^2_{[0,\ub]}\H^3(u)}^2+\|r\alphab\|_{\L^2_u\H^3(\ub)}^2\lesssim& \|r^{3/2}\beta\|_{\L^2_{[0,\ub_*]}\H^3(0)}^2+\|r^2\alphab\|_{\L^2_u\H^3(0)}^2\\
&+\int_0^\varepsilon\int_0^{\ub}\sum_{i=0}^3r^{2+2i}\|\tau_3^{(i)}\|_{\L^1(\ub',u')}\D\ub'\D u',\\
\end{align*}
with $\tau_j^{(i)}$ satisfying the following estimates:
\begin{align}
\begin{split}\label{tau0e}
\sum_{i=1}^3\|\tau_0^{(i)}\|_{\L^1(\ub,u)}\lesssim&\|\Omega\tr\chib,\chibh,\omegab\|\|\alpha\|^2+\|\eta,\etab\|\|\alpha\|\|\beta\|+\|K\|_{\H^2(\ub,u)}\|\alpha\|\|\beta\|+\|\chih\|\|\alpha\|\|\rho,\sigma\|\\&+\|\Omega\tr\chi,\chih,\omega\|\|\beta\|^2,
\end{split}\\
\begin{split}\label{tau1e}
\sum_{i=1}^3\|\tau_1^{(i)}\|_{\L^1(\ub,u)}\lesssim&\|\Omega\tr\chib,\chibh,\omegab\|\|\beta\|^2+\|\eta,\etab\|\|\beta\|\|\rho,\sigma\|+\|\chibh\|\|\alpha\|\|\rho,\sigma\|+\|K\|_{\H^2(\ub,u)}\|\beta\|\|\rho,\sigma\|\\&+\|\chih\|\|\beta\|\|\betab\|+\|\Omega\tr\chi,\chih\|\|\rho,\sigma\|^2,
\end{split}\\
\begin{split}\label{tau2e}
\sum_{i=1}^3\|\tau_2^{(i)}\|_{\L^1(\ub,u)}\lesssim&\|\Omega\tr\chib,\chibh\|\|\rho,\sigma\|^2+\|\eta,\etab\|\|\rho,\sigma\|\|\betab\|+\|\chibh\|\|\beta\|\|\betab\|+\|K\|_{\H^2(\ub,u)}\|\rho,\sigma\|\|\betab\|\\&+\|\chih\|\|\rho,\sigma\|\|\alphab\|+\|\widetilde{\Omega\tr\chi},\chih,\omega\|\|\betab\|^2,
\end{split}\\
\begin{split}\label{tau3e}
\sum_{i=1}^3\|\tau_3^{(i)}\|_{\L^1(\ub,u)}\lesssim&\|\Omega\tr\chib,\chibh,\omegab\|\|\betab\|^2+\|\eta,\etab\|\|\betab\|\|\alphab\|+\|\chibh\|\|\rho,\sigma\|\|\alphab\|+\|K\|_{\H^2(\ub,u)}\|\betab\|\|\alphab\|\\&+\|\widetilde{\Omega\tr\chi},\chih,\omega\|\|\alphab\|^2,
\end{split}
\end{align}
where the norm $\|\cdot\|$ above refers to $\|\cdot\|_{\H^3(\ub,u)}$.
\end{lemma}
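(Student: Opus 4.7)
My plan is to derive each of the four estimates from a paired weighted energy identity built from the corresponding null Bianchi equations, as outlined in the introduction. I will sketch the first pair $(\alpha,\beta)$ in some detail; the remaining pairs follow the same template with different weights and coupling terms.

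For each angular order $i=0,1,2,3$, first commute $\nablas^i$ into the $\Dbh\alpha$ and $D\beta$ equations using Lemma \ref{commutator}, then contract the former with $2r^{4+2i}\nablas^i\alpha$ and the latter with $4r^{4+2i}\nablas^i\beta$. The principal derivative terms $\Omega\nablas\tensor\beta$ in the first equation and $\Omega\divs\alpha$ in the second satisfy, via the adjointness $\mathcal{D}_2^*=-\tfrac12\nablas\tensor$ and Stokes on $S_{\ub,u}$,
\begin{equation*}
\int_{S_{\ub,u}}\bigl[(\nablas^i\alpha,\nablas\tensor\nablas^i\beta)+2(\divs\nablas^i\alpha,\nablas^i\beta)\bigr]\,d\mu_{\gs}=0,
\end{equation*}
so these terms cancel modulo a spherical divergence. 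Using $\Db\, d\mu_{\gs}=\Omega\tr\chib\, d\mu_{\gs}$, $D\, d\mu_{\gs}=\Omega\tr\chi\, d\mu_{\gs}$, and $\Db r$, $Dr$ expressed through the mean values of $\Omega\tr\chib$, $\Omega\tr\chi$, the resulting identity takes the form
\begin{equation*}
\Db\bigl(r^{4+2i}|\nablas^i\alpha|^2\,d\mu_{\gs}\bigr)+D\bigl(2r^{4+2i}|\nablas^i\beta|^2\,d\mu_{\gs}\bigr)=\divs(\cdots)\,d\mu_{\gs}+\tau_0^{(i)}\,d\mu_{\gs},
\end{equation*}
where $\tau_0^{(i)}$ collects all subprincipal and commutator terms. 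Integrating over $M_{\ub,\varepsilon}=[0,\ub]\times[0,\varepsilon]$ and discarding the vanishing spherical divergence, the $\Db$ and $D$ integrals produce the asserted left-hand side on the outgoing face $\Cb_{\ub}$ and incoming face $C_\varepsilon$, plus the initial contributions on $\Cb_0\cup C_0$. Applying the same procedure to the pairs $(\Db\beta,D\rho,D\sigma)$, $(\Db\rho,\Db\sigma,D\betab)$, and $(\Db\betab,D\alphab)$ --- the last with weight $r^{2+2i}$ --- yields the three remaining identities; for the pairs involving $(\rho,\sigma)$ one uses the adjointness of $\mathcal{D}_1,\mathcal{D}_1^*$ to cancel $\ds\rho$, ${}^*\ds\sigma$, $\divs\betab$, $\curls\betab$ against their duals.

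The structure of each $\tau_j^{(i)}$ in \eqref{tau0e}--\eqref{tau3e} is then read off directly from the lower-order terms of the respective Bianchi equations and from Lemma \ref{commutator}. The $\|\Omega\tr\chib,\chibh,\omegab\|$ contributions come from the subprincipal terms of the $\Db$-equation together with the commutator $[\Db,\nablas^i]\phi=\sum_j\nablas^j(\Omega\chib)\nablas^{i-j}\phi$; the $\|\widetilde{\Omega\tr\chi},\chih,\omega\|$ contributions come analogously from the $D$-side; the $\|\eta,\etab\|$-products come from $(4\eta+\zeta)\tensor\beta$-type terms; and the $\|K\|_{\H^2(\ub,u)}$ factor is the signature of $[\mathcal{D}^*,\nablas^i]$ and $[\mathcal{D},\nablas^i]$, which produce $\nablas^{j-1}K\cdot\nablas^{i-j}\phi$. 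In passing from pointwise pairings on $S_{\ub,u}$ to the $\|\cdot\|_{\H^3(\ub,u)}$ products in the statement, I would apply H\"older on $S$ together with the product inequality $\|\phi_1\phi_2\|_{\H^i(\ub,u)}\lesssim\|\phi_1\|_{\H^i(\ub,u)}\|\phi_2\|_{\H^i(\ub,u)}$ for $i\ge 2$.

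The main obstacle is to match the weights carefully so that the principal derivative cancellation is preserved after commutation with $\nablas^i$. The two members of each pair must carry spacetime weights compatible with the adjoint pairing --- for instance, $r^{5/2}\alpha$ and $r^2\beta$ correspond to the common $r^4$ spacetime weight for $|\alpha|^2$ and $|\beta|^2$ --- otherwise a non-integrable weight-mismatch term would survive and spoil the estimate. A secondary subtlety is the bookkeeping for the commuted adjointness: $[\mathcal{D}_2^*,\nablas^i]\beta$ and $[\mathcal{D}_2,\nablas^i]\alpha$ produce $K$-dependent terms whose contribution must match exactly the $\|K\|_{\H^2(\ub,u)}$ factor in the stated bounds, which is controlled by invoking the Gauss equation $K=-\tfrac14\tr\chi\tr\chib+\tfrac12(\chih,\chibh)-\rho$ to express $K$ in terms of connection and curvature quantities already estimated.
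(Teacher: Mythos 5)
Your approach is essentially identical to the paper's: commute $\nablas^i$ into the paired null Bianchi equations, contract with the weighted quantities, use the adjointness of $\mathcal{D}_2$/$\mathcal{D}_1$ to reduce the principal terms to a spherical divergence, and absorb everything else into $\tau_j^{(i)}$, with the commutator formulas of Lemma~\ref{commutator} producing the $\|\Omega\chi\|$, $\|\Omega\chib\|$, and $\|K\|_{\H^2}$ factors. One bookkeeping issue: once you unfold the scale-invariant normalizations ($r^{-2}$ inside $\L^p$ and $r^{-1}$ inside $\L^2_{[\ub_1,\ub_2]}$), the pointwise identity for the $(\alpha,\beta)$ pair must carry the weight $r^{2+2i}$, not $r^{4+2i}$ --- the paper's introduction explicitly uses the base weight $r^2$, and $\|r^{5/2}\alpha\|_{\L^2_{[0,\ub]}\H^3}^2$ unravels to $\int_0^{\ub}\sum_i r^{2+2i}\int_S|\nablas^i\alpha|^2\D\mu_{\gs}\D\ub'$; similarly the $(\betab,\alphab)$ pair carries $r^{2i}$, not $r^{2+2i}$. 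This off-by-$r^2$ error propagates through your sketch but does not change the structure of the argument. You might also make explicit, as the paper notes, that in $\tau_2^{(i)}$ and $\tau_3^{(i)}$ the contributions from the $D$-side conspire so that only $\widetilde{\Omega\tr\chi}$ (not $\Omega\tr\chi$) appears --- this cancellation is what later makes the $\ub$-integrals converge.
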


\begin{proof}
We can simply define the error terms $\tau_0^{(i)}$, $\tau_1^{(i)}$, $\tau_2^{(i)}$, $\tau_3^{(i)}$ to be
\begin{align}\label{tau0}
r^{2+2i}\tau_0^{(i)}\D\mu_{\gs}=&\Db(r^{2+2i}|\nablas^i\alpha|^2\D\mu_{\gs})+D(2r^{2+2i}|\nablas^i\beta|^2\D\mu_{\gs})-r^{2+2i}\divs(4\Omega\nablas^i\alpha\cdot\nablas^i\beta)\D\mu_{\gs},\\
\begin{split}\label{tau1}
r^{2+2i}\tau_1^{(i)}\D\mu_{\gs}=&\Db(r^{2+2i}|\nablas^i\beta|^2\D\mu_{\gs})+D(r^{2+2i}(|\nablas^i\rho|^2+|\nablas^i\sigma|^2)\D\mu_{\gs})\\
&-2r^{2+2i}\divs(\Omega(\nablas^i\rho\cdot\nablas^i\beta-\nablas^i\sigma\cdot\nablas^i{}^*\beta))\D\mu_{\gs},
\end{split}\\
\begin{split}\label{tau2}
r^{2+2i}\tau_2^{(i)}\D\mu_{\gs}=&\Db(r^{2+2i}(|\nablas^i\rho|^2+|\nablas^i\sigma|^2))\D\mu_{\gs})+D(r^{2+2i}|\nablas^i\betab|^2\D\mu_{\gs})\\
&+2r^{2+2i}\divs(\Omega(\nablas^i\rho\cdot\nablas^i\beta-\nablas^i\sigma\cdot\nablas^i{}^*\beta))\D\mu_{\gs},
\end{split}\\
\label{tau3}
r^{2i}\tau_3^{(i)}\D\mu_{\gs}=&\Db(2r^{2i}|\nablas^i\betab|^2\D\mu_{\gs})+D(r^{2i}|\nablas^i\alphab|^2\D\mu_{\gs})+\divs(4\Omega\nablas^i\alphab\cdot\nablas^i\betab)\D\mu_{\gs}.
\end{align}
The estimates \eqref{tau0e}-\eqref{tau3e} then can be derived by commuting $\nablas^i$ with the null Bianchi equations.  Notice that the Gauss curvature $K$ comes from the commutator of $\nablas$ and the Hodge operators. We omit the details here. Then the estimates \eqref{tau0e}-\eqref{tau3e} follows by integrating \eqref{tau0}, \eqref{tau1}, \eqref{tau2}, \eqref{tau3} over $\displaystyle\bigcup_{(\ub',u)\in[0,\ub]\times[0,\varepsilon]} S_{\ub',u}$ for all $\ub\in[0,\ub_*]$.
\end{proof}

One should notice that in the estimates of $\tau_2^{(i)}$ and $\tau_3^{(i)}$, $\widetilde{\Omega\tr\chi}$ appears instead of $\Omega\tr\chi$.

By \eqref{tau0e}, \eqref{tau1e}, we have
\begin{align*}
&\int_0^\varepsilon\int_0^{\ub}\sum_{i=0}^3\sum_{k=0}^1r^{4+2i}\|\tau_k^{(i)}\|_{\L^1(\ub',u')}\D\ub'\D u'\\
\lesssim &\sup_{\ub,u}\|\Omega\tr\chib,\chibh,\omegab\|_{\H^3(\ub,u)}\int_0^{\varepsilon}\|r^{5/2}\alpha,r^{5/2}\beta\|^2_{\L^2_{[0,\ub]}\H^3(u')}\D u'\\
&+\sup_{\ub,u}r(\|\eta,\etab,\chibh,\chih\|_{\H^3(\ub,u)}+\|K\|_{\H^2(\ub,u)})\left(\int_0^{\varepsilon}\|r^{5/2}\alpha,r^{5/2}\beta\|^2_{\L^2_{[0,\ub]}\H^3(u')}\D u'\right)^{1/2}\\
&\phantom{+}\times\left(\int_0^{\varepsilon}\|r^{3/2}\beta,r^{3/2}\rho,r^{3/2}\sigma\|^2_{\L^2_{[0,\ub]}\H^3(u')}\D u'\right)^{1/2}\\
&+\sup_{\ub,u}\|\Omega\tr\chi,\chih,\omega\|_{\H^3(\ub,u)}\int_0^{\varepsilon}\|r^{5/2}\beta,r^{5/2}\rho,r^{5/2}\sigma\|^2_{\L^2_{[0,\ub]}\H^3(u')}\D u'\\
\lesssim& \varepsilon\mathcal{O}\mathcal{R}^2
\end{align*}

By  \eqref{tau2e}, we have
\begin{align*}
&\int_0^\varepsilon\int_0^{\ub}\sum_{i=0}^3r^{4+2i}\|\tau_2^{(i)}\|_{\L^1(\ub',u')}\D\ub'\D u'\\
\lesssim &\sup_{\ub,u}\|\Omega\tr\chib\|_{\H^3(\ub,u)}\int_0^{\varepsilon}\|r^{5/2}\rho,r^{5/2}\sigma\|^2_{\L^2_{[0,\ub]}\H^3(u')}\D u'\\
&+\sup_{\ub,u}r(\|\eta,\etab,\chibh\|_{\H^3(\ub,u)}+\|K\|_{\H^2(\ub,u)})\left(\int_0^{\varepsilon}\|r^{5/2}\rho,r^{5/2}\sigma\|^2_{\L^2_{[0,\ub]}\H^3(u')}\D u'\right)^{1/2}\\
&\phantom{+}\times\left(\int_0^{\varepsilon}\|r^{3/2}\betab\|^2_{\L^2_{[0,\ub]}\H^3(u')}\D u'\right)^{1/2}\\
&+\underline{\|r^{3/2}\chih\|_{\L^2_{[0,\ub]}\L^\infty_u\H^3(u)}\|r^{5/2}\rho,r^{5/2}\sigma\|_{\L^2_{[0,\ub]}\H^3(u)}\cdot\varepsilon^{1/2}\|r\alphab\|_{\L^2_u\H^3(\ub)}}\\
&+\sup_{\ub,u}r^{3/2}\|\widetilde{\Omega\tr\chi},\chih,\omega\|_{\H^3(\ub,u)}\int_0^{\ub}(1+\ub')^{-3/2}\|r^2\betab\|^2_{\L^2_{u}\H^3(\ub')}\D \ub'\\
\lesssim& \varepsilon^{1/2}\mathcal{O}\mathcal{R}^2+\mathcal{O}\int_0^{\ub}(1+\ub')^{-3/2}\|r^2\betab\|^2_{\L^2_{u}\H^3(\ub')}\D \ub'
\end{align*}
The factor $(1+\ub)^{-3/2}$ in the last term above, which ensures the convergence,  is exactly the reason why we need $\widetilde{\Omega\tr\chi}$ appears instead of $\Omega\tr\chi$ in the estimate of $\tau_2^{(i)}$. This is also the case in the estimate of $\tau_3^{(i)}$. The term with underline which comes from the estimate of $\tau_2^{(i)}$ is the borderline term. And this is exactly the place the norm $\L^2_{[0,\ub]}\L^\infty_u\H^3$ comes in.

By \eqref{tau3e}, we have
\begin{align*}
&\int_0^\varepsilon\int_0^{\ub}\sum_{i=0}^3r^{2+2i}\|\tau_3^{(i)}\|_{\L^1(\ub',u')}\D\ub'\D u'\\
\lesssim &\sup_{\ub,u}\|\Omega\tr\chib,\chibh,\omegab\|_{\H^3(\ub,u)}\int_0^{\varepsilon}\|r^{3/2}\betab\|^2_{\L^2_{[0,\ub]}\H^3(u')}\D u'\\
&+\sup_{\ub,u}r(\|\eta,\etab,\chibh\|_{\H^3(\ub,u)}+\|K\|_{\H^2(\ub,u)})\left(\int_0^{\varepsilon}\|r^{3/2}\rho,r^{3/2}\sigma,r^{3/2}\betab\|^2_{\L^2_{[0,\ub]}\H^3(u')}\D u'\right)^{1/2}\\
&\phantom{+}\times\left(\int_0^{\ub}r^{-3/2}\|r\alphab\|^2_{\L^2_u\H^3(\ub')}\D \ub'\right)^{1/2}\\
&+\sup_{\ub,u}r^{3/2}\|\widetilde{\Omega\tr\chi},\chih,\omega\|_{\H^3(\ub,u)}\int_0^{\ub}(1+\ub')^{-3/2}\|r\alphab\|^2_{\L^2_{u}\H^3(\ub')}\D \ub'\\
\lesssim&\varepsilon^{1/2}\mathcal{O}\mathcal{R}^2+\mathcal{O}\int_0^{\ub}(1+\ub')^{-3/2}\|r\alphab\|^2_{\L^2_{u}\H^3(\ub')}\D \ub'
\end{align*}

By the first two estimates in Lemma \ref{energyestimate} and the above estimates for $\tau_0^{(i)}$ and $\tau_1^{(i)}$, we have
\begin{align*}
&\|r^{5/2}\alpha,r^{5/2}\beta\|^2_{\L^2_{\ub}\H^3(u)}+\|r^2\beta,r^2\rho,r^2\sigma\|^2_{\L^2_{u}\H^3(\ub)}\lesssim\mathcal{R}_0^2+\varepsilon\mathcal{O}\mathcal{R}^2.
\end{align*}
By Proposition \ref{connection}, $\mathcal{O}\le C(\mathcal{O}_0,\mathcal{R}_0,\mathcal{R})$, therefore if $\varepsilon$ is chosen sufficiently small, we have
\begin{align*}
\|r^{5/2}\alpha,r^{5/2}\beta\|^2_{\L^2_{\ub}\H^3(u)}+\|r^2\beta,r^2\rho,r^2\sigma\|^2_{\L^2_{u}\H^3(\ub)}\lesssim C(\mathcal{R}_0).
\end{align*}
\begin{remark}
The differences between the initial norms $\|r^2\beta,r^2\rho,r^2\sigma,r^2\betab,r\alphab\|_{\L^2_u\H^3(0)}^2$ and the corresponding norms in $\mathcal{R}_0$ are determined by all third order derivatives (including $\nablas$, $\Db$ and their mixed derivatives) of $\Omega$, which are bounded by a constant depending only on $\mathcal{O}_0$, $\mathcal{R}_0$ if $\varepsilon$ is sufficiently small.
\end{remark}

By the last two estimates in Lemma \ref{energyestimate} and the above estimates for $\tau_2^{(i)}$ and $\tau_3^{(i)}$,
\begin{align*}
\|r^2\betab\|^2_{\L^2_{u}\H^3(\ub)}+\|r\alphab\|^2_{\L^2_{u}\H^3(\ub)}\lesssim& \mathcal{R}_0+\varepsilon^{1/2}\mathcal{O}\mathcal{R}^2\\
&+\mathcal{O}\int_0^{\ub}(1+\ub')^{-3/2}(\|r^2\betab\|^2_{\L^2_{u}\H^3(\ub')}+\|r\alphab\|^2_{\L^2_{u}\H^3(\ub')})\D\ub'.
\end{align*}
Notice that $\mathcal{O}$ appearing in the third term on the right hand side comes from the estimates for $\chih$, $\widetilde{\Omega\tr\chi}$ and $\omega$, which depend only on $\mathcal{O}_0$, $\mathcal{R}_0$. Therefore, by choosing $\varepsilon$ sufficiently small, and absorbing the last term, we have
\begin{align*}
\|r^2\betab\|^2_{\L^2_{u}\H^3(\ub)}+\|r\alphab\|^2_{\L^2_{u}\L^\infty_{[0,\ub]}\H^3}\lesssim C(\mathcal{O}_0,\mathcal{R}_0).
\end{align*}
We then substitute this estimate back to the estimates for $\tau_2^{(i)}$ and $\tau_3^{(i)}$, and again by the last two estimates in Lemma \ref{energyestimate}, we have
\begin{align*}
\|r^{5/2}\rho,r^{5/2}\sigma,r^{3/2}\betab\|^2_{\L^2_{\ub}\H^3(u)}\lesssim C(\mathcal{O}_0,\mathcal{R}_0).
\end{align*}
Therefore, we complete the proof of Proposition \ref{curvaturecompletenullcone} and Theorem \ref{apriori}.

\section{Appendix}

As mentioned above, we will sketch the proof of Step 1 and Step 3, which is the construction of the canonical foliation on an incoming null cone. We refer the readers to \cite{L-Z} and the references therein for the details. 

Recall that a canonical foliation on an incoming null cone $\Cb_{\ub}$ means that, under this foliation, the following equation holds:
\begin{align*}
\overline{\log\Omega}=0,\quad \Deltas\log\Omega=\frac{1}{2}\divs\etab+\frac{1}{2}\left(\frac{1}{2}((\chih,\chibh)-\overline{(\chih,\chibh)})-(\rho-\overline{\rho})\right),
\end{align*}
which is exactly \eqref{lastslice}. Now we work on an arbitrary incoming null cone $\Cb_{\ub}$ with a background foliation given by a function $s$. If we work on the initial slice $\Cb_0$, the background foliation refers to that given by the affine function $\underline{s}$, and if we work on the last slice $\Cb_{\ub_*+\delta}$, the background foliation refers to that given by $u$, which is obtained by extending the outgoing null cones $C_u$ to $\Cb_{\ub_*+\delta}$, and gives a canonical foliation on $\Cb_{\ub_*}$,

We use a function $W=W(s,\theta)$ defined on $[0,\varepsilon]\times S_{\ub,0}$ to represent a new foliation in the following way: The new foliation function ${}^{(W)}u$  is defined by the relation ${}^{(W)}u(W(s,\theta),\theta)=s$. Under the new foliation, we have also the new ``lapse'' function ${}^{(W)}\Omega$. $W$ represents a foliation iff
\begin{align*}
{}^{(W)}a(s,\theta)\triangleq\frac{\partial W}{\partial s}(s,\theta)={}^{(W)}\Omega^2(W(s,\theta),\theta)\Omega^{-2}(W(s,\theta),\theta)>0.
\end{align*}
Here $\Omega$ is the lapse function relative to the background foliation given by $s$. In general, $W$ does not necessarily represent a foliation but only a family of spherical sections parameterized by $s$.

Now given $W\ge0$. We consider a map $\mathcal{A}$, such that $\mathcal{A}(W)\ge0$ is again a family of spherical sections, which is defined by the following:
\begin{align}\label{definitionA}
\mathcal{A}(W)(s,\theta)=\int_0^s{}^{(W_\mathcal{A})}\Omega^2(W(s',\theta),\theta)\Omega^{-2}(W(s',\theta),\theta)\D s'
\end{align}
 where ${}^{(W_\mathcal{A})}\Omega$ is the solution of the equation
\begin{align*}
{}^{(W)}\Deltas\log{}^{(W_\mathcal{A})}\Omega(W(s,\theta),\theta)&={}^{(W)}G(W(s,\theta),\theta),\\
{}^{{}^{(W)}}\overline{\log{}^{(\mathcal{A}(W))}\Omega}(s)&=0,
\end{align*}
where
\begin{align*}
{}^{(W)}G\triangleq\frac{1}{2}{}^{(W)}\divs{}^{(W)}\etab+\frac{1}{2}\left(\frac{1}{2}(({}^{(W)}\chih,{}^{(W)}\chibh)-{}^{{}^{(W)}}\overline{({}^{(W)}\chih,{}^{(W)}\chibh)})-({}^{(W)}\rho-{}^{{}^{(W)}}\overline{{}^{(W)}\rho})\right).
\end{align*}
If $W$ is a fixed point of $\mathcal{A}$, then the foliation given by $W$ is canonical. The strategy is to find a suitable closed subspace of the functions $W$, and prove that $\mathcal{A}$ restricted to this subspace is a contraction.

 We introduce the following norms on $\Cb_{\ub}$ with respect to the background foliation given by $s$:
\begin{align*}
\mathbb{O}(\ub)&=\sup_s\|r\chibh,r\tr\chib,r^{3/2}\eta,r^{3/2}\etab,r\omegab\|_{\H^2(\ub,s)}+\sup_s\|r\Db\omegab\|_{\H^1(\ub,s)}+\sup_s\|r\Db^2\omegab\|_{\L^2(\ub,s)}\\
\underline{\mathbb{R}}(\ub)&=\sup_s\|r^{5/2}\rho,r^{5/2}\sigma,r^2\betab,r\alphab\|_{\H^2(\ub,s)}+\sup_s\|r\Db\alphab\|_{\H^1(\ub,s)}+\sup_s\|r\Db^2\alphab\|_{\L^2(\ub,s)}.
\end{align*}
Using the argument in \cite{L-Z}, we can prove the following
\begin{proposition}Given two families of sections represented by $W_1,W_2$. If $\mathbb{O}(\ub),\underline{\mathbb{R}}(\ub)\le C$, then if $\varepsilon$, $W_i$ and $\sup_s\|r^{1/2}(r\nablas)^{1,2}W_i(s,\cdot)\|_{\L^2(\ub,s)}, i=1,2$ are sufficiently small (depending on the initial data $\mathcal{O}_0$, $\mathcal{R}_0$ and $C$), there exists a constant $c$ depending on $C$ such that
\begin{equation}\label{contraction}
\begin{split}
\|r^{1/2}(\log{}^{((W_1)_{\mathcal{A}})}\Omega(W_1(s,\cdot),\cdot)-\log{}^{((W_2)_{\mathcal{A}})}\Omega(W_2(s,\cdot),\cdot))\|_{\H^2(\ub,0)}\\
\le c\|r^{1/2}(W_1(s,\cdot)-W_2(s,\cdot))\|_{\H^2(\ub,0)}.
\end{split}
\end{equation}
\end{proposition}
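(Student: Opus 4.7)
The strategy is to derive \eqref{contraction} by subtracting the two Poisson-type equations that define $\log{}^{((W_1)_{\mathcal{A}})}\Omega$ and $\log{}^{((W_2)_{\mathcal{A}})}\Omega$, pulling them back to a common reference sphere, and then invoking the elliptic estimate of Lemma \ref{elliptic}. First I would use the diffeomorphism $\theta\mapsto(W_i(s,\theta),\theta)$ to pull both equations back to $S_{\ub,0}$, so that the unknowns become the two evaluated lapses $f_i(\theta):=\log{}^{((W_i)_{\mathcal{A}})}\Omega(W_i(s,\theta),\theta)$. Under the pull-back, each $f_i$ satisfies a Poisson-type equation on $S_{\ub,0}$ with a Laplacian-type operator whose coefficients differ between $i=1,2$ only through a perturbation linear in $W_1-W_2$ at leading order, with right-hand side $({}^{(W_i)}G)\circ W_i$, and with the normalization $\overline{f_i}=0$ preserved.

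Subtracting then gives
\begin{align*}
\Deltas_{W_1}(f_1-f_2)=(\Deltas_{W_2}-\Deltas_{W_1})f_2+\bigl({}^{(W_1)}G\circ W_1-{}^{(W_2)}G\circ W_2\bigr),
\end{align*}
and both source terms on the right are analyzed via the homotopy $W_t=(1-t)W_2+tW_1$ and integration in $t$. The Laplacian-difference then takes the schematic form $(W_1-W_2)\cdot(\text{up to two angular derivatives of }f_2)$ with coefficients controlled by $\mathbb{O}$; the second angular derivatives of $f_2$ are re-bounded by applying Lemma \ref{elliptic} to $f_2$'s own equation, using $\mathbb{O},\underline{\mathbb{R}}\le C$. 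The $G$-difference integrates to $\Db G$ contracted with $W_1-W_2$; since $G$ is built from $\divs\etab$, $(\chih,\chibh)$ and $\rho$, its $\Db$ is expressed via the null structure and null Bianchi equations of Section \ref{preliminary} and bounded by $\mathbb{O},\underline{\mathbb{R}}$. This yields a pointwise bound for the combined source by $|W_1-W_2|$ times geometric quantities with the correct $r$-weights. Applying the Poisson part of Lemma \ref{elliptic} (both averages vanish) and multiplying by $r^{1/2}$ produces \eqref{contraction} with $c=c(C)$.

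The main obstacle is the regularity balance in the Laplacian-difference term: it nominally demands that $(\nablas^2 f_2)\cdot(W_1-W_2)$ be estimated in $\H^2$, and hence $\nablas^2 f_2$ in $\L^\infty$, which sits at the borderline of Sobolev embedding on $S^2$ given the elliptic bound available for $f_2$. This is precisely why the hypothesis demands that $\|r^{1/2}(r\nablas)^{1,2}W_i\|_{\L^2(\ub,s)}$ themselves be small rather than merely bounded — the small factor they contribute absorbs the regularity loss. The smallness of $\varepsilon$ and of $W_i$ plays the same role in controlling the source-difference nonlinearities, and all remaining product manipulations reduce to the Sobolev multiplier inequalities used throughout Section \ref{estimate}.
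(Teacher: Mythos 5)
The paper does not give a self-contained proof of this proposition --- it states explicitly that the argument is ``using the argument in \cite{L-Z}'' --- so there is no direct in-paper proof to compare against. That said, your high-level strategy (subtract the two Poisson-type equations for the evaluated log-lapses, linearize the difference of operators and of sources along the homotopy $W_t$, apply the elliptic estimate) is the natural one and is surely the intended route. Two substantive concerns remain.

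First, your treatment of the source difference is too coarse. You write that ``the $G$-difference integrates to $\Db G$ contracted with $W_1-W_2$.'' This captures only the advective piece. The fields entering $G$ --- $\etab$, $\chih$, $\chibh$, $\rho$ --- are not foliation-invariant on $\Cb_{\ub}$: tilting the section by $\delta W$ rotates the transverse null direction and hence transforms all of these, and the transformation of $\etab$ already involves $\nablas\delta W$ at leading order. Because $G$ contains $\divs\etab$, the linearized $G$-difference carries a genuine $\nablas^2(W_1-W_2)$ contribution. This is the feature that forces the final estimate to land at matching $\H^2$ regularity on both sides of \eqref{contraction}. It also means that the Poisson clause of Lemma \ref{elliptic} as stated ($\H^3$ output from $\H^1$ source) cannot be applied literally, since the source is only in $\L^2$ relative to the available $W_1-W_2\in\H^2$; one needs the lower-order $\H^2\!\leftarrow\!\L^2$ variant. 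A related smaller gap is that the normalizations $\overline{f_i}=0$ hold with respect to the induced areas of the respective $W_i$-sections, so $\overline{f_1-f_2}$ is not literally zero; the elliptic estimate's average term contributes another piece that must be bounded by $W_1-W_2$.

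Second, your diagnosis of the ``main obstacle'' is off. There is no borderline Sobolev issue requiring $\nablas^2 f_2\in\L^\infty$: with $f_2\in\H^3$ from elliptic regularity on its own equation, and $W_1-W_2\in\H^2$, the Laplacian-difference products close at the $\L^2$ level by $\L^\infty\!\times\!\L^2$ (using $W_1-W_2\in\L^\infty$) and $\L^4\!\times\!\L^4$ pairings, exactly the multiplier inequalities already used in Section \ref{estimate}. More importantly, smallness can never absorb a regularity deficit in a one-shot elliptic estimate. The smallness of $\varepsilon$, of $W_i$, and of $\sup_s\|r^{1/2}(r\nablas)^{1,2}W_i\|_{\L^2}$ serves a different purpose: it keeps the $W_i$-sections close enough to the background $s$-sections that the bounds $\mathbb{O}(\ub),\underline{\mathbb{R}}(\ub)\le C$ continue to control the geometry of the tilted foliation (isoperimetric constant, Gauss curvature, and the coefficients appearing in the transformation formulas), and it ensures the map produces genuine foliations with ${}^{(W)}a>0$. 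Without that smallness the constants entering the product estimates would not be uniform in $\ub$, but the regularity counting itself is fine.
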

\begin{remark}
Notice that the regularity of the norms in $\mathbb{O}$ and $\underline{\mathbb{R}}$ are stronger than those used in \cite{L-Z}. This is because we use three order derivatives of the curvature in this paper, therefore the second order derivatives of the curvature components are controlled. On the other hand, the function $W-s$, and the lapse $\Omega-1$, decay only like $r^{-1/2}$ (but not $r^{-1}$) because $\rho$ decays only like $r^{-5/2}$ (but not $r^{-3}$).
\end{remark}

The construction of the canonical foliation on $\Cb_0$ and $\Cb_{\ub_*+\delta}$ is then a direct consequence. We discuss the more subtle case for $\Cb_{\ub_*+\delta}$ and the case for $\Cb_0$ is similar. We introduce the closed subspace $\mathcal{K}=\mathcal{K}_{\ub_*+\delta,\varepsilon+\delta'}\subset C([0,\varepsilon],H^2(S_{\ub_*+\delta,0}))$ such that
\begin{equation}\label{functionspacelast}
\begin{cases}W(0,\theta)=0,\\0\le W(s,\theta)\le\varepsilon+\delta',\\\sup_s\|r^{1/2}(W(s,\cdot)-s)\|_{\H^2(\ub_*+\delta,0)}\le\varepsilon_\mathcal{K},\end{cases}
\end{equation}
for some small number $\varepsilon_\mathcal{K} $ to be fixed. Recall that $s$ here is actually the origin optical function $u$, which induces a canonical foliation on $\Cb_{\delta}$. The second condition ensures that the foliation given by $W\in\mathcal{K}$ does not go beyond $s\in[0,\varepsilon+\delta']$. Notice that we can choose $\varepsilon_\mathcal{K}$ small enough such that the second condition holds. Then the existence of a function $u_\delta$ that induces a canonical foliation on $\Cb_{\ub_*+\delta}$ follows from the following proposition:
\begin{proposition}\label{fixedpoint}
For $\varepsilon$ sufficiently small depending on $\mathcal{O}_0$, $\mathcal{R}_0$, and $\delta$ sufficiently small (may depending on $\varepsilon$ and $\delta'$), $\mathcal{A}(\mathcal{K})\subset\mathcal{K}$ and $\mathcal{A}$ is a contraction in $\mathcal{K}\subset C([0,\varepsilon],H^2(S_{\ub_*+\delta,0}))$.
\end{proposition}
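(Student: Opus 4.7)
The plan is the standard Banach fixed point argument: verify $\mathcal{A}(\mathcal{K})\subset\mathcal{K}$ and then use \eqref{contraction} to extract a contraction constant of order $\varepsilon$. The closed subspace $\mathcal{K}$ is complete in the norm $\sup_s\|r^{1/2}\cdot\|_{\H^2(\ub_*+\delta,0)}$, so a unique fixed point follows, giving the desired $u_\delta$.

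For $\mathcal{A}(\mathcal{K})\subset\mathcal{K}$, the condition $\mathcal{A}(W)(0,\theta)=0$ is immediate from \eqref{definitionA}. The key is the third condition $\sup_s\|r^{1/2}(\mathcal{A}(W)(s,\cdot)-s)\|_{\H^2}\le\varepsilon_\mathcal{K}$. I would decompose
\begin{align*}
\mathcal{A}(W)(s,\theta)-s=\bigl(\mathcal{A}(W)-\mathcal{A}(W_0)\bigr)(s,\theta)+\bigl(\mathcal{A}(W_0)(s,\theta)-s\bigr),
\end{align*}
where $W_0(s,\theta)=s$ is the identity map associated to the background foliation. Since $s=u$ already induces a canonical foliation on $\Cb_{\ub_*}$ by construction in Step 1, we have $\mathcal{A}(W_0)=W_0$ on $\Cb_{\ub_*}$; continuity of all geometric quantities in $\ub$ (available from \textbf{Step 3}'s local extension via Luk's result) makes $\|r^{1/2}(\mathcal{A}(W_0)(s,\cdot)-s)\|_{\H^2(\ub_*+\delta,0)}$ as small as we like by choosing $\delta$ small relative to $\varepsilon_\mathcal{K}$. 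For the difference $\mathcal{A}(W)-\mathcal{A}(W_0)$ I would write
\begin{align*}
\mathcal{A}(W)(s,\theta)-\mathcal{A}(W_0)(s,\theta)=\int_0^s\Bigl[{}^{(W_\mathcal{A})}\Omega^2\Omega^{-2}\bigl(W(s',\theta),\theta\bigr)-{}^{((W_0)_\mathcal{A})}\Omega^2\Omega^{-2}\bigl(W_0(s',\theta),\theta\bigr)\Bigr]\D s',
\end{align*}
apply the pointwise bound $|e^{2x}-e^{2y}|\lesssim(1+e^{2x}+e^{2y})|x-y|$ to linearize the exponentials, use \eqref{contraction} on the $\log{}^{(W_\mathcal{A})}\Omega$ piece, and use a mean value estimate $|\log\Omega(W,\theta)-\log\Omega(W_0,\theta)|\lesssim\|\partial_s\log\Omega\|_{L^\infty}|W-W_0|$ for the background piece. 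Integration in $s\in[0,\varepsilon]$ produces an overall factor of $\varepsilon$, giving $\|r^{1/2}(\mathcal{A}(W)-\mathcal{A}(W_0))\|_{\H^2}\lesssim c\varepsilon\,\varepsilon_\mathcal{K}$. For $\varepsilon$ and $\delta$ sufficiently small the total is $\le\varepsilon_\mathcal{K}$. The pointwise bound $0\le\mathcal{A}(W)\le\varepsilon+\delta'$ then follows from the $\H^2$ closeness to $s\in[0,\varepsilon]$ via Sobolev embedding (with $\delta'$ chosen at least to absorb $r^{-1/2}\varepsilon_\mathcal{K}$).

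The contraction step reuses the same decomposition. For $W_1,W_2\in\mathcal{K}$,
\begin{align*}
\mathcal{A}(W_1)(s,\theta)-\mathcal{A}(W_2)(s,\theta)=\int_0^s\Bigl[{}^{((W_1)_\mathcal{A})}\Omega^2\Omega^{-2}\circ W_1-{}^{((W_2)_\mathcal{A})}\Omega^2\Omega^{-2}\circ W_2\Bigr]\D s'.
\end{align*}
Applying \eqref{contraction} to the ${}^{(\cdot)_\mathcal{A}}\Omega$ factor and the mean value theorem to the background $\Omega$ factor, taking $\H^2$ norms, and integrating in $s$ yields
\begin{align*}
\sup_s\|r^{1/2}(\mathcal{A}(W_1)-\mathcal{A}(W_2))(s,\cdot)\|_{\H^2(\ub_*+\delta,0)}\le C\varepsilon\sup_s\|r^{1/2}(W_1-W_2)(s,\cdot)\|_{\H^2(\ub_*+\delta,0)}.
\end{align*}
Choosing $\varepsilon$ so that $C\varepsilon<1/2$ (with $C$ depending only on $\mathcal{O}_0,\mathcal{R}_0$) gives the contraction, and the Banach fixed point theorem supplies the fixed point $W\in\mathcal{K}$.

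The main obstacle is the bookkeeping around composition with $W$: expressions like ${}^{(W)}\Omega^2\circ W$ must be estimated in $\H^2$ using chain-rule type estimates, which forces product estimates $\|fg\|_{\H^2}\lesssim\|f\|_{\H^2}\|g\|_{\H^2}$ (available from Sobolev embedding $\H^2\hookrightarrow L^\infty$) together with careful tracking of $r$-weights; this is precisely why the relevant norm is $\|r^{1/2}\,\cdot\,\|_{\H^2}$, matching the $r^{-1/2}$ decay of $\log\Omega$ and of $W-s$ that is already controlled for $W\in\mathcal{K}$. A secondary subtlety is that \eqref{contraction} couples the two foliations through the same pointwise evaluation at $W_i(s,\theta)$, so transporting the estimate requires confirming that the pullback is well-defined in $[0,\varepsilon+\delta']$; the second condition in \eqref{functionspacelast} ensures this. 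Both issues are handled by smallness of $\varepsilon_\mathcal{K}$ and of $\varepsilon$, and no new geometric input beyond the a priori bounds $\mathcal{O},\mathcal{R}\le C(\mathcal{O}_0,\mathcal{R}_0)$ from Theorem~\ref{apriori} is needed.
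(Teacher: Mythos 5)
Your Banach fixed point scaffolding (decompose $\mathcal{A}(W)-s$ into $\mathcal{A}(W)-\mathcal{A}(W_0)$ plus $\mathcal{A}(W_0)-W_0$, use the $\ub$-continuity from Luk's extension to make the second term $O(\delta)$, and extract the contraction constant from the overall factor $\varepsilon$ coming out of the $s$-integral in \eqref{definitionA}) is the same strategy the paper takes over from \cite{L-Z}, and the bookkeeping you describe around chain-rule estimates with $r^{1/2}$ weights and pullback well-definedness is sound.

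However, there is a genuine gap in your final claim that ``no new geometric input beyond the a priori bounds $\mathcal{O},\mathcal{R}\le C(\mathcal{O}_0,\mathcal{R}_0)$ from Theorem~\ref{apriori} is needed.'' The contraction estimate \eqref{contraction}, which is the lemma you are invoking, has as its standing hypothesis that $\mathbb{O}(\ub),\underline{\mathbb{R}}(\ub)\le C$, and these slice norms contain quantities that $\mathcal{O}$ and $\mathcal{R}$ do \emph{not} control: the terms $\|r\Db\omegab\|_{\H^1}$, $\|r\Db^2\omegab\|_{\L^2}$ inside $\mathbb{O}$, and the terms $\sup_s\|r^{5/2}\sigma\|_{\H^2}$, $\|r\Db\alphab\|_{\H^1}$, $\|r\Db^2\alphab\|_{\L^2}$ inside $\underline{\mathbb{R}}$. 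The norm $\mathcal{O}$ only controls $\omegab$ tangentially (in $\H^3$), not its $\Db$-derivatives, and $\mathcal{R}$ only controls $\alphab$ and $\sigma$ in mixed $\L^2_{\ub}$/$\L^2_u$ spacetime norms, not $\sigma$ in $\sup_s\H^2$ nor $\Db\alphab,\Db^2\alphab$ on a fixed slice. Verifying $\mathbb{O}(\ub_*+\delta),\underline{\mathbb{R}}(\ub_*+\delta)\le C(\mathcal{O}_0,\mathcal{R}_0)$ is in fact the real new content of this proposition: the paper obtains $\sigma$ on the slice by an argument parallel to that for $\rho$ in Lemma~\ref{curvatureonS}; it obtains $\Db\alphab$, $\nablas\Db\alphab$, and $\Db^2\alphab$ by commuting $\Db$, $\nablas\Db$, and $\Db^2$ with the null Bianchi equation for $D\alphab$, using the structure and Bianchi equations to keep the right side at third-order angular derivatives, and then applying the Gronwall estimates; and it obtains $\Db\omegab$ and $\Db^2\omegab$ by commuting $\Db$ and $\Db^2$ with the elliptic equation \eqref{lastomegab} on $\Cb_{\ub_*}$ and invoking the elliptic estimate. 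Without these additional estimates the hypothesis of \eqref{contraction} is not verified, and your proof does not close.
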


The proof of the above proposition is similar as in \cite{L-Z}, using \eqref{definitionA} and \eqref{contraction}, provided that $\mathbb{O}(\ub_*+\delta),\underline{\mathbb{R}}(\ub_*+\delta)\le C(\mathcal{O}_0,\mathcal{R}_0)$. The estimates for $\mathbb{O},\underline{\mathbb{R}}$ are already done, except $\sigma$ and its angular derivatives, $\Db\omegab$ and its angular derivative, $\Db^2\omegab$, $\Db\alphab$ and its angular derivative, and $\Db^2\alphab$. The estimates for $\sigma$ are obtained in a similar way to $\rho$, as in Lemma \ref{curvatureonS}. The estimates for $\Db\alphab$ and its angular derivative can be obtained by commuting $\Db$ and $\nablas\Db$ with the null Bianchi equation for $D\alphab$. After commuting derivatives, by taking into account the null structure equations and Bianchi equations, the right hand side contains at most third order angular derivatives of the connection coefficients and curvature components. Then we can apply the Gronwall type estimates. The estimate for $\Db^2\alphab$ can then be done in a similar way, by commuting $\Db^2$ with the equation for $D\alphab$. The estimates for $\Db\omegab$ and its angular derivatives, and $\Db^2\omegab$, rely on the equation \eqref{lastomegab}, which is written on $\Cb_{\ub_*}$. We commute $\Db$ and $\Db^2$ with this equation, and use elliptic estimate. We can then choose $\delta$ sufficiently small, to conclude that the estimates on $\Cb_{\ub_*+\delta}$ still hold.

Finally, using the argument as in \cite{L-Z}, if $\varepsilon$ is chosen sufficiently small, we can extend $u_\delta$ back to the whole spacetime, in the way that the level sets of $u_\delta$ are outgoing null cones, that are orthogonal to the spherical level sets of $u_\delta$ on $\Cb_{\ub_*+\delta}$.

\end{document}